\documentclass{article} 
\usepackage{iclr2024_conference,times}

\hyphenpenalty=5000
\tolerance=1000

\pdfinclusioncopyfonts=1

\usepackage[utf8]{inputenc} 
\usepackage[T1]{fontenc}    

\usepackage{xcolor}
\definecolor{darkblue}{RGB}{25,25,112}

\usepackage{hyperref} 
\hypersetup{
    colorlinks=true,
    linkcolor=darkblue,
    filecolor=darkblue,      
    urlcolor=darkblue,
    citecolor=darkblue
}
  \urlstyle{same}

\usepackage{booktabs}       
\usepackage{nicefrac}       
\usepackage{microtype}      
\usepackage{graphicx}
\usepackage{caption}
\usepackage{subcaption}
\usepackage{amsfonts,mathtools}
\usepackage{algorithmic}
\usepackage[]{algorithm}
\usepackage{array}
\usepackage{bm}
\usepackage{mathrsfs}
\usepackage{float}
\usepackage{epstopdf}
\usepackage{amssymb}
\usepackage{amsmath}
\usepackage{multirow}
\usepackage{comment}

\usepackage{makecell}

\definecolor{light-gray}{gray}{0.9}
\renewcommand{\arraystretch}{1.5}

\usepackage[capitalize,noabbrev]{cleveref}

\usepackage{enumitem}
\newcounter{descriptcount}
\newlist{enumdescript}{description}{1}
\setlist[enumdescript,1]{%
  before={\setcounter{descriptcount}{0}%
          \renewcommand*\thedescriptcount{\arabic{descriptcount}}},
        font={\stepcounter{descriptcount}Step \thedescriptcount ~}
}

\newcounter{mycounter}

\usepackage{amsthm}

\interdisplaylinepenalty=2500

\makeatletter
\newcommand*\rel@kern[1]{\kern#1\dimexpr\macc@kerna}

\newcommand*\widebar[1]{%
  \kern 0.18em
  \hbox{%
    \vbox{%
      \hrule height 0.5pt 
      \kern0.35ex
      \hbox{%
        \kern-0.18em
        \ensuremath{#1}%
        \kern-0.18em
      }%
    }%
  }%
  \kern 0.18em
}

\theoremstyle{plain}
\newtheorem{theorem}{Theorem}[section]

\newtheorem{lemma}[theorem]{Lemma}

\theoremstyle{definition}

\theoremstyle{remark}

\newcommand{\normI}[1]{{\left\vert\kern-0.25ex\left\vert\kern-0.25ex\left\vert #1 
    \right\vert\kern-0.25ex\right\vert\kern-0.25ex\right\vert}}

\title{A Fast and Provable Algorithm for Sparse Phase Retrieval}

\author{%
  Jian-Feng Cai$^{1,2}$, \, Yu Long$^{3*}$, \, Ruixue Wen$^1$, \, Jiaxi Ying$^{1,2}$\thanks{Corresponding authors.} \\
  $^1$ Hong Kong University of Science and Technology \\
  $^2$ HKUST Shenzhen-Hong Kong Collaborative Innovation Research Institute \\
  $^3$ Guangxi University
}

\iclrfinalcopy
\begin{document}

\maketitle

\vspace{-0.7cm}

\begin{abstract}
We study the sparse phase retrieval problem, which seeks to recover a sparse signal from a limited set of magnitude-only measurements. In contrast to prevalent sparse phase retrieval algorithms that primarily use first-order methods, we propose an innovative second-order algorithm that employs a Newton-type method with hard thresholding. This algorithm overcomes the linear convergence limitations of first-order methods while preserving their hallmark per-iteration computational efficiency. We provide theoretical guarantees that our algorithm converges to the $s$-sparse ground truth signal $\bm{x}^{\natural} \in \mathbb{R}^n$ (up to a global sign) at a quadratic convergence rate after at most $O(\log (\Vert\bm{x}^{\natural} \Vert /x_{\min}^{\natural}))$ iterations, using $\Omega(s^2\log n)$ Gaussian random samples. Numerical experiments show that our algorithm achieves a significantly faster convergence rate than state-of-the-art methods.
\end{abstract}

\vspace{-0.3cm}

\section{Introduction}\label{sec-1}

We study the phase retrieval problem, which involves reconstructing an $n$-dimensional signal $\bm{x}^{\natural}$ using its magnitude-only measurements:
\begin{equation}\label{eq:phase_measure}
y_i = | \langle \bm a_i, \bm x^{\natural} \rangle|^2,  \quad i = 1, 2, \cdots, m,
\end{equation}
where each $y_i$ represents a measurement, $\bm a_i$ denotes a sensing vector, $\bm x^{\natural}$ is the unknown signal to be recovered, and $m$ is the total number of measurements. The phase retrieval problem arises in various applications, including diffraction imaging \citep{maiden2009improved}, X-ray crystallography \citep{miao2008extending}, and optics \citep{shechtman2015phase}.

Although the phase retrieval problem is ill-posed and even NP-hard \citep{fickus2014phase}, various algorithms can recover target signals. These are broadly categorized into convex and nonconvex approaches. Convex methods, such as PhaseLift \citep{candes2015phase, candes2013phaselift}, PhaseCut \citep{waldspurger2015phase}, and PhaseMax \citep{goldstein2018phasemax, hand2016elementary}, offer optimal sample complexity but are computationally challenging in high-dimensional cases. To improve computational efficiency, nonconvex approaches are explored such as alternating minimization \citep{netrapalli2013phase,9917336}, Wirtinger flow \citep{candes2015phase}, truncated amplitude flow \citep{wang2017solving}, Riemannian optimization \citep{cai2018solving}, Gauss-Newton \citep{gao2017phaseless,ma2018globally}, Kaczmarz \citep{wei2015solving,chi2016kaczmarz}, and unregularized gradient descent \citep{ma2020implicit}. Despite the nonconvex nature of its objective function, the global geometric landscape lacks spurious local minima \citep{sun2018geometric,li2019toward,cai2023nearly}, allowing algorithms with random initialization to work effectively \citep{chen2019gradient,waldspurger2018phase}.

The nonconvex approaches previously mentioned can guarantee successful recovery of the ground truth (up to a global phase) using $m = \Omega(n \log^a n)$ measurements, where $a \geq 0$. This complexity is nearly optimal, as the phase retrieval problem requires $m \geq 2n - 1$ for real signals and $m \geq 4n - 4$ for complex signals \citep{conca2015algebraic}. However, in practical situations, especially in high-dimensional cases, the number of available measurements is often less than the signal dimension (\textit{i.e.}, $m<n$), leading to a need for further reduction in sample complexity.

In this paper, we focus on the sparse phase retrieval problem, which aims to recover a sparse signal from a limited number of phaseless measurements. It has been established that the minimal sample complexity required to ensure $s$-sparse phase retrievability in the real case is only $2s$ for generic sensing vectors \citep{wang2014phase}. Several algorithms have been proposed to address the sparse phase retrieval problem \citep{ohlsson2012cprl,cai2016optimal,wang2017sparse,jagatap2019sample,cai2022sparse}. These approaches have been demonstrated to effectively reconstruct the ground truth using $\Omega(s^2 \log n)$ Gaussian measurements. While this complexity is not optimal, it is significantly smaller than that in general phase retrieval.

\vspace{-0.15cm}

\subsection{Contributions}

\vspace{-0.1cm}

Existing algorithms for sparse phase retrieval primarily employ first-order methods with linear convergence. Recent work \citep{cai2022sparse} presented a second-order method, while it fails to obtain quadratic convergence. The main contributions of this paper can be summarized in three points:
\begin{enumerate}[leftmargin=1cm]
\vspace{-0.05cm}
\setlength\itemsep{0.42em}
\item We propose a second-order algorithm for sparse phase retrieval that maintains the same per-iteration computational complexity as popular first-order methods. Our algorithm enhances convergence by integrating second-order derivative information from intensity-based empirical loss into the search direction. To ensure computational efficiency, the Newton step is applied only to a subset of variables identified through the amplitude-based empirical loss.

\item We establish a non-asymptotic quadratic convergence rate for our proposed algorithm and provide the iteration complexity. Specifically, we prove that the algorithm converges to the ground truth (up to a global sign) at a quadratic rate after at most $O(\log (\Vert\bm{x}^{\natural}\Vert/x_{\min}^{\natural}))$ iterations, with $m = \Omega(s^2\log n)$ measurements. To the best of our knowledge, this is the first algorithm to establish a quadratic convergence rate for sparse phase retrieval.

\item Numerical experiments demonstrate that the proposed algorithm achieves a significantly faster convergence rate in comparison to state-of-the-art methods. The experiments also reveal that our algorithm attains higher success rates in the exact recovery of signals from noise-free measurements and offers improved signal reconstruction in the presence of noise.\footnote{Our codes are available at \url{https://github.com/jxying/SparsePR}.}

\end{enumerate}

\vspace{-0.22cm}

\paragraph{Notation:} $\|\bm x\|_0$ denotes the number of nonzero entries of $\bm{x}$, and $\|\bm{x}\|$ denotes the $\ell_2$-norm. For a matrix $\bm{A}\in\mathbb{R}^{m\times n}$, $\|\bm{A}\|$ is the spectral norm of $\bm{A}$. For any $q_1\geq 1$ and $q_2\geq 1$, $\|\bm{A}\|_{q_2\rightarrow q_1}$ denotes the induced operator norm from the Banach space $(\mathbb{R}^{n},\|\cdot\|_{q_2})$ to $(\mathbb{R}^{m},\|\cdot\|_{q_1})$. $\lambda_{\min}(\bm{A})$ and $\lambda_{\max}(\bm{A})$ denote the smallest and largest eigenvalues of $\bm{A}$. $|\mathcal{S}|$ denotes the number of elements in $\mathcal{S}$. $\bm{a} \odot \bm{b}$ denotes the entrywise product of $\bm{a}$ and $\bm{b}$. We write $f(n) = O(g(n))$ to indicate $f(n) \leq c_1 g(n)$ for some constant $c_1 > 0$, and $f(n) = \Omega(g(n))$ to denote $f(n) \geq c_2 g(n)$ for some constant $c_2 > 0$. For $\bm x$, $\bm x^\natural \in \mathbb{R}^n$, the distance between $\bm x$ and $\bm x^\natural$ is defined as $\mathrm{dist}(\bm x, \bm x^\natural) := \min\left\{\|\bm x- \bm x^\natural \|, \|\bm x + \bm x^\natural\| \right\}$. $x_{\min}^{\natural}$ denotes the smallest nonzero entry in magnitude of $\bm{x}^{\natural}$.

\vspace{-0.12cm}
 
\section{Problem Formulation and Related Work} \label{sec-2}

\vspace{-0.12cm}

We first present the problem formulation for sparse phase retrieval, and then review related work and provide a comparative overview of state-of-the-art algorithms and our proposed algorithm.

\vspace{-0.11cm}

\subsection{Problem Formulation}

\vspace{-0.1cm}

The standard sparse phase retrieval problem can be concisely expressed as finding $\bm x$ that satisfies
\begin{equation}
\left| \left\langle\bm a_i, \bm x \right\rangle\right|^2 = y_i \quad \forall \, i=1,\dots, m, \quad \text{and} \quad \|\bm{x}\|_0\leq s,  \label{spr}
\end{equation}
where $\{\bm a_i \}_{i=1}^m$ and $\{ y_i \}_{i=1}^m$ represent known and fixed sensing vectors and phaseless measurements, respectively. Each $y_i = | \langle\bm a_i, \bm x^{\natural} \rangle |^2$ with $\bm x^\natural$ the ground truth ($\Vert \bm x^\natural \Vert_0 \leq s$). While sparsity level is assumed known a priori for theoretical analysis, our experiments will explore cases with unknown $s$. To address Problem \eqref{spr}, various problem reformulations have been explored. Convex formulations, such as the $\ell_1$-regularized PhaseLift method \citep{ohlsson2012cprl}, often use the lifting technique and solve the problem in the $n \times n$ matrix space, resulting in high computational costs. 

To enhance computational efficiency, nonconvex approaches \citep{cai2016optimal, wang2017sparse, cai2022sparse, soltanolkotabi2019structured} are explored, which can be formulated as:
\begin{equation}\label{nonconvexspr}
\underset{\bm x }{\mathsf{minimize}} \ f(\bm{x}), \qquad \mathsf{subject~to} \quad \|\bm{x}\|_0\leq s.
\end{equation}
Both the loss function $f(\bm{x})$ and the $\ell_0$-norm constraint in Problem \eqref{nonconvexspr} are nonconvex, making it challenging to solve. Two prevalent loss functions are investigated: intensity-based empirical loss 
\begin{equation}
 f_{I}(\bm{x}): = \frac{1}{4m}\sum_{i=1}^m \left( | \langle \bm{a}_i, \bm{x} \rangle|^2 - y_i \right)^2,
\label{sparseintensity}
\end{equation}
and amplitude-based empirical loss
\begin{equation}
 f_{A}(\bm{x}): = \frac{1}{2m}\sum_{i=1}^m \left( | \langle\bm{a}_i, \bm{x} \rangle| - z_i \right)^2,
\label{sparseamplitude}
\end{equation}
where $z_i = \sqrt{y_i}$, $i =1, \ldots, m$. The intensity-based loss $f_I(\bm{x})$ is smooth, while the amplitude-based loss $f_A(\bm{x})$ is non-smooth because of the modulus.

\begin{table}[t]
\centering
\small
\caption{Overview of per-iteration computational cost, iteration complexity, and loss function types for ThWF \citep{cai2016optimal}, SPARTA \citep{wang2017sparse}, CoPRAM \citep{jagatap2019sample}, HTP \citep{cai2022sparse}, and our proposed algorithm. Here, $\bm x^\natural$ represents the ground truth with dimension $n$ and sparsity $s$, and $x_{\min}^\natural$ denotes the smallest nonzero entry in magnitude of $\bm x^\natural$.}
\vspace{-0.2cm}
\setlength{\tabcolsep}{0pt}
\begin{tabular}{>{\raggedright\arraybackslash}p{2.1cm} >{\centering\arraybackslash}p{3.5cm} >{\centering\arraybackslash}p{5.8cm} >{\centering\arraybackslash}p{2.5cm}}
\toprule
\textbf{Methods} & \textbf{Per-iteration computation} & \textbf{Iteration complexity} & \textbf{Loss function}  \\
\midrule
ThWF & $O(n^2 \log n)$ & $O ( \log(1 / \epsilon))$                                 & $ f_{I}(\bm{x})$  \\
SPARTA & $O(n s^2 \log n)$ & $O ( \log(1 / \epsilon))$                                 &  $f_{A}(\bm{x})$ \\
CoPRAM & $O(n s^2 \log n)$ & $O ( \log(1 / \epsilon))$                                 & $ f_{A}(\bm{x})$ \\
HTP & $O((n + s^2)s^2 \log n)$ & $O(\log(\log (n^{s^2})) + \log(\Vert \bm x^\natural \Vert / x_{\min}^\natural))$ & $ f_{A}(\bm{x})$ \\
Proposed & $O((n + s^2)s^2 \log n)$ & $O(\log(\log(1 / \epsilon)) + \log(\Vert \bm x^\natural \Vert / x_{\min}^\natural))$ & $ f_{I}(\bm{x})$, $f_{A}(\bm{x})$ \\
\bottomrule
\end{tabular}
\label{tab:comparison}
\vspace{-0.2cm}
\end{table}

\vspace{-0.1cm}

\subsection{Related Work}\label{sec:related work}

\vspace{-0.1cm}

Existing nonconvex sparse phase retrieval algorithms can be broadly classified into two categories: gradient projection methods and alternating minimization methods. Gradient projection methods, such as ThWF \citep{cai2016optimal} and SPARTA \citep{wang2017sparse}, employ thresholded gradient descent and iterative hard thresholding, respectively. On the other hand, alternating minimization methods, including CoPRAM \citep{jagatap2019sample} and HTP \citep{cai2022sparse}, alternate between updating the signal and phase. When updating the signal, formulated as a sparsity-constrained least squares problem, CoPRAM leverages the cosamp method \citep{needell2009cosamp}, while HTP applies the hard thresholding pursuit algorithm \citep{foucart2011hard}. 

Contrary to previously discussed algorithms, our algorithm is rooted in a Newton-type method with hard thresholding and incorporates second-order information to accelerate convergence. Unlike alternating minimization methods, our algorithm eliminates the need to update the signal and phase separately. A sample complexity of $\Omega(s^2 \log n)$ under Gaussian measurements is required for successful recovery across all discussed algorithms. ThWF employs an intensity-based loss as the objective function, while SPARTA, CoPRAM, and HTP utilize an amplitude-based loss. Our algorithm, distinctively, adopts both loss types: It uses intensity-based loss as the objective function and amplitude-based loss to determine the support for the Newton update.

\paragraph{Discussions:} Table~\ref{tab:comparison} provides a comparative overview of various algorithms. ThWF, SPARTA, and CoPRAM, as first-order methods, exhibit linear convergence. In contrast, our algorithm achieves a quadratic convergence rate. Although HTP is a second-order method that converges in a finite number of iterations, it fails to establish quadratic convergence and has a higher iteration complexity than our algorithm. Empirical evidence further shows that our algorithm converges faster than HTP. This can be attributed to our algorithm's more effective exploitation of the second-order information of the objective function when constructing the search direction.

\section{Main Results}\label{sec-3}

\vspace{-0.1cm}

In this section, we present our proposed algorithm for sparse phase retrieval. Generally, nonconvex methods comprise two stages: initialization and refinement. The first stage generates an initial guess close to the target signal, while the second stage refines the initial guess. Our proposed algorithm adheres to this two-stage strategy. In the first stage, we employ an existing effective method to generate an initial point. Our primary focus is on the second stage, wherein we propose an efficient second-order algorithm to refine the initial guess.

\vspace{-0.1cm}

\subsection{Proposed Algorithm}

We introduce our proposed second-order algorithm for sparse phase retrieval, which adopts a dual loss strategy: It employs the intensity-based loss as defined in \eqref{sparseintensity} for the objective function and uses the amplitude-based loss from \eqref{sparseamplitude} to identify the support for the Newton update. In alignment with established Newton-type methods for sparse optimization \citep{NEURIPS2023_e878c8f3,zhou2021global,meng2020newton}, our algorithm involves two primary steps: (1) identifying the sets of \textit{free} and \textit{fixed} variables, and (2) computing the search direction.

\vspace{-0.1cm}

\subsubsection{Identifying free and fixed variables}

In Newton-type methods, computing the Newton direction at each iteration typically requires solving a linear system, a process that often incurs a significant computational cost of $O(n^\omega)$. Here, $n$ denotes the dimension of the problem space, and $\omega$ represents the matrix multiplication constant, which could be less than $3$ if fast matrix multiplication is employed. Unfortunately, this complexity can still render the algorithm impractical for high-dimensional scenarios where $n$ is large.

To address this challenge, we categorize variables into two groups at each iteration: \textit{free} and \textit{fixed}, and update them separately. The \textit{free} variables, which consist of at most $s$ variables, are updated according to the (approximate) Newton direction, while the \textit{fixed} variables are set to zero. This strategy substantially cuts the computational cost from $O(n^\omega)$ to $O(s^\omega)$ by only requiring the solution of a linear system of size $s \times s$, and ensures $s$-sparsity at each iteration.

We identify \textit{free} variables using one-step iterative hard thresholding (IHT) of the loss $f_A(\bm{x})$ in \eqref{sparseamplitude}:
\begin{equation*}
\mathcal{S}_{k+1} = \mathrm{supp}\left(\mathcal{H}_s(\bm{x}^k- \eta \nabla f_A(\bm{x}^k)) \right),
\end{equation*}
where $\eta$ is the stepsize, and $\mathcal{H}_s$ denotes the $s$-sparse hard-thresholding operator, defined as follows:
\begin{equation}
\mathcal{H}_s (\bm w) := \underset{ \bm x}{\mathsf{arg~min}} \ \Vert \bm x - \bm w \Vert^2, \quad \mathsf{subject~to} \quad  \Vert\bm x \Vert_0 \leq s,
\end{equation}
This operator picks the $s$ largest magnitude entries of the input vector and sets the rest to zero. Therefore, there are at most $s$ \textit{free} variables. Since $f_A$ is non-smooth, we adopt the generalized gradient \citep{zhang2017nonconvex} as $\nabla f_A$. The set of \textit{fixed} variables is the complement of $\mathcal{S}_{k+1}$. We only update \textit{free} variables along the approximate Newton direction and set others to zero.

While the objective function of our algorithm is the intensity-based loss $f_I(\bm{x})$, it's noteworthy that we use the amplitude-based loss $f_A(\bm{x})$ in place of $f_I(\bm{x})$ when identifying \textit{free} variables. Empirical evidence shows that this approach yields faster convergence compared to using $f_I(\bm{x})$.

\vspace{-0.1cm}

\subsubsection{Computing search direction}

We update the \textit{free} variables in $\mathcal{S}_{k+1}$ by solving the following support-constrained problem:
\begin{equation}\label{eq:debiasing}
\underset{\bm x}{\mathsf{minimize}} \ \psi_k(\bm{x}), \quad \mathsf{subject~to } \quad \mathrm{supp}(\bm{x})\subseteq \mathcal{S}_{k+1},
\end{equation} 
where $\psi_k(\bm{x})$ is an approximation of the intensity-based loss $f_I$ at $\bm x^k$. To ensure fast convergence and efficient computation, we choose $\psi_k(\bm{x})$ in \eqref{eq:debiasing} as the second-order Taylor expansion of $f_I$ at $\bm x^k$:
\begin{equation*}
\psi_k(\bm{x}):= f_I (\bm x^k) + \big\langle \nabla f_I (\bm x^k), \ \bm x - \bm x^k \big\rangle + \frac{1}{2} \big\langle \bm x - \bm x^k, \ \nabla^2 f_I (\bm x^k) (\bm x - \bm x^k) \big\rangle.
\end{equation*}
For notational simplicity, define $\bm g_{\mathcal{S}_{k+1}}^k = \big[ \nabla f_I(\bm{x}^k) \big]_{\mathcal{S}_{k+1}}$, which denotes the sub-vector of $\nabla f_I(\bm{x}^k)$ indexed by $\mathcal{S}_{k+1}$, $\bm H^k_{\mathcal{S}_{k+1},\mathcal{S}_{k+1}} = \big[\nabla^2 f_I (\bm{x}^k) \big]_{\mathcal{S}_{k+1},\mathcal{S}_{k+1}}$, which represents the principle sub-matrix of the Hessian indexed by $\mathcal{S}_{k+1}$, and $\bm H^k_{\mathcal{S}_{k+1}, \mathcal{S}_{k+1}^c} = \big[\nabla^2 f_I (\bm{x}^k)\big]_{\mathcal{S}_{k+1}, \mathcal{S}_{k+1}^c}$, denoting the sub-matrix whose rows and columns are indexed by $\mathcal{S}_{k+1}$ and $\mathcal{S}_{k+1}^c$, respectively. Let $\bm x^\star$ denote the minimizer of Problem~\eqref{eq:debiasing}. Following from the first-order optimality condition of Problem~\eqref{eq:debiasing}, we obtain that $\bm x^\star_{\mathcal{S}_{k+1}^c} = \bm 0$ and $\bm x^\star_{\mathcal{S}_{k+1}}$ satisfies
\begin{equation}\label{eq:opt-condition}
\bm H^k_{\mathcal{S}_{k+1},\mathcal{S}_{k+1}} \big( \bm x^\star_{\mathcal{S}_{k+1}} - \bm x^k_{\mathcal{S}_{k+1}}\big) = \bm H^k_{\mathcal{S}_{k+1}, \mathcal{S}_{k+1}^c}\bm{x}_{\mathcal{S}_{k+1}^c}^k - \bm g_{\mathcal{S}_{k+1}}^k.
\end{equation}
As a result, we obtain the next iterate $\bm x^{k+1}$ by 
\begin{equation}\label{new_iterate}
\bm x^{k+1}_{\mathcal{S}_{k+1}} = \bm x^{k}_{\mathcal{S}_{k+1}} - \bm p^{k}_{\mathcal{S}_{k+1}},   \quad \text{and} \quad\bm x^{k+1}_{\mathcal{S}_{k+1}^c} = \bm 0,
\end{equation}
where $\bm{p}^k_{\mathcal{S}_{k+1}}$ represents the approximate Newton direction over $\mathcal{S}_{k+1}$, which can be calculated by
\begin{equation}\label{eq37}
		\bm H^k_{\mathcal{S}_{k+1},\mathcal{S}_{k+1}} \bm{p}^k_{\mathcal{S}_{k+1}} = - \bm H^k_{\mathcal{S}_{k+1}, J_{k+1}} \bm{x}_{J_{k+1}}^k + \bm g_{\mathcal{S}_{k+1}}^k. 
\end{equation}
where $J_{k+1} := \mathcal{S}_{k}\setminus \mathcal{S}_{k+1}$ with $|J_{k+1}| \leq s$. In contrast to Eq. \eqref{eq:opt-condition}, we replace $\bm{x}_{\mathcal{S}_{k+1}^c}^k$ with $\bm{x}_{J_{k+1}}^k$ in \eqref{eq37}, as $J_{k+1}$ captures all nonzero elements in $\bm{x}_{\mathcal{S}_{k+1}^c}^k$ as follows:
\begin{equation}
\mathcal{G} \Big(\bm{x}_{\mathcal{S}_{k+1}^c}^k \Big) = \Bigg[
		\begin{array}{lr}
		\bm{x}_{\mathcal{S}_{k+1}^c\cap \mathcal{S}_{k}}^k \\
		\bm{0}
		\end{array}
		\Bigg]
	= 	\Bigg[
		\begin{array}{lr}
		\bm{x}_{\mathcal{S}_{k}\setminus \mathcal{S}_{k+1}}^k \\
		\bm{0}
		\end{array}
		\Bigg]
	=	\Bigg[
		\begin{array}{lr}
		\bm{x}_{J_{k+1}}^k \\
		\bm{0}
		\end{array}
		\Bigg],
\label{eq36}
\end{equation}
where operator $\mathcal{G}$ arranges all nonzero elements of a vector to appear first, followed by zero elements. The first equality in \eqref{eq36} follows from the fact that $\mathrm{supp}(\bm{x}^k)\subseteq \mathcal{S}_{k}$. 

By calculating $\bm H^k_{\mathcal{S}_{k+1}, J_{k+1}}$ rather than $\bm H^k_{\mathcal{S}_{k+1}, \mathcal{S}_{k+1}^c}$ as in \eqref{eq37}, the computational cost is substantially reduced from $O(smn)$ to $O(s^2m)$. The costs for computing $\bm H^k_{\mathcal{S}_{k+1},\mathcal{S}_{k+1}}$ and solving the linear system in \eqref{eq37} are $O(s^2m)$ and $O(s^\omega)$, respectively. Thus, the cost for Step 2 is $O(s^2m)$. The cost for Step 1 amounts to $O(mn)$, which involves calculating $\nabla f_A(\bm{x}^k)$. Therefore, the overall cost per iteration is $O(n + s^2)m$, with $m$ on the order of $s^2 \log n$, which is generally necessary for a theoretical recovery guarantee. Assuming $s = O (\sqrt{n})$ (otherwise, the complexity $\Omega(s^2 \log n)$ for sparse phase retrieval would reduce to that of general methods), our algorithm's per-iteration complexity matches that of popular first-order methods at $O(ns^2 \log n)$.

\vspace{-0.1cm}

\begin{algorithm}[H] 
\caption{Proposed algorithm}
\label{MNHTP}
\begin{algorithmic}[1]
\REQUIRE Data $\{ \bm a_i, y_i \}_{i=1}^m$, sparsity $s$, initial estimate $\bm{x}^0$, and stepsize $\eta$.
\vspace{0.1cm}
\FOR{$k = 0, 1, 2, \ldots$} 
\vspace{0.1cm}
\STATE Identify the set of \textit{free} variables $\mathcal{S}_{k+1} = \mathrm{supp} (\mathcal{H}_s(\bm{x}^k-\eta \nabla f_A(\bm{x}^k)))$;
\vspace{0.1cm}
\STATE Compute the search direction $\bm{p}^k_{\mathcal{S}_{k+1}}$ over $\mathcal{S}_{k+1}$ by solving \eqref{eq37};
\vspace{0.1cm}
\STATE Update $\bm{x}^{k+1}$:
\vspace{-0.05cm}
\begin{equation*}
\bm{x}_{\mathcal{S}_{k+1}}^{k+1} = \bm{x}_{\mathcal{S}_{k+1}}^k - \bm{p}_{\mathcal{S}_{k+1}}^k, \quad \ \text{and} \quad \ \bm{x}_{\mathcal{S}_{k+1}^c}^{k+1} = \bm 0.
\end{equation*}
\vspace{-0.45cm}
\ENDFOR
\ENSURE $\bm{x}^{k+1}$.
\end{algorithmic}
\end{algorithm}

\vspace{-0.5cm}

\subsubsection{Leveraging Two Types of Losses}

We provide an intuitive explanation of our algorithm, with a particular focus on the utilization of two types of loss functions: the intensity-based loss and the amplitude-based loss.

Our algorithm employs the intensity-based loss $f_I$ as the objective function; its smoothness facilitates the computation of the Hessian and the construction of the Newton direction. While one might intuitively use the same loss function to determine variables for the Newton update, our numerical experiments indicate that the amplitude-based loss $f_A$ is more effective for this purpose.

One potential reason is the superior curvature exhibited by $f_A$ around the true solution, which often behaves more similarly to that of a quadratic least-squares loss \citep{chi2019nonconvex,zhang2017nonconvex,wang2017solving}. This indicates that the gradient of $f_A$ could offer a more effective search direction than that of $f_I$. Furthermore, studies \citep{zhang2018compressive,cai2022solving} have demonstrated that algorithms founded on $f_A$ consistently outperform those based on $f_I$ in numerical results.

\vspace{-0.1cm}

\subsection{Initialization}
\vspace{-0.1cm}

The nonconvex nature of phase retrieval problems often requires a good initial guess to find the target signal. Spectral initialization \citep{candes2015phase} is a common approach. We adopt a sparse variant of the spectral initialization method to obtain a favorable initial guess. It would be intriguing to consider other well-established initialization methods, such as sparse orthogonality-promoting initialization \citep{wang2017sparse}, diagonal thresholding initialization \citep{cai2016optimal}, and modified spectral initialization method \citep{Cai_2023}.

Assuming $\{\bm{a}_i\}_{i=1}^m$ are independently drawn from a Gaussian distribution $\mathcal{N}(\bm{0}, \bm{I}_n)$, the expectation of the matrix $\frac{1}{m}\sum_{i=1}^m y_i \bm{a}_i\bm{a}_i^T$ is $\bm{M}:= \|\bm{x}^{\natural}\|^2\bm{I}_n + 2\bm{x}^{\natural}(\bm{x}^{\natural})^T $. The leading eigenvector of $\bm{M}$ is precisely $\pm\bm{x}^{\natural}$. Hence, the leading eigenvector of $\frac{1}{m}\sum_{i=1}^m y_i \bm{a}_i\bm{a}_i^T$ can be close to $\pm \bm{x}^{\natural}$ \citep{candes2015phase}. However, this method requires the sample complexity of at least $\Omega(n)$, which is excessively high for sparse phase retrieval. Leveraging the sparsity of $\bm{x}^{\natural}$ is crucial to lower this complexity.

We adopt the sparse spectral initialization method proposed in \citep{jagatap2019sample}. Specifically, we first collect the indices of the largest $s$ values from $\left\{ \frac{1}{m} \sum_{i=1}^m y_i [\bm a_i]_j^2 \right\}_{j=1}^n$ and obtain the set $\hat{S}$, serving as an estimate of the support of the true signal $\bm x^\natural$. Next, we construct the initial guess $\bm x^0$ as follows: $\bm x^0_{\hat{S}}$ is the leading eigenvector of $\frac{1}{m} \sum_{i=1}^m y_i [\bm a_i]_{\hat{S}} [\bm a_i]_{\hat{S}}^T$, and $\bm x^0_{\hat{S}^c} = \bm 0$. Finally, we scale $\bm x^0$ such that $\Vert \bm x^0 \Vert^2 = \frac{1}{m} \sum_{i =1}^m y_i$, ensuring the power of $\bm x^0$ closely aligns with the power of $\bm x^\natural$.

The study in \citep{jagatap2019sample} demonstrates that, given a sample complexity $m = \Omega (s^2 \log n)$, the aforementioned sparse spectral initialization method can produce an initial estimate $\bm{x}^0$ that falls within a small-sized neighborhood around the ground truth. Specifically, it holds $\mathrm{dist}(\bm{x}^0,\bm{x}^{\natural}) \leq \gamma \Vert\bm{x}^{\natural}\Vert$ for any $\gamma \in (0,1)$, with a probability of at least $1-8m^{-1}$. Our theoretical analysis demonstrates that, once the estimate enters this neighborhood, our iterative updates in the refinement stage will consistently stay within this neighborhood and be attracted towards the ground truth, exhibiting at least linear convergence and achieving quadratic convergence after $K$ iterations.

\vspace{-0.1cm}

\subsection{Theoretical Results}\label{theory}

\vspace{-0.1cm}

Given the nonconvex nature of the objective function and constraint set in the sparse phase retrieval problem, a theoretical analysis is crucial to guarantee the convergence of our algorithm. In this subsection, we provide a comprehensive convergence analysis for both noise-free and noisy cases.

\vspace{-0.1cm}

\subsubsection{Noise-free case}

\vspace{-0.1cm}

We begin by the noise-free case, in which each measurement $y_i = | \langle\bm{a}_i, \bm{x}^{\natural} \rangle|^2$. Starting with an initial guess obtained via the sparse spectral initialization method, the following theorem shows that our algorithm exhibits a quadratic convergence rate after at most $O ( \log (\|\bm{x}^{\natural}\|/x_{\min}^{\natural}) )$ iterations.
\begin{theorem}\label{thmnhtp}
Let $\{\bm{a}_i\}_{i=1}^m$ be $i.i.d.$ random vectors distributed as $\mathcal{N}(\bm{0}, \bm{I}_n)$, and $\bm{x}^{\natural}\in \mathbb{R}^n$ be any signal with $\Vert \bm x^\natural \Vert_0 \leq s$. Let $\{\bm{x}^k\}_{k\geq 1}$ be the sequence generated by Algorithm \ref{MNHTP} with the input measurements $y_i= | \langle\bm{a}_i, \bm{x}^{\natural} \rangle|^2$, $i=1,\dots, m$, and the initial guess $\bm{x}^0$ generated by the sparse spectral initialization method mentioned earlier. There exist positive constants $\rho, \eta_1, \eta_2, C_1, C_2, C_3, C_4, C_5$ such that if the stepsize $\eta\in [\eta_1, \eta_2]$ and $m\geq C_1s^2\log n$, then with probability at least $1  - (C_2K+C_3)m^{-1} $, the sequence $\{\bm{x}^k\}_{k\geq 1}$ converges to the ground truth $\bm{x}^{\natural}$ at a quadratic rate after at most $O \big( \log (\|\bm{x}^{\natural}\|/x_{\min}^{\natural}) \big)$ iterations, i.e.,
\begin{equation*}
	\mathrm{dist}(\bm{x}^{k+1}, \bm{x}^{\natural}) \leq \rho \cdot \mathrm{dist}^2(\bm{x}^k,  \bm{x}^{\natural}), \quad \forall \, k \geq K,
\end{equation*} 
where $K \leq C_4\log \big(\| \bm{x}^{\natural}\|/ x_{\min}^{\natural} \big) + C_5$, and $x_{\min}^{\natural}$ is the smallest nonzero entry in magnitude of $\bm{x}^{\natural}$.
\end{theorem}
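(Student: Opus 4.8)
The plan is to establish a single-step contraction inequality and then iterate it, carefully tracking two regimes: a ``warm-up'' phase where the support is not yet correctly identified and convergence is only linear, and an ``attraction'' phase where the iterate is close enough that the support stabilizes and the approximate Newton step yields quadratic contraction. First I would set up the concentration scaffolding: conditional on $m \ge C_1 s^2 \log n$, the relevant empirical quantities — the Hessian restriction $\bm H^k_{\mathcal S_{k+1},\mathcal S_{k+1}}$, the gradient $\nabla f_I$, and the generalized gradient $\nabla f_A$ — concentrate around their population counterparts uniformly over all $2s$-sparse (indeed all $3s$-sparse, to absorb $\mathcal S_k \cup \mathcal S_{k+1} \cup \mathrm{supp}(\bm x^\natural)$) index sets. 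This gives restricted strong convexity and smoothness constants for $f_I$ on sparse supports, so that $\bm H^k_{\mathcal S_{k+1},\mathcal S_{k+1}}$ is invertible with well-controlled condition number, making the linear solve in \eqref{eq37} well-posed. These are standard $\epsilon$-net plus Bernstein arguments for Gaussian measurements; each ``bad event'' costs an $m^{-1}$ factor, and since we need these bounds to hold along $K$ iterations the union bound produces the $(C_2 K + C_3)m^{-1}$ failure probability in the statement.

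The heart of the argument is the per-iteration recursion. I would decompose the error $\bm x^{k+1} - \bm x^\natural$ (up to the global sign, fixed by the distance achieving the minimum) using the defining equation \eqref{eq37}. Writing $\mathcal S = \mathcal S_{k+1}$, the Newton update on $\mathcal S$ solves $\bm H^k_{\mathcal S,\mathcal S}\,\bm p^k_{\mathcal S} = \bm H^k_{\mathcal S, J_{k+1}}\bm x^k_{J_{k+1}} - \bm g^k_{\mathcal S}$ wait— it solves $\bm H^k_{\mathcal S,\mathcal S}\bm p^k_{\mathcal S} = -\bm H^k_{\mathcal S,J_{k+1}}\bm x^k_{J_{k+1}} + \bm g^k_{\mathcal S}$, and by \eqref{eq36} the right-hand side equals $-\bm H^k_{\mathcal S,\mathcal S^c}\bm x^k_{\mathcal S^c} + \bm g^k_{\mathcal S}$, i.e. the exact reduced gradient. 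Then $\bm x^{k+1}_{\mathcal S} - \bm x^\natural_{\mathcal S} = (\bm x^k_{\mathcal S} - \bm x^\natural_{\mathcal S}) - (\bm H^k_{\mathcal S,\mathcal S})^{-1}(\nabla_{\mathcal S} f_I(\bm x^k))$, plus a cross term from $\bm x^\natural_{\mathcal S^c}$ which vanishes precisely when $\mathrm{supp}(\bm x^\natural)\subseteq \mathcal S$. I would expand $\nabla_{\mathcal S} f_I(\bm x^k)$ around $\bm x^\natural$ via the integral form of Taylor's theorem, so that
\begin{equation*}
\bm x^{k+1}_{\mathcal S} - \bm x^\natural_{\mathcal S} = (\bm H^k_{\mathcal S,\mathcal S})^{-1}\!\!\int_0^1 \big(\bm H^k_{\mathcal S,\mathcal S} - \nabla^2_{\mathcal S,\mathcal S} f_I(\bm x^\natural + t(\bm x^k - \bm x^\natural))\big)(\bm x^k_{\mathcal S} - \bm x^\natural_{\mathcal S})\,dt + (\text{lower-order terms}).
\end{equation*}
The Lipschitz continuity of $\nabla^2 f_I$ on sparse supports (which holds for the quartic $f_I$ with constant controlled once $\bm x^k, \bm x^\natural$ lie in a fixed-radius ball, using the concentration bounds above) makes the integrand $O(\mathrm{dist}(\bm x^k,\bm x^\natural))$, and the bounded inverse Hessian norm turns the whole thing into $\rho\,\mathrm{dist}^2(\bm x^k,\bm x^\natural)$ — provided the ``lower-order terms,'' namely the contribution of $\bm x^\natural$ lying off $\mathcal S$ and the error $\nabla f_I(\bm x^\natural)\ne \bm 0$ in the noise-free case (it is $\bm 0$ here, which simplifies matters), are absent or higher order.

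The main obstacle, and the step I would spend the most care on, is controlling the support-identification step and showing the two-phase behavior. One must prove: (i) the one-step IHT on $f_A$ keeps $\mathrm{dist}(\bm x^{k+1},\bm x^\natural)$ within the $\gamma\|\bm x^\natural\|$ neighborhood (invariance), using the restricted contraction properties of the amplitude-based gradient step — this is essentially a SPARTA/CoPRAM-style argument but must be interfaced with the subsequent Newton correction rather than a gradient step; (ii) at least \emph{linear} decay of the distance throughout, so that after $K \lesssim \log(\|\bm x^\natural\|/x_{\min}^\natural)$ iterations we have $\mathrm{dist}(\bm x^k,\bm x^\natural) < x_{\min}^\natural/2$; and (iii) once $\mathrm{dist}(\bm x^k,\bm x^\natural) < x_{\min}^\natural/2$, the hard-thresholded vector $\mathcal H_s(\bm x^k - \eta\nabla f_A(\bm x^k))$ must have support exactly $\mathrm{supp}(\bm x^\natural)$, because every true coordinate exceeds $x_{\min}^\natural$ in magnitude while the perturbation is smaller — this is where the $\log(\|\bm x^\natural\|/x_{\min}^\natural)$ iteration count enters, since linear convergence from radius $\gamma\|\bm x^\natural\|$ down to $x_{\min}^\natural$ takes that many steps. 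After support stabilization, the cross term $\bm x^\natural_{\mathcal S^c} = \bm 0$, the residual gradient term vanishes, and the recursion of the previous paragraph delivers pure quadratic convergence for all $k \ge K$. Threading the linear-phase estimate, the switchover criterion, and the quadratic-phase estimate together — while keeping all the sparse-concentration events consistent across iterations — is the delicate bookkeeping that the proof must handle.
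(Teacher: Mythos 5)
Your proposal follows essentially the same route as the paper's proof: the same decomposition of $\|\bm{x}^{k+1}-\bm{x}^{\natural}\|$ into the off-support piece $\|\bm{x}^{\natural}_{\mathcal{S}_{k+1}^c}\|$ (controlled by the one-step IHT contraction on $f_A$) and the on-support Newton piece (controlled by the integral Taylor expansion, the restricted Lipschitz Hessian, and the restricted eigenvalue bounds), the same two-phase structure with linear convergence until $\mathrm{dist}(\bm{x}^k,\bm{x}^{\natural})<x_{\min}^{\natural}$ forces $\mathcal{S}^{\natural}\subseteq\mathcal{S}_{k+1}$ and kills the cross terms, and the same union bound over $K$ iterations for the failure probability. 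The outline is correct and consistent with the paper's argument.
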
 

The proof of Theorem \ref{thmnhtp} is available in Appendix~\ref{proof1}. Theorem \ref{thmnhtp} establishes the non-asymptotic quadratic convergence rate of our algorithm as it converges to the ground truth, leading to an iteration complexity of $O (\log(\log(1 / \epsilon)) + \log(\Vert \bm x^\natural \Vert / x_{\min}^\natural) )$ for achieving an $\epsilon$-accurate solution. While superlinear convergence for Newton-type methods has been widely recognized in literature, it typically holds only asymptotically. Consequently, the actual iteration complexity remains undefined, highlighting the importance of establishing a non-asymptotic superlinear convergence rate. For the first $K$ iterations, our algorithm exhibits at least linear convergence.

\vspace{-0.1cm}

\subsubsection{Noisy case}

\vspace{-0.1cm}

We demonstrate the robustness of our algorithm under noisy measurement conditions. Building upon \citep{cai2016optimal,chen2017solving}, we assume that the noisy measurements are given by:
\begin{equation*}
y_i = |\langle\bm{a}_i, \bm{x}^{\natural} \rangle|^2 + \epsilon_i,  \quad \mathrm{for} \ \ i = 1, \ldots, m,
\end{equation*}
where $\bm \epsilon$ represents a vector of stochastic noise that is independent of $\{\bm{a}_i\}_{i=1}^m$. Throughout this paper, we assume, without loss of generality, that the expected value of $\bm{\epsilon}$ is $\bm{0}$.

\begin{theorem}\label{thm2}
Let $\{\bm{a}_i\}_{i=1}^m$ be $i.i.d.$ random vectors distributed as $\mathcal{N}(\bm{0}, \bm{I}_n)$, and $\bm{x}^{\natural}\in \mathbb{R}^n$ be any signal with $\Vert \bm x^\natural \Vert_0 \leq s$. Let $\{\bm{x}^k\}_{k\geq 1}$ be the sequence generated by Algorithm \ref{MNHTP} with noisy input $y_i=|\langle\bm{a}_i, \bm{x}^{\natural} \rangle|^2 + \epsilon_i$, $i=1,\dots, m$. There exists positive constants $\eta_1, \eta_2, C_6, C_7, C_8$, and $\gamma\in (0, 1/8]$, such that if the stepsize $\eta\in [\eta_1, \eta_2]$, $m\geq C_6s^2\log n$ and the initial guess $\bm{x}^0$ obeys $\mathrm{dist}(\bm{x}^0,\bm{x}^{\natural})\leq \gamma\|\bm{x}^{\natural}\|$ with $\| \bm x^0\|_0 \leq s$, then with probability at least $1  - (C_7K'+C_{8})m^{-1} $,
\begin{equation*}
	\mathrm{dist}(\bm{x}^{k+1}, \bm{x}^{\natural}) \leq \rho' \cdot \mathrm{dist}(\bm{x}^k,  \bm{x}^{\natural}) + \upsilon\|\bm{\epsilon}\| , \quad \forall \, 0 \leq k \leq K',
\end{equation*} 
where $\rho'\in(0,1)$, $\upsilon \in (0,1)$, and $K'$ is a positive integer.
\end{theorem}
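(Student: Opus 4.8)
\textbf{Proof proposal for Theorem~\ref{thm2}.}

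The plan is to mirror the analysis of the noise-free Theorem~\ref{thmnhtp} but carry the noise term $\bm\epsilon$ through every estimate, and to prove only a \emph{linear} contraction-plus-noise bound valid for a bounded number of iterations $K'$, rather than a quadratic rate. I would organize the argument into the same two conceptual pieces as the noise-free proof: (i) an invariance claim that the iterates remain in the basin $\mathrm{dist}(\bm x^k,\bm x^\natural)\le\gamma\|\bm x^\natural\|$ for all $0\le k\le K'$, and (ii) a one-step recursion of the form $\mathrm{dist}(\bm x^{k+1},\bm x^\natural)\le\rho'\,\mathrm{dist}(\bm x^k,\bm x^\natural)+\upsilon\|\bm\epsilon\|$. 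The two are proved together by induction on $k$: assuming $\bm x^k$ lies in the basin, establish the one-step bound, and then check that the right-hand side is still at most $\gamma\|\bm x^\natural\|$ provided $\|\bm\epsilon\|$ is small enough relative to $\|\bm x^\natural\|$ (which is the implicit role of restricting to $k\le K'$ and of the constants $\eta_1,\eta_2$, $\gamma$).

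For the one-step recursion I would decompose $\mathrm{dist}(\bm x^{k+1},\bm x^\natural)$ exactly as in the noise-free case. First, the support-selection step: the set $\mathcal S_{k+1}=\mathrm{supp}(\mathcal H_s(\bm x^k-\eta\nabla f_A(\bm x^k)))$ must still capture ``most'' of the energy of $\bm x^\natural$; here $\nabla f_A$ is computed from the noisy $z_i=\sqrt{y_i}$, so I would write $\nabla f_A(\bm x^k)=\nabla f_A^{\mathrm{clean}}(\bm x^k)+\bm e_A$ where $\bm e_A$ is a perturbation controlled by $\|\bm\epsilon\|$ (using $|\sqrt{a+\epsilon}-\sqrt a|\le|\epsilon|/\sqrt a$ type bounds together with an anti-concentration / RIP-type control of $|\langle\bm a_i,\bm x^\natural\rangle|$ on the relevant low-dimensional subspace, which holds on the good event of probability $1-O(m^{-1})$). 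This feeds a noisy version of the IHT contraction lemma: $\|\bm x^k_{\mathcal S_{k+1}^c}-\bm x^\natural_{\mathcal S_{k+1}^c}\|\le \mu\,\mathrm{dist}(\bm x^k,\bm x^\natural)+c\|\bm\epsilon\|$ with $\mu<1$. Second, the Newton/debiasing step \eqref{eq37}: on $\mathcal S_{k+1}$ the Hessian $\bm H^k_{\mathcal S_{k+1},\mathcal S_{k+1}}$ of $f_I$ is, on the good event and in the basin, well-conditioned with eigenvalues in a fixed interval around $\|\bm x^\natural\|^2$ (this is a concentration statement about $\frac1m\sum|\langle\bm a_i,\cdot\rangle|^2\bm a_i\bm a_i^\top$ restricted to $\le 3s$-sparse vectors, and is exactly the ingredient already used for Theorem~\ref{thmnhtp}, unaffected by noise in the Hessian since $\nabla^2 f_I$ does not involve $y_i$). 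Writing the error $\bm x^{k+1}_{\mathcal S_{k+1}}-\bm x^\natural_{\mathcal S_{k+1}}$ via \eqref{eq37} and the gradient $\bm g^k_{\mathcal S_{k+1}}=[\nabla f_I(\bm x^k)]_{\mathcal S_{k+1}}$, and noting that $\nabla f_I$ carries the noise linearly ($\nabla f_I(\bm x)=\frac1m\sum(|\langle\bm a_i,\bm x\rangle|^2-y_i)\langle\bm a_i,\bm x\rangle\bm a_i$, so the $-y_i$ picks up $-\epsilon_i$), a Taylor expansion of $\nabla f_I$ about $\bm x^\natural$ plus the eigenvalue bound yields $\|\bm x^{k+1}_{\mathcal S_{k+1}}-\bm x^\natural_{\mathcal S_{k+1}}\|\le \alpha\,\mathrm{dist}^2(\bm x^k,\bm x^\natural)/\|\bm x^\natural\| + \beta\,\|\bm x^\natural_{\mathcal S_{k+1}^c}\| + \tau\|\bm\epsilon\|$. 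Combining with the support bound and using $\mathrm{dist}(\bm x^k,\bm x^\natural)\le\gamma\|\bm x^\natural\|$ to absorb the quadratic term into a linear one with a small coefficient gives the claimed $\rho'<1$ (after choosing $\gamma$ small and $\eta\in[\eta_1,\eta_2]$) and $\upsilon\in(0,1)$; a sign-alignment bookkeeping step handles $\mathrm{dist}$ versus the raw difference, identically to the noise-free proof.

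The main obstacle I anticipate is controlling the noise-dependent perturbation of the \emph{support-selection} step rather than the Newton step. The Hessian conditioning is noise-free and essentially quoted from the Theorem~\ref{thmnhtp} machinery; the gradient of $f_I$ absorbs noise in a clean linear fashion. But $f_A$ is nonsmooth and its generalized gradient involves $\mathrm{sign}(\langle\bm a_i,\bm x^k\rangle)$ and the noisy amplitudes $z_i=\sqrt{|\langle\bm a_i,\bm x^\natural\rangle|^2+\epsilon_i}$; bounding $\|\nabla f_A(\bm x^k)-\nabla f_A^{\mathrm{clean}}(\bm x^k)\|$ requires both (a) a uniform (over $\le 3s$-sparse $\bm x$ in the basin) concentration bound so the operator norm of the relevant random matrix is tamed, and (b) handling the $1/\sqrt{|\langle\bm a_i,\bm x^\natural\rangle|}$ blow-up when $\langle\bm a_i,\bm x^\natural\rangle$ is small — typically dealt with by a truncation/good-event argument that only a negligible fraction of indices are problematic, contributing a lower-order term to the $\|\bm\epsilon\|$ coefficient. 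Getting this perturbation bound to come out \emph{uniformly} with the $\Omega(s^2\log n)$ sample complexity, and with constants that keep $\rho'$ strictly below $1$, is the delicate part; everything else is a noise-tracking rerun of the noise-free argument, which is why the theorem only asserts the recursion for $0\le k\le K'$ (the horizon beyond which accumulated noise could push iterates out of the basin is not controlled).
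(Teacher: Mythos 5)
Your overall strategy is the paper's: rerun the noise-free argument with the same decomposition $\|\bm x^{k+1}-\bm x^\natural\|^2 = \|\bm x^\natural_{\mathcal S_{k+1}^c}\|^2 + \|\bm x^k_{\mathcal S_{k+1}}-\bm x^\natural_{\mathcal S_{k+1}}+\bm p^k_{\mathcal S_{k+1}}\|^2$, a noisy version of the one-step IHT bound for the first term, the Newton/Taylor estimate for the second, and the observation that $\nabla f_I(\bm x^\natural)$ is no longer zero but equals $-\tfrac1m\bm A^T\bigl(\bm\epsilon\odot(\bm A\bm x^\natural)\bigr)$, whose norm is absorbed into $\upsilon\|\bm\epsilon\|$. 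However, there is one concrete error. You assert that the Hessian conditioning is "unaffected by noise \ldots since $\nabla^2 f_I$ does not involve $y_i$." It does: a direct computation gives $\nabla^2 f_I(\bm x)=\tfrac1m\sum_i\bigl(3(\bm a_i^T\bm x)^2-y_i\bigr)\bm a_i\bm a_i^T$, so in the noisy model the restricted Hessian picks up the perturbation $-\tfrac1m\sum_i\epsilon_i\,\bm a_{i,\mathcal S}\bm a_{i,\mathcal S}^T$. You therefore cannot simply quote the eigenvalue bounds $l_1,l_2,l_3$ from the noise-free machinery; the paper re-derives them as $l_1',l_2',l_3'$, using the independence of $\bm\epsilon$ and $\{\bm a_i\}$ together with $\mathbb E[\bm\epsilon]=\bm 0$ (so the expected Hessian is unchanged) and enlarging the concentration slack from $\delta$ to $\delta+\epsilon$ to absorb the fluctuation of the noise term. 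Without this, your invertibility bound on $\bm H^k_{\mathcal S_{k+1},\mathcal S_{k+1}}$, and hence the entire Newton-step estimate, is unjustified.

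On the support-selection step, the obstacle you single out as the delicate part (a pointwise perturbation bound on $z_i=\sqrt{y_i}$ with a $1/|\bm a_i^T\bm x^\natural|$ blow-up requiring truncation) is avoidable, and the paper avoids it. Rather than bounding $\|\nabla f_A(\bm x^k)-\nabla f_A^{\mathrm{clean}}(\bm x^k)\|$ entrywise, it bounds the only quantity the IHT lemma actually needs, namely $\tfrac1m\bigl\|\bm A_{\mathcal T_{k+1}}^T(\bm z^k-\bm A\bm x^\natural)\bigr\|$ with $\bm z^k=(|\bm A\bm x^\natural|^2+\bm\epsilon)^{1/2}\odot\mathrm{sgn}(\bm A\bm x^k)$: the triangle inequality splits off a noise term controlled by $\tfrac{1}{m}\|\bm A_{\mathcal T_{k+1}}^T(\bm\epsilon\odot\mathrm{sgn}(\bm A\bm x^k))\|\le\tfrac{\sqrt{1+\delta_{2s}}}{\sqrt m}\|\bm\epsilon\|$ via the RIP bound \eqref{you14a}, and the clean part is handled exactly as in Lemma~\ref{youlemma3}. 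Substituting this into the proof of Lemma~\ref{youlemma4} yields $\|\bm u^k-\bm x^\natural\|\le\zeta\|\bm x^k-\bm x^\natural\|+O(\|\bm\epsilon\|/\sqrt m)$ with no truncation or good-event argument on small $|\bm a_i^T\bm x^\natural|$. Apart from these two points, your plan matches the paper's proof.
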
 

The proof of Theorem \ref{thm2} is provided in Appendix~\ref{proof2}. Theorem \ref{thm2} validates the robustness of our algorithm, demonstrating its ability to effectively recover the signal from noisy measurements. It reveals a linear convergence rate for our algorithm in the presence of noise. However, our algorithm incorporates the second-order information when determining the search direction, which always leads to faster convergence than first-order algorithms. In addition, Theorem \ref{thm2} does not suggest that the established inequality fails to hold when $k \geq K'$.

\vspace{-0.1cm}

\section{Experimental Results}\label{sec-experiment}

\vspace{-0.1cm}

In this section, we present a series of numerical experiments designed to validate the efficiency and accuracy of our proposed algorithm. All experiments were conducted on a 2 GHz Intel Core i5 processor with 16 GB of RAM, and all compared methods were implemented using MATLAB.

Unless explicitly specified, the sensing vectors $\{\bm{a}_i\}_{i=1}^m$ were generated by the standard Gaussian distribution. The true signal $\bm{x}^{\natural}$ has $s$ nonzero entries, where the support is selected uniformly from all subsets of $[n]$ with cardinality $s$, and their values are independently generated from the standard Gaussian distribution $\mathcal{N}(0,1)$. In the case of noisy measurements, we have:
\begin{equation}\label{exp_model}
\begin{gathered}
y_i = |\langle \bm{a}_i, \bm{x}^{\natural} \rangle|^2 + \sigma \varepsilon_i, \quad \text{for} \ \ i= 1, \ldots, m,
\end{gathered}
\end{equation}
where $\{\varepsilon_i\}_{i=1}^m$ follow $i.i.d$ standard Gaussian distribution, and $\sigma > 0$ determines the noise level.

\vspace{-0.1cm}

\begin{figure*}[!htb]
    \captionsetup[subfigure]{justification=centering}
    \centering
      \begin{subfigure}[t]{0.42\textwidth}
        \centering
        \includegraphics[scale=.255]{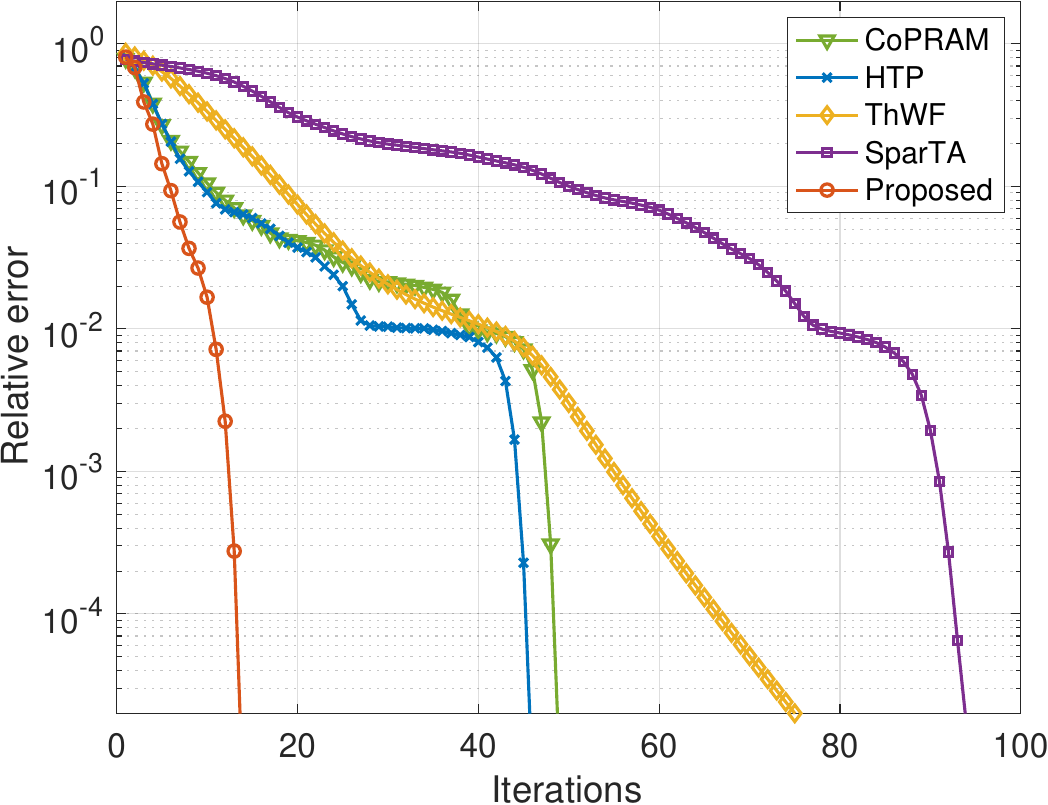}
         \vspace{-0.1cm}
        \caption{Noise free, sparsity $s = 80$}
    \end{subfigure} \quad
        \begin{subfigure}[t]{0.42\textwidth}
        \centering
        \includegraphics[scale=.255]{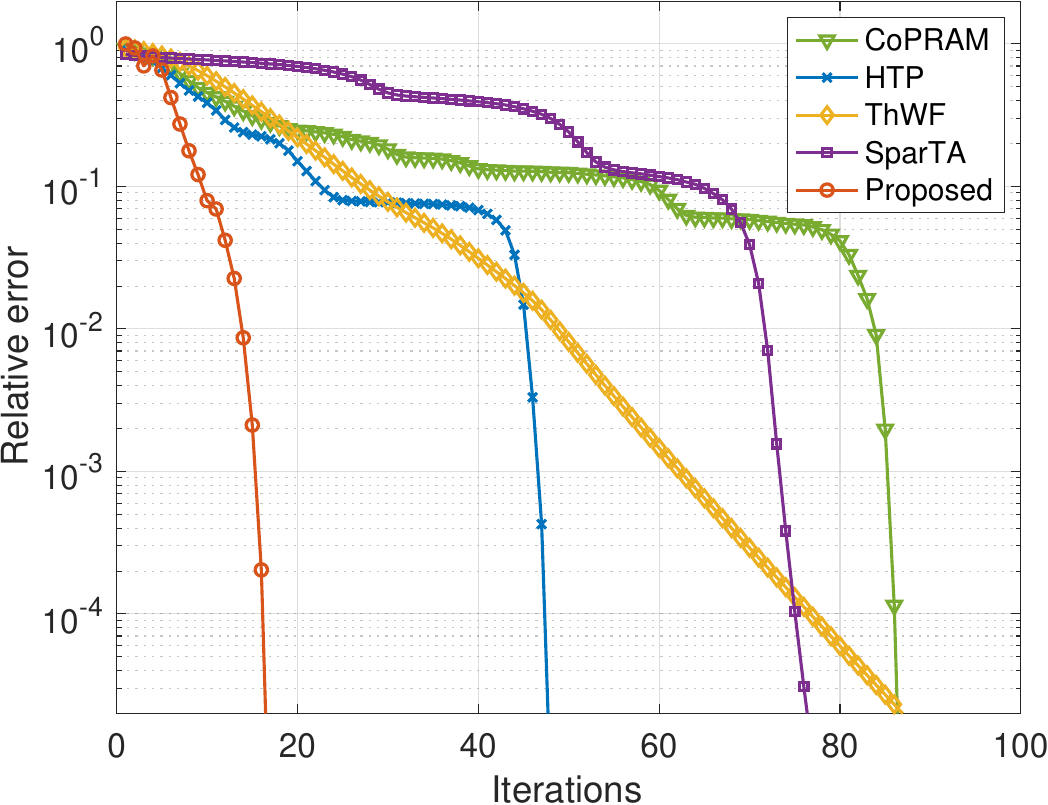}
        \vspace{-0.1cm}
        \caption{Noise free, sparsity $s = 100$}
    \end{subfigure}%
    \\
     \begin{subfigure}[t]{0.42\textwidth}
        \centering
        \includegraphics[scale=.255]{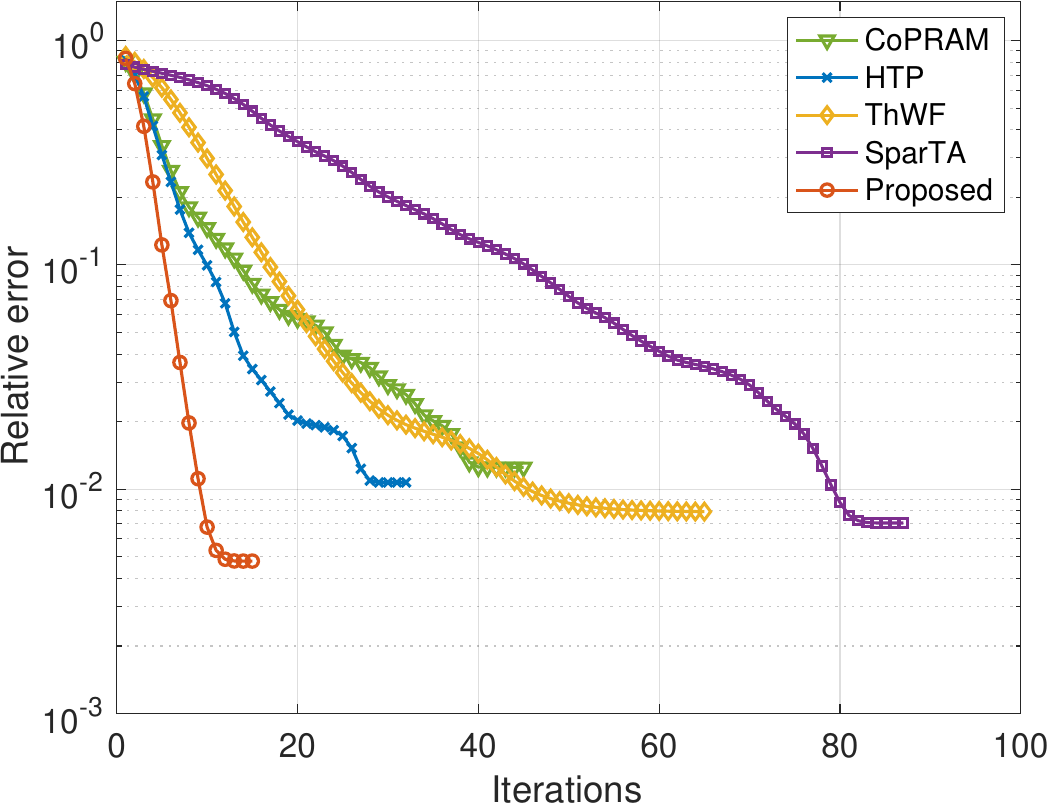}
         \vspace{-0.1cm}
        \caption{Noise level $\sigma = 0.03$, sparsity $s = 80$}
    \end{subfigure} \quad
         \begin{subfigure}[t]{0.42\textwidth}
        \centering
        \includegraphics[scale=.255]{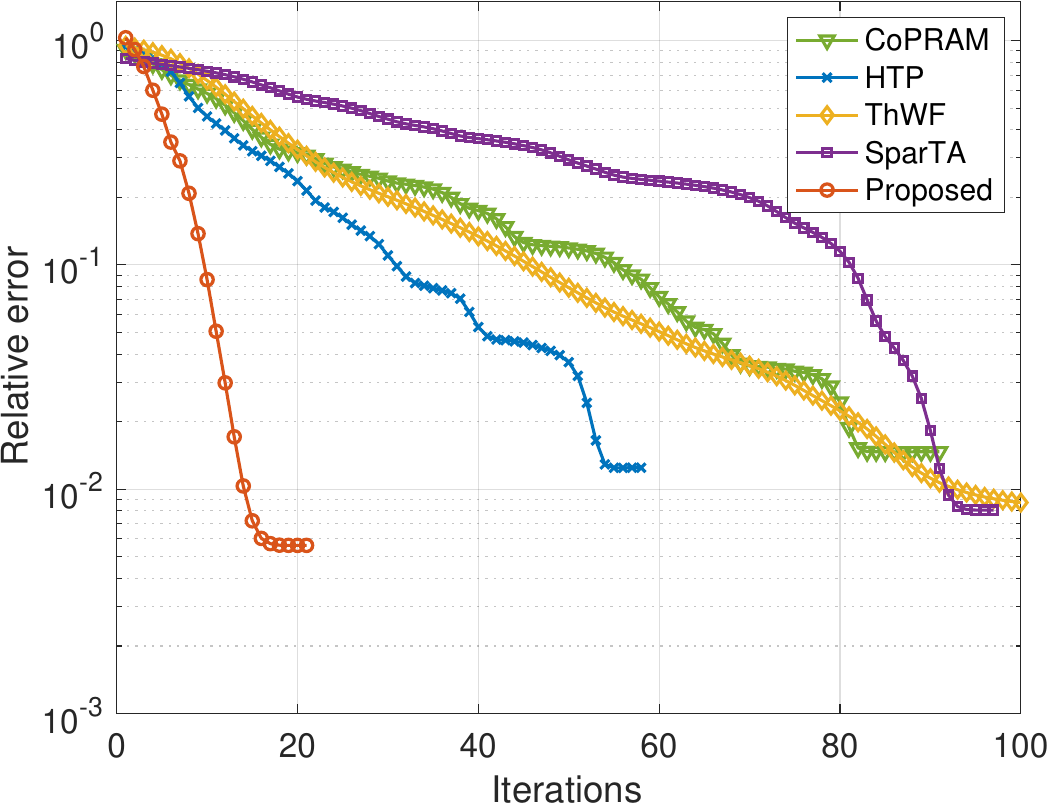}
         \vspace{-0.1cm}
        \caption{Noise level $\sigma = 0.03$, sparsity $s = 100$}
    \end{subfigure}%
    \vspace{-0.1cm}
    \caption{Relative error versus iterations for various algorithms, with fixed signal dimension $n=5000$ and sample size $m=3000$. The results represent the average of 100 independent trial runs.}\label{erriter}
  \vspace{-0.2cm}
\end{figure*}

We compare the convergence speed of our algorithm with state-of-the-art methods, including ThWF \citep{cai2016optimal}, SPARTA \citep{wang2017sparse}, CoPRAM \citep{jagatap2019sample}, and HTP \citep{cai2022sparse}. We fine-tune the parameters and set: $\alpha = 0.7$ for ThWF; $\gamma = 0.5$, $\mu=1$ and $|\mathcal{I}|=\lceil m/6\rceil$ for SPARTA; $\eta = 0.95$ for both HTP and our algorithm. The maximum number of iterations for each algorithm is 1000. The Relative Error (RE) between the estimated signal $\hat{\bm x}$ and the ground truth $\bm x^\natural$ is defined as $\mathrm{RE}:= \frac{\mathrm{dist} (\hat{\bm x}, \bm x^\natural)}{ \Vert \bm x^\natural \Vert}$. A recovery is deemed successful if $\mathrm{RE} < 10^{-3}$.

Figure~\ref{erriter} compares the number of iterations required for convergence across various algorithms. We set the number of measurements to $m=3000$, the dimension of the signal to $n=5000$, and the sparsity levels to $s = 80$ and $100$. We consider both noise-free and noisy measurements with a noise level of $\sigma=0.03$. We observe that all algorithms perform well under both noise-free and noisy conditions; however, our algorithm converges with significantly fewer iterations.

Table~\ref{tab:running_time_comparison} presents a comparison of the convergence running times for various algorithms, corresponding to the experiments depicted in Figure~\ref{erriter}. For noise-free measurements, all algorithms are set to terminate when the iterate satisfies the following condition: $\frac{\mathrm{dist} (\bm x^k, \bm x^\natural)}{ \Vert \bm x^\natural \Vert} < 10^{-3}$, which indicates a successful recovery. In the case of noisy measurements, the termination criterion is set as $\frac{\mathrm{dist} (\bm x^{k+1}, \bm x^k)}{ \Vert \bm x^k \Vert} < 10^{-3}$. As evidenced by the results in Table \ref{tab:running_time_comparison}, our algorithm consistently outperforms state-of-the-art methods in terms of running time, for both noise-free and noisy cases, highlighting its superior efficiency for sparse phase retrieval applications.

\begin{table}[ht]
    \centering
    \small
    \caption{Comparison of running times (in seconds) for various algorithms in the recovery of signals with sparsity levels of $80$ and $100$ for both noise-free and noisy scenarios.}
    \vspace{-0.1cm}
    \label{tab:running_time_comparison}
    \renewcommand{\arraystretch}{1.45}
    \begin{tabular}{lc@{\hspace{3em}}l@{\hspace{2.15em}}c@{\hspace{2.15em}}c@{\hspace{2.15em}}c@{\hspace{2.15em}}c}
        \toprule
        \multicolumn{2}{c}{Methods} & ThWF & SPARTA & CoPRAM & HTP & Proposed \\
        \midrule
        \multirow{2}{*}{Noise free ($\sigma = 0$)} & Sparsity $80$ & $0.3630$ & $1.0059$ & $0.9762$ & 0.0813 & $\mathbf{0.0530}$ \\
        & Sparsity $100$ & $0.6262$ & $1.2966$ & $3.3326$ & $0.2212$ & $\mathbf{0.1024}$ \\
        \midrule
        \multirow{2}{*}{Noisy ($\sigma = 0.03$)} & Sparsity $80$ & $0.2820$ & $1.1082$ & $1.3426$ & $0.1134$ & $\mathbf{0.0803}$ \\
        & Sparsity $100$ & $0.4039$ & $1.6368$ & $4.1006$ & $0.2213$ & $\mathbf{0.1187}$ \\
        \bottomrule
    \end{tabular}
\end{table}

\vspace{-0.05cm}

\begin{figure*}[!htb]
\captionsetup[subfigure]{justification=centering}
\centering
  \begin{subfigure}[t]{0.42\textwidth}
        \centering
        \includegraphics[scale=.36]{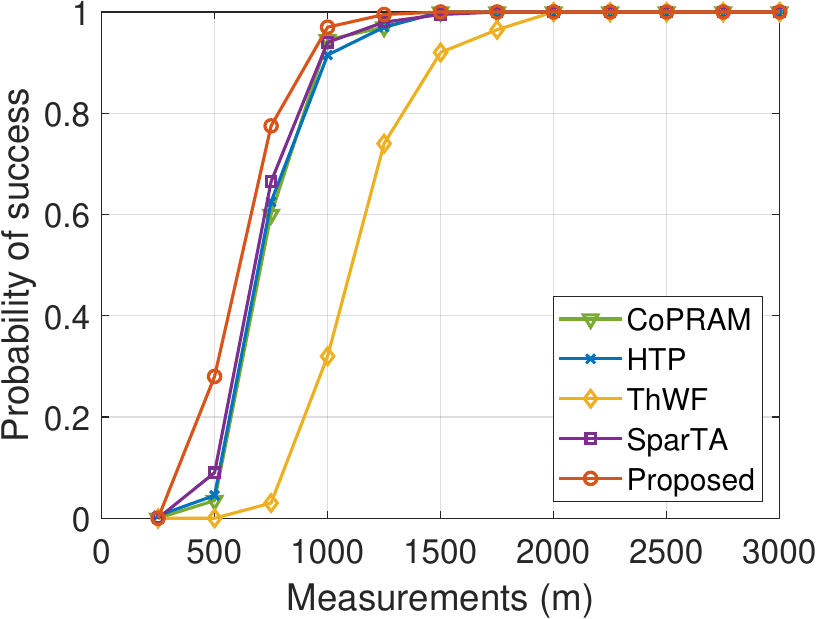}
        \caption{$s = 25$}
  \end{subfigure} \qquad
  \begin{subfigure}[t]{0.42\textwidth}
        \centering
        \includegraphics[scale=.36]{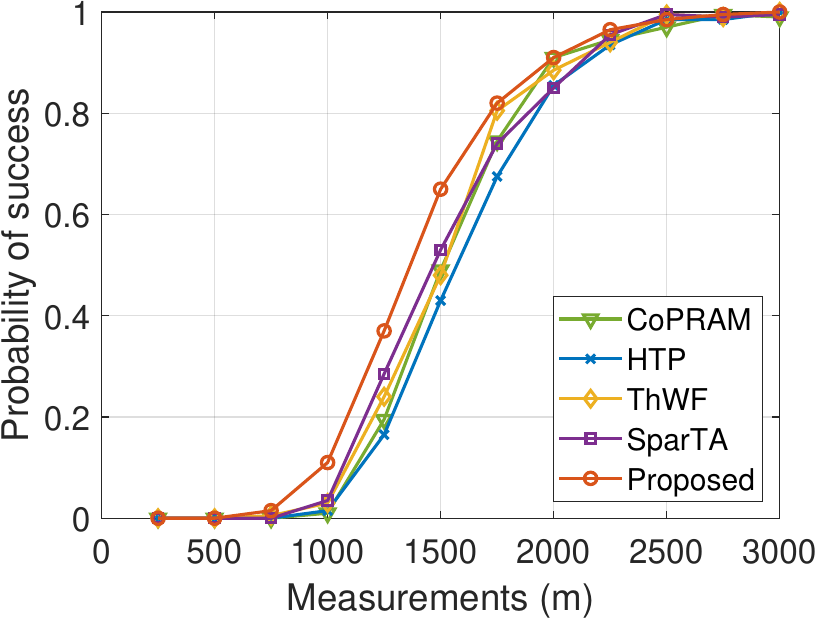}
        \caption{$s = 50$}
    \end{subfigure} 
    \vspace{-0.2cm}
\caption{Phase transition performance of various algorithms for signals of dimension $n=3000$ with sparsity levels $s=25$ and $50$. The results represent the average of 200 independent trial runs. The parameter settings for ThWF, SPARTA, CoPRAM, and HTP in experiments are consistent with those in the studies by \citet{jagatap2019sample,cai2022sparse}.}
\label{phasetransition2}
\vspace{-0.35cm}
\end{figure*}

Figure \ref{phasetransition2} depicts the phase transitions of different algorithms, with the true signal dimension fixed at $n=3000$ and sparsity levels set to $s = 25$ and $50$. The phase transition graph is generated by evaluating the successful recovery rate of each algorithm over 200 independent trial runs. Figure \ref{phasetransition2} shows that the probability of successful recovery for each algorithm transitions from zero to one as the sample size $m$ increases. Furthermore, our algorithm consistently outperforms state-of-the-art methods, achieving a higher successful recovery rate across various measurement counts.

In practical applications, natural signals may not be inherently sparse; however, their wavelet coefficients often exhibit sparsity. Figure \ref{signal1d} illustrates the reconstruction performance of a signal from noisy phaseless measurements, where the true signal, with a dimension of $30{,}000$, exhibits sparsity and contains $208$ nonzero entries under the wavelet transform, using $20{,}000$ samples. The sampling matrix $\bm{A}\in\mathbb{R}^{20{,}000\times 30{,}000}$ is constructed from a random Gaussian matrix and an inverse wavelet transform generated using four levels of Daubechies 1 wavelet. The noise level is set to $\sigma=0.03$.

To evaluate recovery accuracy, we use the Peak Signal-to-Noise Ratio (PSNR), defined as $\text{PSNR} = 10\cdot \log \frac{\mathrm{V}^2}{\mathrm{MSE}}$, where $\mathrm{V}$ represents the maximum fluctuation in the ground truth signal, and $\mathrm{MSE}$ denotes the mean squared error of the reconstruction. A higher PSNR value generally indicates better reconstruction quality. As depicted in Figure \ref{signal1d}, our proposed algorithm outperforms state-of-the-art methods in terms of both reconstruction time and PSNR. It achieves a higher PSNR while requiring considerably less time for reconstruction. In the experiments, the sparsity level is assumed to be unknown, and the hard thresholding sparsity level is set to $300$ for various algorithms.

\vspace{-0.1cm}

\begin{figure*}[!htb]
\centering
\captionsetup[subfigure]{labelformat=empty}
\begin{subfigure}[t]{0.325\textwidth}
        \centering
        \includegraphics[scale=.22]{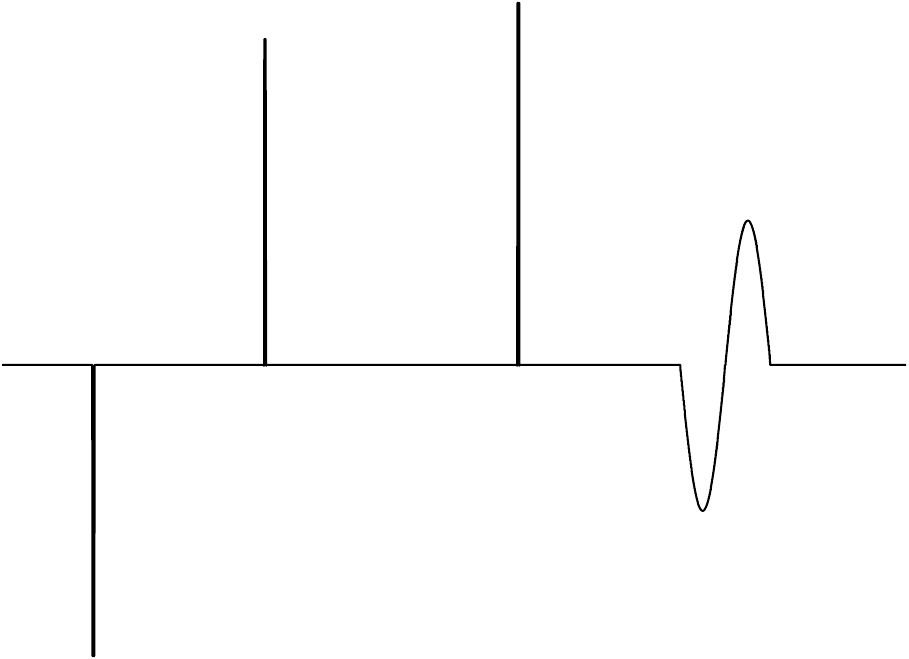}
        \caption{(a)\hspace{0.1cm}Original signal}
    \end{subfigure} 
\begin{subfigure}[t]{0.325\textwidth}
        \centering
        \includegraphics[scale=.22]{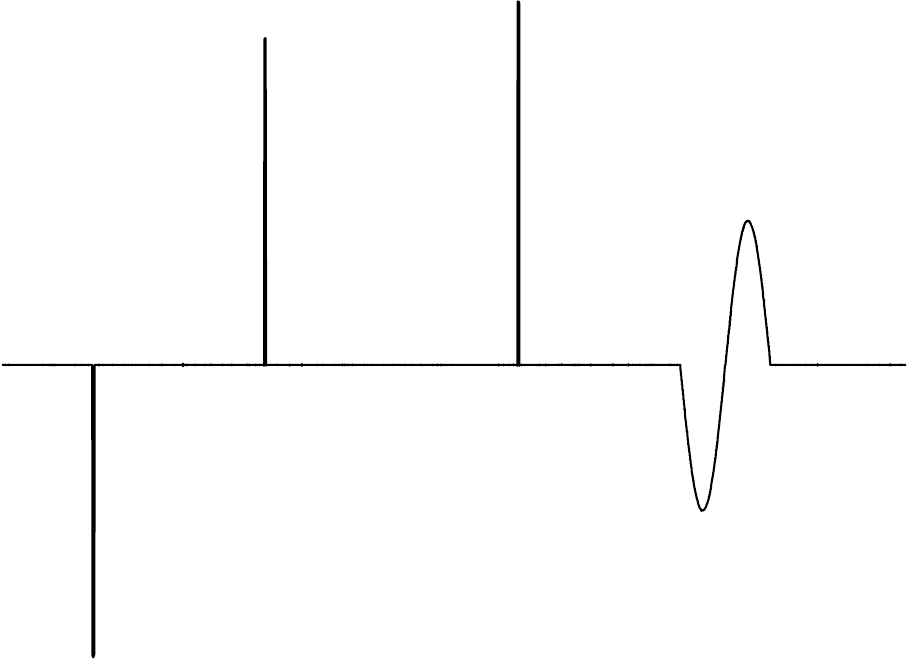}
        \caption{\hspace{0.7cm} (b) \hspace{-0.7cm} \makecell{ThWF: PSNR = 71, \\ Time(s) = 11.586}}
    \end{subfigure} 
 \begin{subfigure}[t]{0.325\textwidth}
        \centering
        \includegraphics[scale=.22]{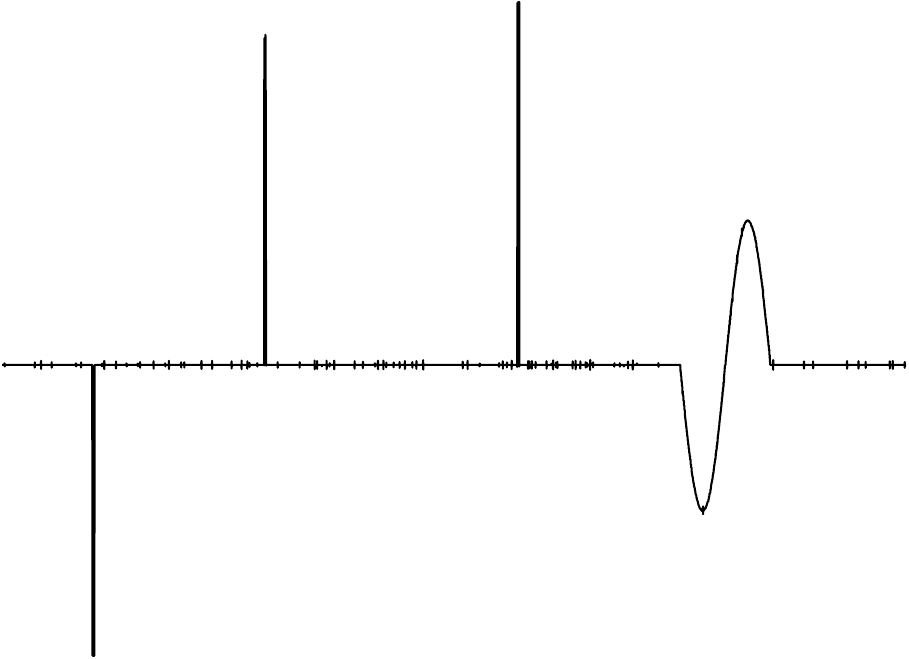}
        \caption{\hspace{0.7cm} (c) \hspace{-0.4cm} \makecell{SPARTA: PSNR = 66, \\ Time(s) = 38.379}}
    \end{subfigure}   \\
 \begin{subfigure}[t]{0.325\textwidth}
        \centering
        \includegraphics[scale=.22]{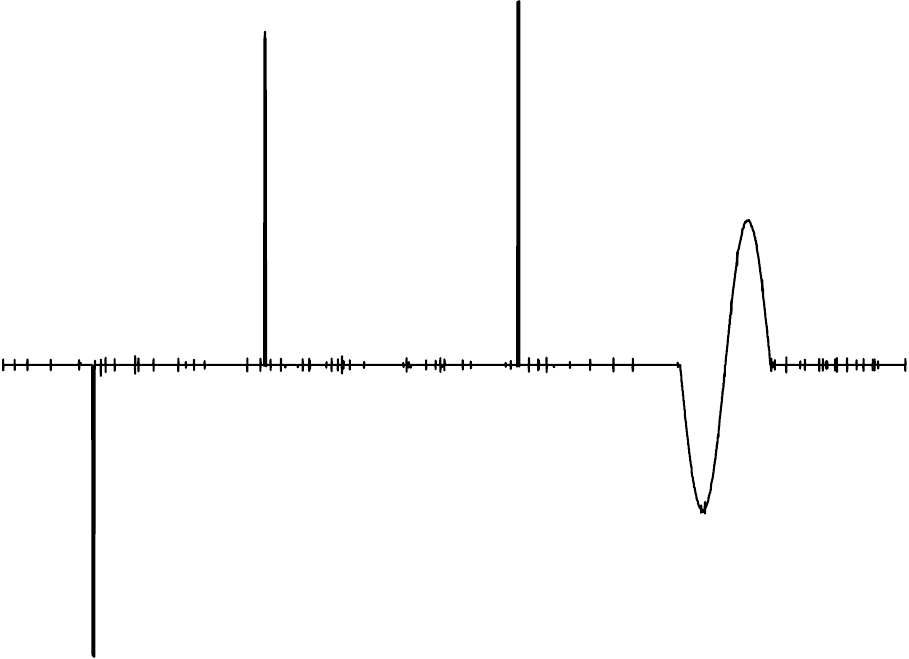}
        \caption{\hspace{0.5cm} (d) \hspace{-0.3cm} \makecell{CoPRAM: PSNR = 61, \\ Time(s) = 117.685}}
    \end{subfigure}     
     \begin{subfigure}[t]{0.325\textwidth}
        \centering
        \includegraphics[scale=.22]{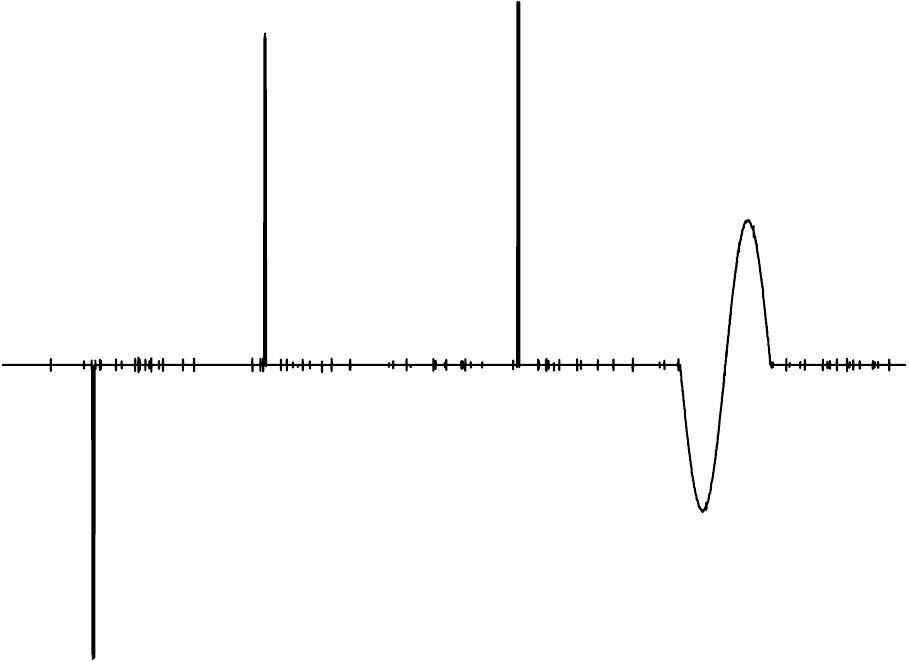}
        \caption{\hspace{0.7cm} (e) \hspace{-0.8cm}\makecell{HTP: PSNR = 62, \\ Time(s) = 6.689}}
    \end{subfigure}   
 \begin{subfigure}[t]{0.325\textwidth}
        \centering
        \includegraphics[scale=.22]{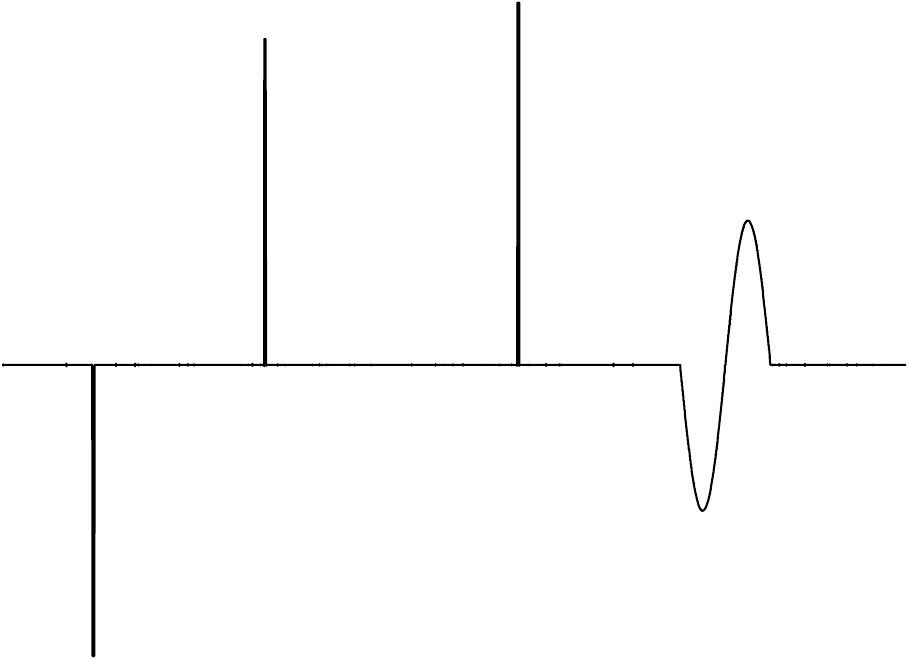}
        \caption{\hspace{0.6cm} (f) \hspace{-0.3cm} \makecell{Proposed: PSNR = 78, \\ Time(s) = 2.629}}
    \end{subfigure} 
    \vspace{-0.1cm}
\caption{Reconstruction of the signal with a dimension of $30{,}000$ from noisy phaseless measurements by various algorithms. Time(s) is the running time in seconds.}
\label{signal1d}
\vspace{-0.1cm}
\end{figure*}

\vspace{-0.1cm}

\section{Conclusions and Discussions}\label{conclusions}

\vspace{-0.1cm}

In this paper, we have introduced an efficient second-order algorithm for sparse phase retrieval. Our algorithm attains a non-asymptotic quadratic convergence rate while maintaining the same per-iteration computational complexity as popular first-order methods, which exhibit linear convergence limitations. Empirical results have demonstrated a significant improvement in the convergence rate of our algorithm. Furthermore, experiments have revealed that our algorithm excels in attaining a higher success rate for exact signal recovery.

Finally, we discuss the limitations of our paper, which also serve as potential avenues for future research. Both our algorithm and state-of-the-art methods share the same sample complexity of $\Omega(s^2 \log n)$ for successful recovery; however, our algorithm requires this complexity in both the initialization and refinement stages, while state-of-the-art methods require $\Omega(s^2 \log n)$ for initialization and $\Omega(s \log n/s)$ for refinement. Investigating ways to achieve tighter complexity in our algorithm's refinement stage is a worthwhile pursuit for future studies. 

Currently, the initialization methods for sparse phase retrieval exhibit a sub-optimal sample complexity of $\Omega(s^2 \log n)$. A key challenge involves reducing its quadratic dependence on $s$. Recent study \citep{jagatap2019sample} attained a complexity of $\Omega(s \log n)$, closer to the information-theoretic limit, but relied on the strong assumption of the power law decay for sparse signals. Developing an initialization method that offers optimal sample complexity for a broader range of sparse signals is an engaging direction for future research.

\subsubsection*{Acknowledgments}
This work was supported by the Hong Kong Research Grant Council GRFs 16310620, 16306821, and 16307023, and Hong Kong Innovation and Technology Fund MHP/009/20, and the Project of Hetao Shenzhen-Hong Kong Science and Technology Innovation Cooperation Zone under Grant HZQB-KCZYB-2020083. We would also like to thank the anonymous reviewers for their valuable feedback on the manuscript.

\bibliographystyle{iclr2024_conference}

\newpage
\appendix
\onecolumn

Additional experimental results are provided in Appendix~\ref{sec-additional-experiment}, while the proofs for Theorems~\ref{thmnhtp} and~\ref{thm2} can be found in Appendix~\ref{proofs}.

\section{Additional Experiments}\label{sec-additional-experiment}

In this section, we conduct a series of experiments to evaluate the scalability of the proposed algorithm, its phase transition characteristics during signal recovery, its resilience when confronted with various levels of input sparsity, and its robustness in the face of noisy measurements.

\subsection{Scalability Across Varying Dimensions}

We first investigate the scalability of our algorithm in terms of varying dimensions. In particular, we extend the range of signal dimensions from $10000$ to $50000$ and adjust the sample size ratio ($m/n$) from 0.3 to 0.7. We define successful recovery as the termination condition for the algorithm, specifically when the iterate fulfills: $\frac{\mathrm{dist} (\bm x^k, \bm x^\natural)}{ \Vert \bm x^\natural \Vert} < 10^{-3}$. Table~\ref{table_scalability} offers an in-depth view of the efficiency and scalability of our proposed algorithm.

\begin{table}[h!]
\centering
    \small
    \caption{Running time comparison (in seconds) for the proposed algorithm while recovering signals with dimensions ranging from $10000$ to $50000$ and sample size ratios ($m/n$) from 0.3 to 0.7. The underlying signal has a sparsity level of 100. The reported results represent the average of 200 independent trial runs.}
    \vspace{-0.1cm}
    \label{table_scalability}
\begin{tabular}{|m{1.6cm}|>{\centering\arraybackslash}m{0.65cm}|>{\centering\arraybackslash}m{0.75cm}|>{\centering\arraybackslash}m{0.75cm}|>{\centering\arraybackslash}m{0.75cm}|>{\centering\arraybackslash}m{0.75cm}|>{\centering\arraybackslash}m{0.75cm}|>{\centering\arraybackslash}m{0.75cm}|>{\centering\arraybackslash}m{0.75cm}|>{\centering\arraybackslash}m{0.75cm}|>{\centering\arraybackslash}m{0.75cm}|}
\hline
\multicolumn{2}{|c|}{\makecell{\centering Dimension $n\, (10^5)$}} & 1 & 1.5 & 2 & 2.5 & 3 & 3.5 & 4 & 4.5 & 5 \\
\hline
\multirow{3}{*}{\parbox{1.7cm}{\centering Sample size\\ratio $m/n$}} & 0.3 & 0.212 & 0.420 & 0.458 & 0.675 & 0.987 & 1.214 & 1.472 & 1.854 & 2.147 \\
\cline{2-11}
& 0.5 & 0.242 & 0.497 & 0.585 & 0.912 & 1.382 & 1.788 & 2.237 & 2.709 & 3.022 \\
\cline{2-11}
& 0.7 & 0.278 & 0.545 & 0.833 & 1.179 & 1.783 & 2.224 & 2.789 & 3.581 & 4.252 \\
\hline
\end{tabular}
\end{table}

\subsection{Phase Transitions for Block-Sparse Signals}

We present the phase transitions of various algorithms in Figure \ref{phasetransition_response}, focusing on block-sparse signals with a fixed dimension of $n=3000$ and sparsity levels set to $s = 20$ and $30$. The signal generation process aligns with the experiments depicted in Figure 2 of the study by \citet{jagatap2019sample}. We generated the phase transition graph by assessing the success rate of each algorithm's recovery across 200 independent trial runs. As shown in Figure \ref{phasetransition_response}, our algorithm always achieves a higher successful recovery rate  than the state-of-the-art methods across various measurement counts.

\begin{figure*}[!htb]
\captionsetup[subfigure]{justification=centering}
\centering
  \begin{subfigure}[t]{0.42\textwidth}
        \centering
        \includegraphics[scale=.37]{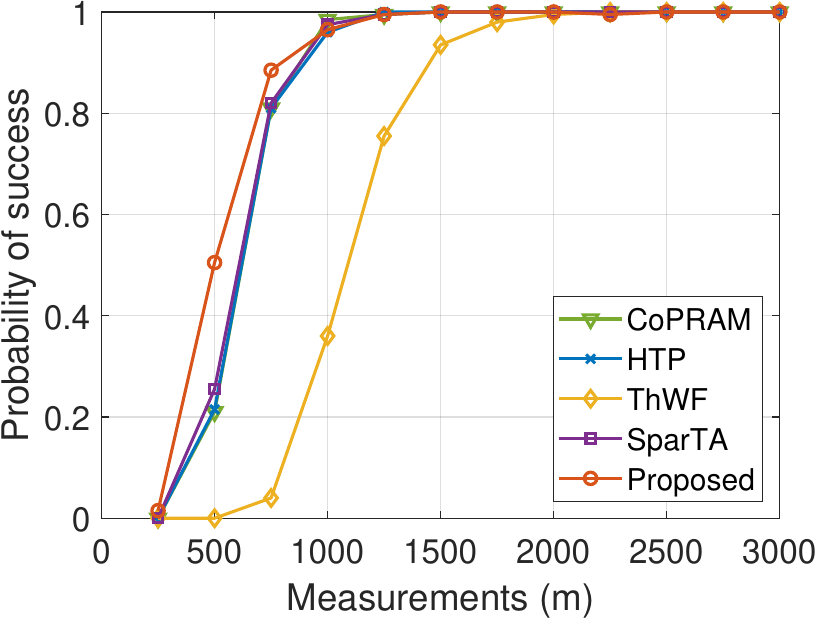}
        \caption{$s = 20$}
  \end{subfigure} \qquad
  \begin{subfigure}[t]{0.42\textwidth}
        \centering
        \includegraphics[scale=.37]{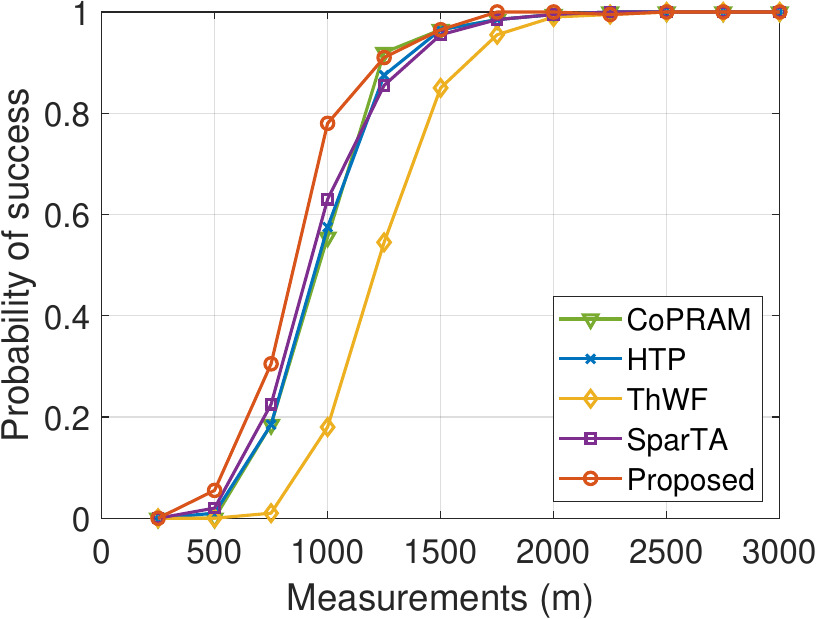}
        \caption{$s = 30$}
    \end{subfigure} 
    \vspace{-0.15cm}
\caption{Phase transition comparison of various algorithms applied to block-sparse signals with a dimension of $n=3000$ and sparsity levels $s=20$ and $30$. The signal generation process aligns with the experiments depicted in Figure 2 of the study by \citet{jagatap2019sample}. The parameter settings for ThWF, SPARTA, CoPRAM, and HTP are consistent with those used in the studies by \citet{jagatap2019sample,cai2022sparse}. The results reflect the average of 200 independent trial runs. A recovery is considered successful if the relative error was less than $10^{-3}$.}
\label{phasetransition_response}
\vspace{-0.12cm}
\end{figure*}

\subsection{Phase Transitions Across Varying Sparsity Levels}

  \vspace{-0.1cm}

Figure \ref{phasetransition1} presents the success rate of different algorithms as a function of varying sparsity levels $s$ and the number of measurements $m$. With a fixed signal dimension of $n = 3000$, we vary the signal sparsity $s$ from 6 to 120 and the number of measurements $m$ from 150 to 3000. A signal recovery is considered successful if the relative error $\frac{\mathrm{dist} (\hat{\bm x}, \bm x^\natural)}{ \Vert \bm x^\natural \Vert} < 10^{-3}$. The gray level of a block represents the success rate: black corresponds to a 0\% successful recovery, white to a 100\% successful recovery, and gray to a rate between 0\% and 100\%. As Figure~\ref{phasetransition1} demonstrates, our algorithm outperforms the state-of-the-art methods at higher values of $s$. For lower values of $s$, our algorithm achieves slightly better results compared to ThWF, similar to the performances of SPARTA, CoPRAM, and HTP.

\vspace{-0.01cm}

\begin{figure*}[!htb]
\centering
\begin{subfigure}[t]{0.31\textwidth}
        \centering
        \includegraphics[scale=.21]{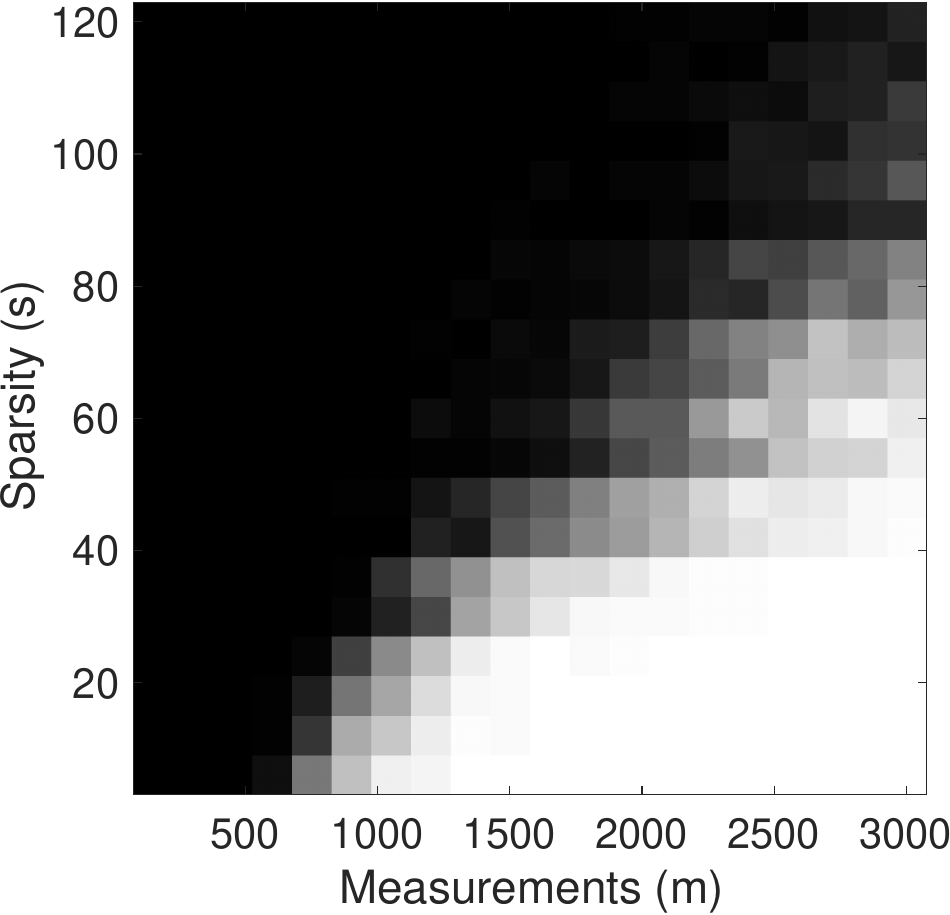}
        \caption{ThWF}
    \end{subfigure}
      \begin{subfigure}[t]{0.31\textwidth}
        \centering
        \includegraphics[scale=.21]{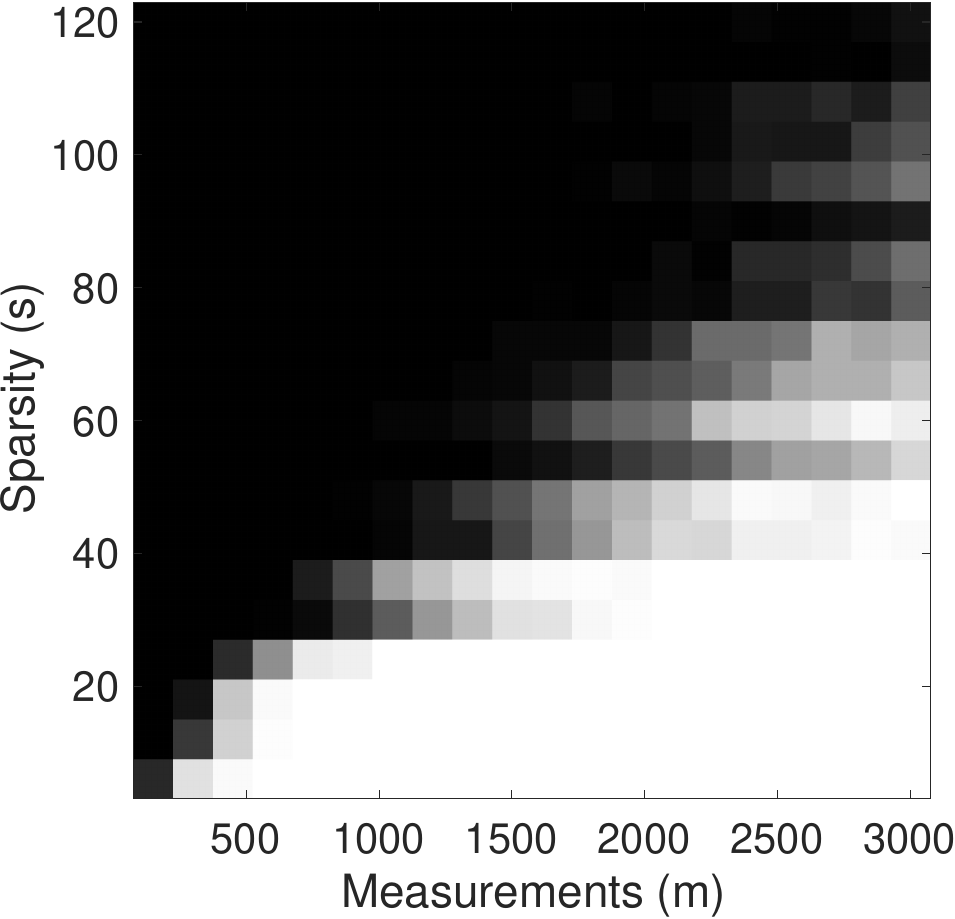}
        \caption{SPARTA}
    \end{subfigure}
    \begin{subfigure}[t]{0.31\textwidth}
        \centering
        \includegraphics[scale=.21]{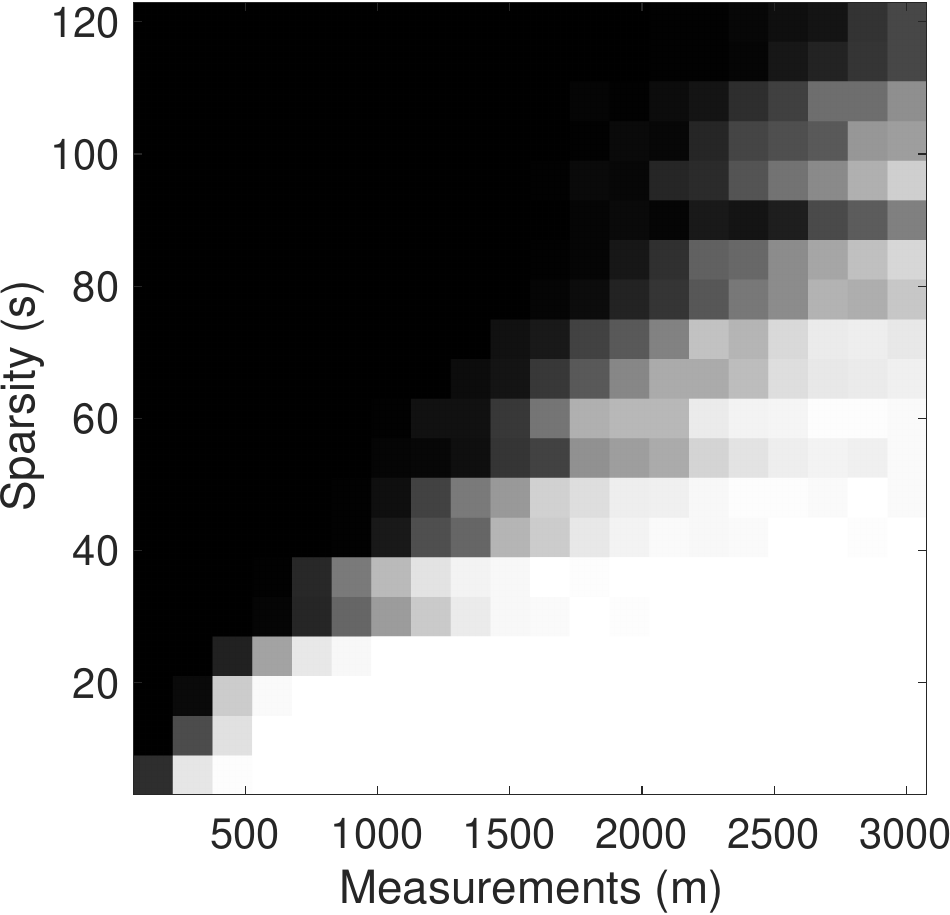}
        \caption{CoPRAM}
    \end{subfigure}   
    \\ \vspace{0.1cm}
 \begin{subfigure}[t]{0.31\textwidth}
        \centering
        \includegraphics[scale=.21]{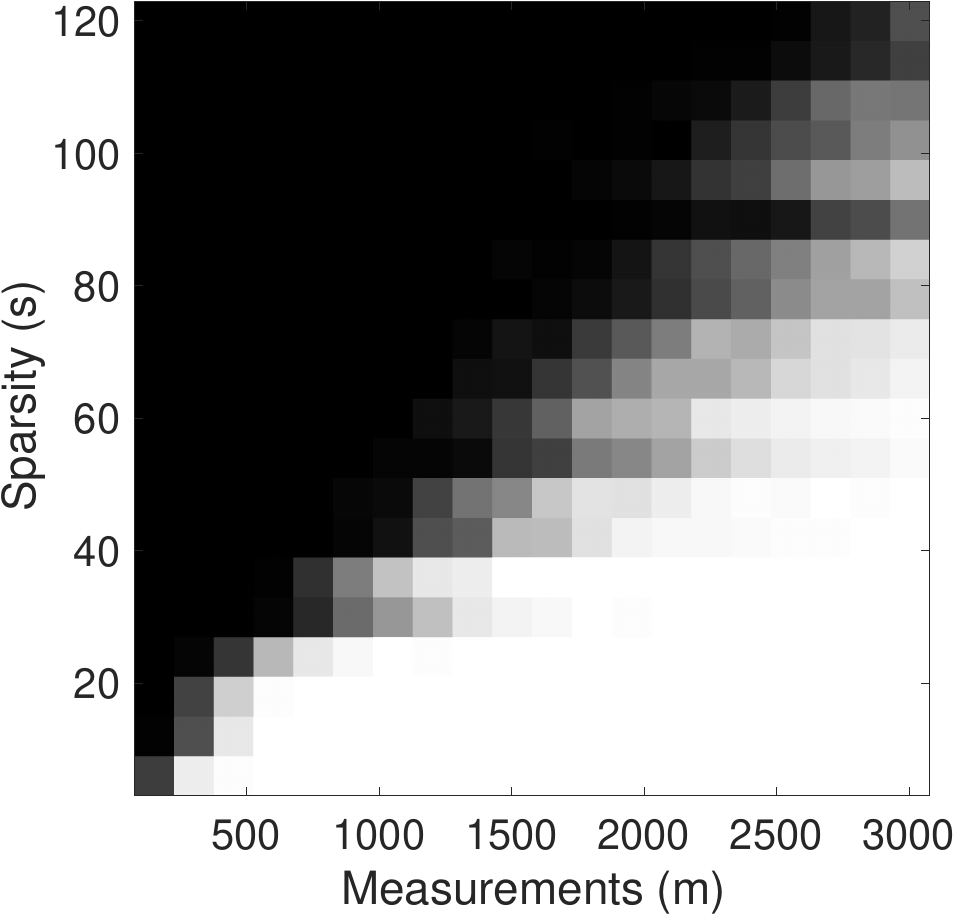}
        \caption{HTP}
    \end{subfigure}   
 \begin{subfigure}[t]{0.31\textwidth}
        \centering
        \includegraphics[scale=.21]{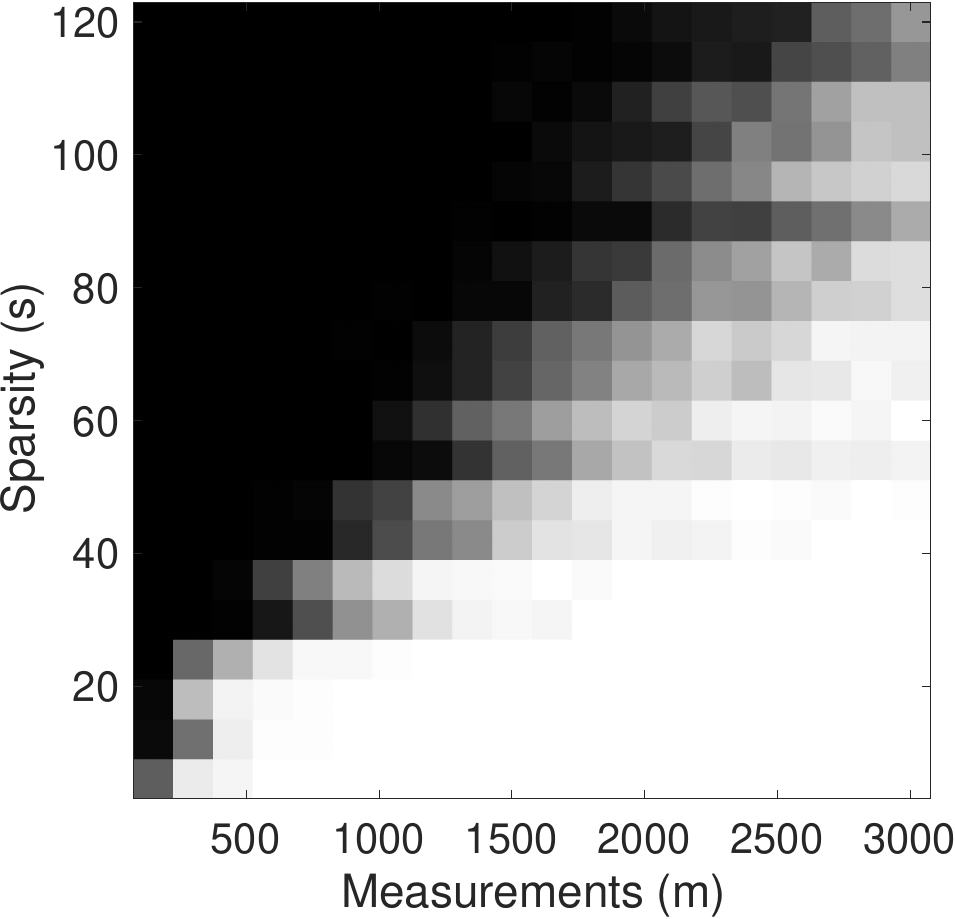}
        \caption{Proposed}
    \end{subfigure} 
    \vspace{-0.15cm}
\caption{Comparing phase transitions among various algorithms for a signal dimension of $n=3000$, across different sparsity levels and numbers of measurements. The successful recovery rates are indicated by varying grey levels in the corresponding block. Black signifies a $0\%$ successful recovery rate, white indicates $100\%$, and grey represents values between $0\%$ and $100\%$. A signal recovery is considered successful if its relative error is less than $10^{-3}$.}
\label{phasetransition1}
\end{figure*}

\vspace{-0.2cm}

\subsection{Performance Comparison with Unknown Sparsity}

  \vspace{-0.1cm}

In Table~\ref{tab:sparsity_comparison}, we consider scenarios with unknown sparsity. We input various sparsity levels into each algorithm and compare the success rates of various algorithms in recovering the signal. In these experiments, the underlying signal has a sparsity of 30, a signal dimension of 3000, and the number of measurements is 2000. We excluded ThWF from the comparison because it does not require input sparsity. As demonstrated in Table~\ref{tab:sparsity_comparison}, our proposed algorithm exhibits significant robustness to changes in input sparsity levels.

\begin{table}[ht]
    \centering
    \small
    \caption{Comparison of success rates of various algorithms with unknown signal sparsity. The underlying signal has a sparsity of 30, a dimension of 3000, and a total of 2000 samples.}
    \vspace{-0.15cm}
    \label{tab:sparsity_comparison}
    \renewcommand{\arraystretch}{1.25}
    \begin{tabular}{l@{\hspace{3em}}c@{\hspace{2.25em}}c@{\hspace{2.25em}}c@{\hspace{2.25em}}c@{\hspace{2.25em}}c@{\hspace{2.25em}}c@{\hspace{2.25em}}c@{\hspace{2.25em}}c@{\hspace{2.25em}}c@{\hspace{2.25em}}c}
        \toprule
        Input sparsity & 10 & 20 & 30 & 50 & 70 & 100 & 150 & 200 & 250 & 300\\
        \midrule
       CoPRAM  & 0 & 0 & 1 & 1 & 1 & 1 & 0.75 & 0.09 & 0 & 0 \\
        \midrule
      HTP  & 0 & 0 & 1 & 1 & 1 & 1 & 0.71 & 0.22 & 0.02 & 0.01 \\
        \midrule
        SPARTA  & 0 & 0 & 1 & 1 & 1 & 0.09 & 0 & 0 & 0 & 0 \\
         \midrule
     Proposed   & 0 & 0 & 1 & 1 & 1 & 1 & 0.93 & 0.85 & 0.76 & 0.66 \\
        \bottomrule
    \end{tabular}
\end{table}

\subsection{Noise Robustness}

We investigate the impact of the noise level of measurements on the recovery error of our proposed algorithm.  Noise levels are represented by signal-to-noise ratios (SNR), \textit{i.e.}, $\Vert \bm x^\natural \Vert / \sigma$, where $\bm x^\natural$ is the ground truth signal and $\sigma$ is a parameter determining the standard deviation of Gaussian noise, as defined in \eqref{exp_model}. We set the dimension of the ground truth signal to $n=2000$, the sparsity level to $s = 20$, and the number of measurements to $m=1500$.

Figure \ref{fig:noise} depicts the relative error of the proposed algorithm as a function of signal-to-noise ratios (SNR) in dB. We observe a nearly linear decrease in the relative error as the SNR increases, implying that its recovery error can be controlled by a multiple of the measurement noise level. This result demonstrates the robustness of our algorithm against noise in measurements. Additionally, the small error bars shown in Figure \ref{fig:noise} emphasize the low variability of recovery errors of our algorithm.

\begin{figure}[!htb]
\centering
\includegraphics[width=5cm]{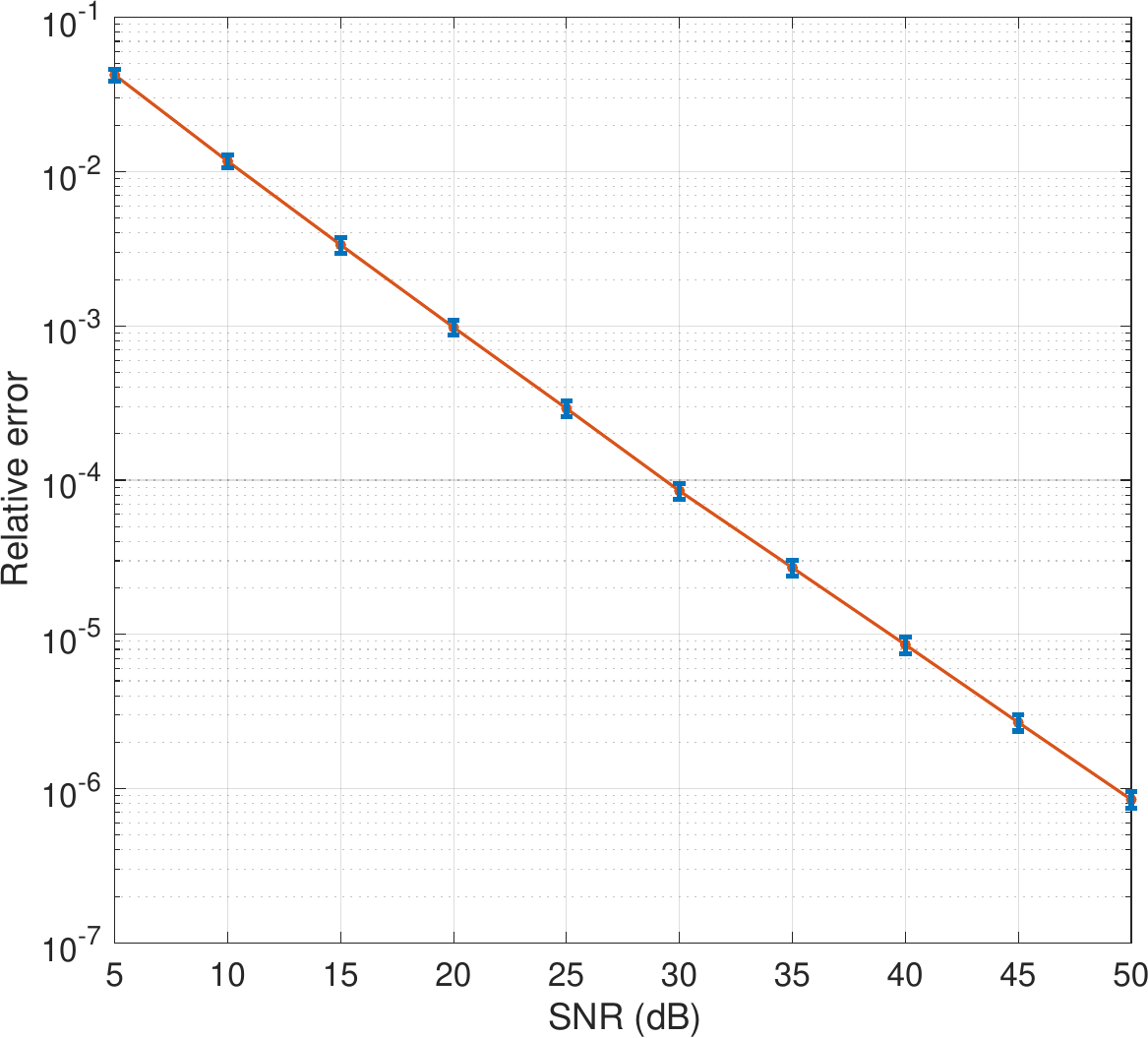}
\caption{Robustness of the proposed algorithm against additive Gaussian noise. The $y$-axis represents the relative error of the proposed algorithm, while the $x$-axis corresponds to the signal-to-noise ratios (SNR) of the measurements. The results are averaged over 100 independent trial runs, with error bars indicating the standard deviation of the recovery error. We set $n=2000$, $m=1500$, and $s=20$.}
\label{fig:noise}
\end{figure}

\vspace{-0.2cm}

\section{Proofs}\label{proofs}

We provide proofs for Theorem~\ref{thmnhtp}, the recovery guarantee for noise-free measurements, and Theorem \ref{thm2}, the recovery guarantee for noisy measurements. Technical lemmas are presented in Appendix~\ref{keylemmas}, which serve as the foundation for proving Theorems~\ref{thmnhtp} and~\ref{thm2}, and their proofs are available in Appendix~\ref{sec-proof-lemma}. Subsequently, the proofs of Theorems~\ref{thmnhtp} and~\ref{thm2} can be found in Appendix~\ref{proof-theorem}.

For a more concise representation, we arrange the sampling vectors and observations as follows:
\begin{equation}\label{you4}
\bm{A}:=[\bm{a}_1 \ \bm{a}_2 \ \cdots \ \bm{a}_m]^T, \quad \bm{y}:=[y_1 \ y_2 \ \cdots \ y_m]^T, \quad \bm{z}:=[z_1 \ z_2 \ \cdots \ z_m]^T,
\end{equation}
where $z_i = \sqrt{y_i}$, $i = 1, \ldots, m$. We denote $\bm a_{i,\mathcal{S}}$ as the row vector that represents the $i$-th row of $\bm A$, retaining only the entries indexed by $\mathcal{S}$.

\subsection{Technical Lemmas}\label{keylemmas}

The following lemma serves as an extension of Lemma 7.4 found in \citep{candes2015phase}.

\vspace{0.1cm}

\begin{lemma}\label{twflemmaA61}
For any $s$-sparse vector $\bm{x}^{\natural}\in\mathbb{R}^n$ with support $\mathcal{S}^{\natural}=\mathrm{supp}(\bm{x}^{\natural})$, let $\{\bm{a}_i\}_{i=1}^m$ be identically and independently distributed as $\mathcal{N}(\bm{0},\bm{I}_n)$ and define the matrix $\bm{A}$ as in \eqref{you4}. For any subset $\mathcal{S}\subseteq [n]$ such that $\mathrm{supp}(\bm{x}^{\natural})\subseteq \mathcal{S}$ and $ |\mathcal{S}|\leq r$ for some integer $ r\leq n$. With probability at least $1 - m^{-4} - c_a\delta^{-2}m^{-1} - c_b\exp \left(-c_c\delta^2m/ \log m \right)$, we have
\begin{equation}
\bigg\|\frac{1}{m}\sum_{i=1}^m |\bm{a}_{i,\mathcal{S}}^T\bm{x}^{\natural}|^2 \bm{a}_{i,\mathcal{S}}\bm{a}_{i,\mathcal{S}}^T - \left(\|\bm{x}^{\natural}\|^2(\bm{I}_n)_{\mathcal{S},{\mathcal{S}}} + 2\bm{x}_{\mathcal{S}}^{\natural}(\bm{x}_{\mathcal{S}}^{\natural})^T \right) \bigg\| \leq \delta\|\bm{x}^{\natural}\|^2
\label{eqtwflemmaA61}
\end{equation}
provided $m\geq C(\delta)r\log (n/r)$. Here $C(\delta)$ is a constant depending on $\delta$, $c_a, c_b$ and $c_c$ are positive absolute constants, and $\bm a_{i,\mathcal{S}}$ represents the $i$-th row of $\bm A$, retaining only the entries indexed by $\mathcal{S}$.
\end{lemma}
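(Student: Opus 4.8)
The plan is to establish the claimed spectral concentration by combining a covering/net argument over the (exponentially many but low-dimensional) set of admissible coordinate subspaces with a per-subspace concentration bound borrowed from the dense phase retrieval analysis. First I would observe that it suffices to prove the bound for a fixed $r$-element subset $\mathcal{S}$ containing $\mathcal{S}^{\natural}$, and then take a union bound over all $\binom{n}{r} \leq (en/r)^r$ such subsets; this is where the $\log(n/r)$ factor in the sample complexity $m \geq C(\delta) r \log(n/r)$ enters, since each subspace event must hold with probability exponentially small in $r\log(n/r)$ to survive the union. For a fixed $\mathcal{S}$, the matrix $\frac{1}{m}\sum_{i=1}^m |\bm{a}_{i,\mathcal{S}}^T\bm{x}^{\natural}|^2 \bm{a}_{i,\mathcal{S}}\bm{a}_{i,\mathcal{S}}^T$ is exactly the object controlled in Lemma 7.4 of \citep{candes2015phase} but restricted to the $|\mathcal{S}|$-dimensional coordinate subspace: note $\bm{a}_{i,\mathcal{S}} \sim \mathcal{N}(\bm 0, \bm I_{|\mathcal{S}|})$, and $\bm{a}_{i,\mathcal{S}}^T \bm{x}^{\natural}_{\mathcal{S}} = \bm{a}_i^T \bm{x}^{\natural}$ since $\mathrm{supp}(\bm{x}^{\natural}) \subseteq \mathcal{S}$, so the mean is precisely $\|\bm{x}^{\natural}\|^2 (\bm I_n)_{\mathcal{S},\mathcal{S}} + 2 \bm{x}^{\natural}_{\mathcal{S}}(\bm{x}^{\natural}_{\mathcal{S}})^T$.

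Next I would invoke (the proof of) Lemma 7.4 of \citep{candes2015phase}, which shows that for a $d$-dimensional ambient space this empirical matrix concentrates around its mean in spectral norm to within $\delta \|\bm{x}^{\natural}\|^2$ with failure probability at most $m^{-4} + c_a \delta^{-2} m^{-1} + c_b \exp(-c_c \delta^2 m / \log m)$, provided $m \gtrsim d$. The argument there splits the summand's heavy tail (truncating $|\bm{a}_i^T \bm{x}^{\natural}|^2$ at level $\sim \|\bm{x}^{\natural}\|^2 \log m$) from a bounded bulk handled by a matrix Bernstein / $\varepsilon$-net bound over the sphere $S^{d-1}$, with the $\log m$ losses coming from the truncation level. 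Setting $d = |\mathcal{S}| \leq r$ reproduces exactly the per-subspace probability in the statement. The only adjustment is that the $\exp(-c_c \delta^2 m/\log m)$ term, which originally absorbs the $\varepsilon$-net cardinality $\sim 9^d$ over the sphere, must now also absorb the $(en/r)^r$ subspace count; this forces $m \geq C(\delta) r \log(n/r)$ rather than merely $m \geq C(\delta) r$, and one checks that with this choice $c_b \binom{n}{r} \exp(-c_c \delta^2 m/\log m)$ is still of the form $c_b \exp(-c_c' \delta^2 m/\log m)$ after adjusting constants.

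Concretely, the steps in order are: (1) reduce to fixed $\mathcal{S} \supseteq \mathcal{S}^{\natural}$ with $|\mathcal{S}| \leq r$ and identify the mean; (2) apply the dense-case concentration (Lemma 7.4 of \citep{candes2015phase}) in the $|\mathcal{S}|$-dimensional subspace, obtaining the stated per-subspace bound under $m \geq C(\delta)|\mathcal{S}|$; (3) union-bound over the $\binom{n}{r}$ choices of $\mathcal{S}$, and verify that strengthening to $m \geq C(\delta) r \log(n/r)$ keeps the total failure probability of the form $m^{-4} + c_a \delta^{-2} m^{-1} + c_b \exp(-c_c \delta^2 m/\log m)$ (possibly with relabeled absolute constants). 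The main obstacle is step (3): one must track how the $\log(n/r)$ factor in the sample complexity interacts with the $1/\log m$ loss inside the exponential concentration term, ensuring that the subspace union bound is genuinely absorbed rather than degrading the probability, and must be careful that the polynomial terms $m^{-4}$ and $\delta^{-2} m^{-1}$ — which do \emph{not} improve under a union bound — are in fact stated per-subspace in a way consistent with the final claim (this is why the lemma's conclusion keeps those terms as-is, implicitly applying them to the single relevant "good" subspace, or alternatively one slightly strengthens the per-subspace polynomial tail so the union is still polynomial). A careful reading of the \citep{candes2015phase} proof to extract the $d$-dependence explicitly is the bulk of the work; everything else is bookkeeping.
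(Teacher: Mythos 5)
Your high-level plan (restrict to a coordinate subspace, identify the mean, invoke a dense-case concentration bound, and let the $m\geq C(\delta)r\log(n/r)$ sample size absorb a union bound over the $\binom{n}{r}\leq (en/r)^r$ subsets) is the right general shape, and you correctly identify the mean and the role of the $\log(n/r)$ factor. But the obstacle you flag at the end is not a bookkeeping issue — it is a genuine gap, and neither of your proposed fixes closes it. The per-subspace failure probability contains the polynomial terms $m^{-4}+c_a\delta^{-2}m^{-1}$, which come from Chebyshev-type arguments and cannot be made exponentially small in $r\log(n/r)$; multiplying them by $(en/r)^r$ destroys the bound. Your first fix ("apply them to the single relevant good subspace") does not work because the lemma must hold uniformly over all admissible $\mathcal{S}$ — in the convergence analysis it is applied to the data-dependent supports $\mathcal{S}_{k+1}$, so no single subspace can be fixed in advance. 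Your second fix ("strengthen the per-subspace polynomial tail") would require replacing a variance-based $m^{-1}$ tail by one of order $(r/en)^r m^{-1}$, which Chebyshev cannot deliver.

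The paper's proof resolves exactly this by \emph{not} rerunning the dense argument in each subspace. It splits the matrix into blocks according to $\mathcal{S}^{\natural}$ and $\mathcal{T}=\mathcal{S}\setminus\mathcal{S}^{\natural}$. The $(\mathcal{S}^{\natural},\mathcal{S}^{\natural})$ block is a single fixed-subset event handled by Lemma~\ref{twflemmaA6}, so it needs no union bound. For the blocks involving $\mathcal{T}$, it rotates $\bm{x}^{\natural}$ to $\bm{e}_1$ so that $a_i(1)=\bm{a}_i^T\bm{x}^{\natural}$ is independent of $\bm{a}_{i,\mathcal{T}}$, and then \emph{conditions} on $\{a_i(1)\}$: the cross block $\frac{1}{m}\sum_i (a_i(1))^3\bm{a}_{i,\mathcal{T}}^T$ and the diagonal block $\frac{1}{m}\sum_i |a_i(1)|^2(\bm{a}_{i,\mathcal{T}}\bm{a}_{i,\mathcal{T}}^T-\bm{I}_{|\mathcal{T}|})$ obey conditional Hoeffding/Bernstein bounds (Lemmas~\ref{lemma29} and~\ref{lemma30}) with \emph{exponential} tails, which is the only place the union bound over $\mathbb{W}_r^n$ is taken. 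All polynomial-probability events (the moment bounds $\sum_i|a_i(1)|^6\leq 20m$, $\sum_i|a_i(1)|^4\leq 10m$, $\max_i|a_i(1)|^2\leq 10\log m$, and $|\frac{1}{m}\sum_i|a_i(1)|^2-1|\leq\delta/4$) concern only the scalars $a_i(1)$ and are therefore subset-independent single events. This separation — subset-dependent randomness gets exponential tails, polynomial tails are confined to subset-independent quantities — is the essential idea your proposal is missing; a salvage of your route would require verifying that every polynomial-tail event inside the Lemma 7.4 proof of \citep{candes2015phase} depends only on $\bm{a}_i^T\bm{x}^{\natural}$, which is in effect reconstructing the paper's decomposition.
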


\vspace{0.1cm}

The following lemma is derived through the application of Hölder's inequality and Lemma~\ref{twflemmaA5}:

\vspace{0.1cm}

\begin{lemma}\label{lip}
Given two vectors $\bm{x}, \bm{z} \in \mathbb{R}^n$ each with sparsity no larger than $s$ and two subsets $\mathcal{S}, \mathcal{T} \subseteq [n]$. Define the subset $\mathcal{R}:= \mathcal{S}\cup \mathcal{T}\cup \mathrm{supp}(\bm{x})\cup \mathrm{supp}(\bm{z})$. Under the event \eqref{normA} with support $\mathcal{R}$, there holds
\begin{equation}
\left\|\nabla_{\mathcal{S}, \mathcal{T}}^2f_I(\bm{x}) - \nabla_{\mathcal{S}, \mathcal{T}}^2f_I(\bm{z})\right\|
\leq \frac{3}{m} \left((3m)^{1/4} + (|\mathcal{R}|)^{1/2} + 2\sqrt{\log m} \right)^4\|\bm{x}+\bm{z}\|\|\bm{x}-\bm{z}\|.
\label{eqlip}
\end{equation}
\end{lemma}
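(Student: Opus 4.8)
\textbf{Proof proposal for Lemma~\ref{lip}.}

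The plan is to reduce the bound on the Hessian difference to a statement about a scalar quadratic form built from the data matrix $\bm{A}$, and then control that quadratic form using the already-stated spectral bound in event~\eqref{normA}. First I would write out the Hessian of $f_I$ explicitly. Since $f_I(\bm{x}) = \frac{1}{4m}\sum_{i=1}^m(|\langle\bm{a}_i,\bm{x}\rangle|^2 - y_i)^2$, a direct computation gives $\nabla^2 f_I(\bm{x}) = \frac{1}{m}\sum_{i=1}^m\big(3\langle\bm{a}_i,\bm{x}\rangle^2 - y_i\big)\bm{a}_i\bm{a}_i^T$. Taking the difference at $\bm{x}$ and $\bm{z}$, the $y_i$ terms cancel and we get
\begin{equation*}
\nabla^2 f_I(\bm{x}) - \nabla^2 f_I(\bm{z}) = \frac{3}{m}\sum_{i=1}^m\big(\langle\bm{a}_i,\bm{x}\rangle^2 - \langle\bm{a}_i,\bm{z}\rangle^2\big)\bm{a}_i\bm{a}_i^T = \frac{3}{m}\sum_{i=1}^m\langle\bm{a}_i,\bm{x}+\bm{z}\rangle\langle\bm{a}_i,\bm{x}-\bm{z}\rangle\,\bm{a}_i\bm{a}_i^T,
\end{equation*}
using the factorization $a^2-b^2=(a+b)(a-b)$. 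Restricting rows/columns to $\mathcal{S}$ and $\mathcal{T}$ only keeps the corresponding subvectors $\bm{a}_{i,\mathcal{S}}$, $\bm{a}_{i,\mathcal{T}}$ of each $\bm{a}_i$.

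Next I would bound the spectral norm of this restricted matrix by testing against unit vectors $\bm{u}$ (supported on $\mathcal{S}$) and $\bm{v}$ (supported on $\mathcal{T}$): $\|\nabla^2_{\mathcal{S},\mathcal{T}}f_I(\bm{x}) - \nabla^2_{\mathcal{S},\mathcal{T}}f_I(\bm{z})\| = \sup_{\|\bm{u}\|=\|\bm{v}\|=1}\frac{3}{m}\sum_{i=1}^m\langle\bm{a}_i,\bm{x}+\bm{z}\rangle\langle\bm{a}_i,\bm{x}-\bm{z}\rangle\langle\bm{a}_i,\bm{u}\rangle\langle\bm{a}_i,\bm{v}\rangle$. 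This is a sum of products of four linear forms in $\bm{a}_i$. The natural tool is Hölder's inequality with four exponents equal to $4$: each factor $\sum_i|\langle\bm{a}_i,\bm{w}\rangle|^4$ is controlled (after taking the fourth root) by $\|\bm{A}_{\mathcal{R}}\|_{2\to 4}\|\bm{w}\|$ for any $\bm{w}$ supported in $\mathcal{R}$, where $\bm{A}_{\mathcal{R}}$ is the submatrix of $\bm{A}$ with columns in $\mathcal{R}$ and $\|\cdot\|_{2\to4}$ is the induced operator norm introduced in the Notation. Crucially all four vectors $\bm{x}+\bm{z}$, $\bm{x}-\bm{z}$, $\bm{u}$, $\bm{v}$ have support inside $\mathcal{R}=\mathcal{S}\cup\mathcal{T}\cup\mathrm{supp}(\bm{x})\cup\mathrm{supp}(\bm{z})$, which is exactly why $\mathcal{R}$ is defined this way. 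Applying Hölder therefore yields
\begin{equation*}
\left\|\nabla^2_{\mathcal{S},\mathcal{T}}f_I(\bm{x}) - \nabla^2_{\mathcal{S},\mathcal{T}}f_I(\bm{z})\right\| \leq \frac{3}{m}\,\|\bm{A}_{\mathcal{R}}\|_{2\to4}^4\,\|\bm{x}+\bm{z}\|\,\|\bm{x}-\bm{z}\|,
\end{equation*}
after noting $\|\bm{u}\|=\|\bm{v}\|=1$.

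Finally I would invoke event~\eqref{normA} (and Lemma~\ref{twflemmaA5}, as the statement indicates) to replace $\|\bm{A}_{\mathcal{R}}\|_{2\to4}$ by the concrete quantity $(3m)^{1/4} + |\mathcal{R}|^{1/2} + 2\sqrt{\log m}$ — this is the standard bound on $\sup_{\|\bm{w}\|=1}\big(\sum_i\langle\bm{a}_{i,\mathcal{R}},\bm{w}\rangle^4\big)^{1/4}$ for Gaussian rows restricted to a fixed support of size $|\mathcal{R}|$, which is precisely what event~\eqref{normA} asserts. Substituting gives~\eqref{eqlip}. The main obstacle here is bookkeeping rather than depth: one must be careful that the $2\to 4$ operator norm used in the Hölder step is taken over the common support $\mathcal{R}$ (so that the same quantity appears for all four linear forms), and that the constant $3$ from the Hessian of the quartic loss is carried correctly through to the front of~\eqref{eqlip}. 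Everything else is a routine application of Hölder's inequality and the previously established operator-norm concentration bound.
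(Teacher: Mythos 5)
Your proposal is correct and follows essentially the same route as the paper's proof: explicit computation of the Hessian difference so the $y_i$ terms cancel, the factorization $a^2-b^2=(a+b)(a-b)$, a four-fold Hölder step controlled by $\|\bm{A}_{\mathcal{R}}\|_{2\to4}^4$ over the common support $\mathcal{R}$, and the bound from Lemma~\ref{twflemmaA5}. No gaps.
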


\vspace{0.1cm}

The following inequalities in Lemma~\ref{lemma8} can be derived using Lemmas~\ref{twflemmaA61}, \ref{lip}, \ref{expe}, and \ref{weyl}.

\vspace{0.1cm}

\begin{lemma}\label{lemma8}
Let $\bm{x}^{\natural} \in \mathbb{R}^n$ be any signal with sparsity $\|\bm{x}^{\natural}\|_0 \leq s$ and support $\mathcal{S}^{\natural}$. Let $\{\bm{a}_i\}_{i=1}^m$ be random Gaussian vectors identically and independently distributed as $\mathcal{N}(\bm{0},\bm{I}_n)$. Define $\bm{A}$, $\bm{y}$ and $f_I(\bm{x})$ as in \eqref{you4} and \eqref{sparseintensity} respectively. Given two subsets $\mathcal{S}, \mathcal{T}\subseteq [n]$ satisfying $|\mathcal{S}|\leq s$ and $|\mathcal{T}|\leq s$. Then if $m\geq 30s^2$, for any $s$-sparse vector $ \bm{x}\in \mathbb{R}^n$ with $\mathrm{supp}(\bm{x})\subseteq \mathcal{T}$ and obeying $\mathrm{dist}(\bm{x},\bm{x}^{\natural})\leq \gamma\|\bm{x}^{\natural}\| $ with $0<\gamma\leq 0.1$, under events \eqref{normA} and \eqref{eqtwflemmaA61}, the following inequalities hold:

(i) 
\begin{equation}
l_1\|\bm{u}\| \leq \left\|\nabla_{\mathcal{S},\mathcal{S}}^2 f_I(\bm{x})\bm{u}\right\| \leq l_2 \|\bm{u}\|, \quad \forall \, \bm{u}\in \mathbb{R}^{|\mathcal{S}|},
\label{eq101}
\end{equation}
where $l_1 := \left(2-2\delta-10\gamma(2+\gamma)\right)\|\bm{x}^{\natural}\|^2$ and $l_2 := \left(6+2\delta+10\gamma(2+\gamma)\right)\|\bm{x}^{\natural}\|^2$.

(ii) 
\begin{equation}
\left\|\nabla_{\mathcal{S}, \mathcal{S}^{\natural}\backslash\mathcal{S}}^2 f_I(\bm{x})\right\| \leq l_3:= \left(2+2\delta+10\gamma(2+\gamma)\right)\|\bm{x}^{\natural}\|^2.
\label{eq102}
\end{equation}
\end{lemma}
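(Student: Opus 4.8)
The plan is to compute $\nabla^2 f_I(\bm{x})$ explicitly and compare it, block by block, to its expectation, which up to sparsity restrictions is $\|\bm{x}^{\natural}\|^2\bm{I}+2\bm{x}^{\natural}(\bm{x}^{\natural})^T$ at the ground truth. A direct differentiation of $f_I$ in \eqref{sparseintensity} gives
\begin{equation*}
\nabla^2 f_I(\bm{x}) = \frac{1}{m}\sum_{i=1}^m \left(3|\bm{a}_i^T\bm{x}|^2 - y_i\right)\bm{a}_i\bm{a}_i^T,
\end{equation*}
so that, writing $\bm{h} = \bm{x}-\bm{x}^{\natural}$ (after possibly flipping the global sign so that $\|\bm{h}\|=\mathrm{dist}(\bm{x},\bm{x}^{\natural})\le\gamma\|\bm{x}^{\natural}\|$),
\begin{equation*}
\nabla^2 f_I(\bm{x}) = \frac{1}{m}\sum_{i=1}^m \left(2|\bm{a}_i^T\bm{x}^{\natural}|^2\right)\bm{a}_i\bm{a}_i^T + \frac{1}{m}\sum_{i=1}^m \left(3|\bm{a}_i^T\bm{x}|^2 - |\bm{a}_i^T\bm{x}^{\natural}|^2 - 2|\bm{a}_i^T\bm{x}^{\natural}|^2\right)\bm{a}_i\bm{a}_i^T.
\end{equation*}
The first term, restricted to the index sets in question and using $\mathrm{supp}(\bm{x}^{\natural})\subseteq\mathcal{S}\cup\mathcal{S}^{\natural}$ and similarly for $\mathcal{T}$, is handled by Lemma~\ref{twflemmaA61} applied with $r = O(s)$ (the union of the relevant supports has size at most $3s$), giving the main term $2(\|\bm{x}^{\natural}\|^2(\bm{I})_{\mathcal{S},\mathcal{S}} + 2\bm{x}^{\natural}_{\mathcal{S}}(\bm{x}^{\natural}_{\mathcal{S}})^T)$ up to a $2\delta\|\bm{x}^{\natural}\|^2$ perturbation in spectral norm. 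Combined with the observation that $\|\bm{x}^{\natural}_{\mathcal{S}}(\bm{x}^{\natural}_{\mathcal{S}})^T\|\le\|\bm{x}^{\natural}\|^2$, this produces eigenvalues of the restricted expectation between $2\|\bm{x}^{\natural}\|^2$ and $6\|\bm{x}^{\natural}\|^2$, which is exactly where the ``$2$'' and ``$6$'' in $l_1, l_2$ (and the ``$2$'' in $l_3$) originate.

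The second step is to bound the remaining perturbation term, which I expect to be the main technical point. The residual $3|\bm{a}_i^T\bm{x}|^2 - 3|\bm{a}_i^T\bm{x}^{\natural}|^2$ can be written as $3(\bm{a}_i^T(\bm{x}+\bm{x}^{\natural}))(\bm{a}_i^T\bm{h})$, so the residual Hessian block is
\begin{equation*}
\frac{1}{m}\sum_{i=1}^m 3\left(\bm{a}_i^T(\bm{x}+\bm{x}^{\natural})\right)\left(\bm{a}_i^T\bm{h}\right)\bm{a}_{i,\mathcal{S}}\bm{a}_{i,\mathcal{T}}^T,
\end{equation*}
which is exactly of the form $\nabla^2_{\mathcal{S},\mathcal{T}}f_I(\bm{x}) - \nabla^2_{\mathcal{S},\mathcal{T}}f_I(\bm{x}^{\natural})$ shifted, i.e. it is controlled by the Lipschitz-type bound of Lemma~\ref{lip} with the two points $\bm{x}$ and $\bm{x}^{\natural}$. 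Feeding $\|\bm{x}+\bm{x}^{\natural}\|\le (2+\gamma)\|\bm{x}^{\natural}\|$ and $\|\bm{x}-\bm{x}^{\natural}\|\le\gamma\|\bm{x}^{\natural}\|$ into \eqref{eqlip}, and using $m\ge 30s^2$ together with $|\mathcal{R}|\le 3s$ to absorb the prefactor $\tfrac{3}{m}((3m)^{1/4}+|\mathcal{R}|^{1/2}+2\sqrt{\log m})^4$ into a small absolute constant (this is the role of the $m\ge 30s^2$ hypothesis — it makes $(3m)^{1/4}$ dominate and the fourth power divided by $m$ stay bounded by roughly $5$), one gets a bound of the form $10\gamma(2+\gamma)\|\bm{x}^{\natural}\|^2$ on the spectral norm of the residual block. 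I would need to check the arithmetic of that constant carefully: expand $\tfrac{3}{m}((3m)^{1/4}+|\mathcal{R}|^{1/2}+2\sqrt{\log m})^4$ using $|\mathcal{R}|\le 3s\le\sqrt{3m/30}\cdot\sqrt{3s}$-type estimates and $\sqrt{\log m}$ small compared to $m^{1/4}$, to confirm the effective Lipschitz constant is at most $10/\|\bm{x}^{\natural}\|^{?}$ — actually the clean statement is that the prefactor times $\|\bm x+\bm x^\natural\|\|\bm x -\bm x^\natural\|$ is at most $10\gamma(2+\gamma)\|\bm x^\natural\|^2$.

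Finally I would assemble the pieces: for part (i), $\nabla^2_{\mathcal{S},\mathcal{S}}f_I(\bm{x})$ is the sum of the restricted expectation (eigenvalues in $[(2-0)\|\bm{x}^{\natural}\|^2,\,6\|\bm{x}^{\natural}\|^2]$), a $2\delta\|\bm{x}^{\natural}\|^2$ spectral perturbation from Lemma~\ref{twflemmaA61}, and a $10\gamma(2+\gamma)\|\bm{x}^{\natural}\|^2$ perturbation from the residual; by Weyl's inequality (Lemma~\ref{weyl}) the eigenvalues of $\nabla^2_{\mathcal{S},\mathcal{S}}f_I(\bm{x})$ lie in $[l_1, l_2]$ with $l_1 = (2-2\delta-10\gamma(2+\gamma))\|\bm{x}^{\natural}\|^2$ and $l_2 = (6+2\delta+10\gamma(2+\gamma))\|\bm{x}^{\natural}\|^2$, which gives \eqref{eq101} since for a symmetric matrix $\|\bm{M}\bm{u}\|$ is bounded below by $\lambda_{\min}(\bm{M})\|\bm{u}\|$ when $\lambda_{\min}>0$. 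For part (ii), $\nabla^2_{\mathcal{S},\mathcal{S}^{\natural}\setminus\mathcal{S}}f_I(\bm{x}^{\natural})$ equals the off-diagonal block of $\|\bm{x}^{\natural}\|^2\bm{I} + 2\bm{x}^{\natural}(\bm{x}^{\natural})^T$ (plus the $\delta$ error), whose spectral norm is at most $2\|\bm{x}^{\natural}\|^2$ because the identity block vanishes off the diagonal of disjoint index sets and $\|2\bm{x}^{\natural}_{\mathcal{S}}(\bm{x}^{\natural}_{\mathcal{S}^{\natural}\setminus\mathcal{S}})^T\|\le 2\|\bm{x}^{\natural}\|^2$; adding the $2\delta\|\bm{x}^{\natural}\|^2$ and $10\gamma(2+\gamma)\|\bm{x}^{\natural}\|^2$ perturbations yields $l_3 = (2+2\delta+10\gamma(2+\gamma))\|\bm{x}^{\natural}\|^2$. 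The only delicate spot, as noted, is tracking the absolute constant $10$ through the Lipschitz bound; everything else is bookkeeping with Weyl's inequality and the triangle inequality.
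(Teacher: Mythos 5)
Your proof of part (i) follows essentially the same route as the paper: decompose $\nabla^2_{\mathcal{S},\mathcal{S}}f_I(\bm{x})$ into the expectation of the Hessian at $\bm{x}^{\natural}$ (Lemma~\ref{expe}), a concentration error controlled by Lemma~\ref{twflemmaA61} on the enlarged support $\mathcal{R}=\mathcal{S}\cup\mathcal{T}\cup\mathcal{S}^{\natural}$ with $|\mathcal{R}|\le 3s$, and a Lipschitz term controlled by Lemma~\ref{lip} (where $m\ge 30s^2$ collapses the prefactor to $10$), then finish with Weyl/interlacing. That is exactly the paper's argument, including the origin of the constants $2$, $6$, $2\delta$, and $10\gamma(2+\gamma)$.

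For part (ii) you take a genuinely different (and more direct) route. The paper observes that $\nabla^2_{\mathcal{S},\mathcal{S}^{\natural}\backslash\mathcal{S}}f_I(\bm{x})$ is an off-diagonal submatrix of the shifted matrix $\nabla^2_{\mathcal{R},\mathcal{R}}f_I(\bm{x})-4\|\bm{x}^{\natural}\|^2\bm{I}$ (the shift is invisible on off-diagonal blocks of disjoint index sets), and then bounds it by $\max\{l_2-4\|\bm{x}^{\natural}\|^2,\,4\|\bm{x}^{\natural}\|^2-l_1\}$ using part (i) and the fact that a submatrix's spectral norm never exceeds that of the full matrix; this recycles part (i) with no new computation. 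You instead bound the off-diagonal block of the expectation directly and add the two perturbations. This works, but your stated justification is slightly off on the bookkeeping: by Lemma~\ref{expe} the expectation is $2\bigl(\|\bm{x}^{\natural}\|^2\bm{I}+2\bm{x}^{\natural}(\bm{x}^{\natural})^T\bigr)$, so the relevant off-diagonal block is $4\,\bm{x}^{\natural}_{\mathcal{S}}(\bm{x}^{\natural}_{\mathcal{S}^{\natural}\backslash\mathcal{S}})^T$, not $2\,\bm{x}^{\natural}_{\mathcal{S}}(\bm{x}^{\natural}_{\mathcal{S}^{\natural}\backslash\mathcal{S}})^T$; to recover the constant $2\|\bm{x}^{\natural}\|^2$ you then need the disjointness of $\mathcal{S}$ and $\mathcal{S}^{\natural}\backslash\mathcal{S}$ together with $\|\bm{x}^{\natural}_{\mathcal{S}}\|\,\|\bm{x}^{\natural}_{\mathcal{S}^{\natural}\backslash\mathcal{S}}\|\le\tfrac{1}{2}\|\bm{x}^{\natural}\|^2$ (AM--GM on the split of $\|\bm{x}^{\natural}\|^2$), rather than the crude bound you quote. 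With that correction your constant matches $l_3$, so the argument is sound; the paper's shift-and-submatrix trick simply sidesteps this arithmetic.
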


\vspace{0.1cm}

The next lemma is adapted from Lemma 3 in \citep{cai2022sparse}.

\vspace{0.1cm}

\begin{lemma}\label{youlemma3}
Let $\{\bm{x}^k\}_{k \geq 1}$ be the sequence generated by the Algorithm \ref{MNHTP}. Let $\bm{z}^{k}:= \bm{z}\odot \mathrm{sgn}(\bm{A}\bm{x}^k)$. Assume $\|\bm{x}^k - \bm{x}^{\natural}\| \leq \gamma\|\bm{x}^{\natural}\|$. Then under the event \eqref{you13} with $r=s, 2s$ and the event \eqref{you15}, it holds that
\begin{equation*}
\frac{1}{m}\left\| \bm{A}_{\mathcal{S}_{k+1}}^T(\bm{z}^{k}-\bm{A}\bm{x}^{\natural})\right\| \leq \sqrt{C_{\gamma}(1+\delta_s)}\|\bm{x}^k - \bm{x}^{\natural}\|,
\end{equation*}
where $C_{\gamma} = \frac{4}{(1-\gamma^2)}(\epsilon_0 + \gamma\sqrt{\frac{21}{20}})^2$ with $\epsilon_0 = 10^{-3}$, and $\mathcal{S}_{k+1} = \mathrm{supp} (\mathcal{H}_s(\bm{x}^k-\eta \nabla f_A(\bm{x}^k)))$.
\end{lemma}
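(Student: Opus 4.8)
The plan is to exploit the special structure of the residual $\bm z^{k}-\bm A\bm x^{\natural}$: in the noise-free regime it is supported only on the indices where the current iterate disagrees in sign with the ground truth, and there it is controlled by $\bm x^{k}-\bm x^{\natural}$; combining this with a restricted-isometry bound for $\bm A_{\mathcal{S}_{k+1}}$ delivers the estimate. \emph{Step 1 (identify the residual).} Since $z_i=|\langle\bm a_i,\bm x^{\natural}\rangle|$, the $i$-th entry of $\bm z^{k}-\bm A\bm x^{\natural}$ equals $|\langle\bm a_i,\bm x^{\natural}\rangle|\,\mathrm{sgn}(\langle\bm a_i,\bm x^{k}\rangle)-\langle\bm a_i,\bm x^{\natural}\rangle$, which is $0$ when $\langle\bm a_i,\bm x^{k}\rangle$ and $\langle\bm a_i,\bm x^{\natural}\rangle$ share a sign and equals $-2\langle\bm a_i,\bm x^{\natural}\rangle$ otherwise. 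Writing $\mathcal{D}:=\{i:\mathrm{sgn}(\langle\bm a_i,\bm x^{k}\rangle)\neq\mathrm{sgn}(\langle\bm a_i,\bm x^{\natural}\rangle)\}$ we get $\bm z^{k}-\bm A\bm x^{\natural}=-2\sum_{i\in\mathcal{D}}\langle\bm a_i,\bm x^{\natural}\rangle\bm e_i$, and a sign flip forces $|\langle\bm a_i,\bm x^{\natural}\rangle|\le|\langle\bm a_i,\bm x^{k}-\bm x^{\natural}\rangle|$ for every $i\in\mathcal{D}$ --- the sense in which $\mathcal{D}$ is ``small'' when $\mathrm{dist}(\bm x^{k},\bm x^{\natural})$ is small.

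\emph{Step 2 (peel off restricted isometry).} Because the residual interacts with $\bm A$ only through the columns indexed by $\mathcal{S}_{k+1}$,
\[
\frac1m\bigl\|\bm A_{\mathcal{S}_{k+1}}^{T}(\bm z^{k}-\bm A\bm x^{\natural})\bigr\|
=\frac{2}{m}\sup_{\|\bm u\|=1,\ \mathrm{supp}(\bm u)\subseteq\mathcal{S}_{k+1}}\Bigl|\sum_{i\in\mathcal{D}}\langle\bm a_i,\bm u\rangle\langle\bm a_i,\bm x^{\natural}\rangle\Bigr|
\le\frac{2}{m}\Bigl(\sum_{i\in\mathcal{D}}\langle\bm a_i,\bm u\rangle^{2}\Bigr)^{1/2}\Bigl(\sum_{i\in\mathcal{D}}\langle\bm a_i,\bm x^{\natural}\rangle^{2}\Bigr)^{1/2}
\]
by Cauchy--Schwarz over $i\in\mathcal{D}$. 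Since $\bm u$ is $s$-sparse, the uniform restricted isometry in event~\eqref{you13} (invoked with $r=s$, and with $r=2s$ where $\bm x^{k}-\bm x^{\natural}$ enters) bounds the first factor via $\sum_{i\in\mathcal{D}}\langle\bm a_i,\bm u\rangle^{2}\le\|\bm A\bm u\|^{2}\le m(1+\delta_{s})$; being uniform over all subsets of size $\le s$, this is exactly what legitimises working with the data-dependent sets $\mathcal{S}_{k+1}$ and $\mathcal{D}$. What remains is to bound the truncated second moment $\tfrac1m\sum_{i\in\mathcal{D}}\langle\bm a_i,\bm x^{\natural}\rangle^{2}$.

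\emph{Step 3 (truncated-moment control and conclusion).} This last quantity is precisely what event~\eqref{you15} is designed to handle: adapting Lemma~3 of \citep{cai2022sparse}, it yields, once $\|\bm x^{k}-\bm x^{\natural}\|\le\gamma\|\bm x^{\natural}\|$, the bound $\tfrac1m\sum_{i\in\mathcal{D}}\langle\bm a_i,\bm x^{\natural}\rangle^{2}\le\tfrac{1}{1-\gamma^{2}}\bigl(\epsilon_{0}+\gamma\sqrt{21/20}\bigr)^{2}\|\bm x^{k}-\bm x^{\natural}\|^{2}$, the smallness coming from feeding $|\langle\bm a_i,\bm x^{\natural}\rangle|\le|\langle\bm a_i,\bm x^{k}-\bm x^{\natural}\rangle|$ on $\mathcal{D}$ into a concentration estimate for the truncated moment of the jointly Gaussian pair $(\langle\bm a_i,\bm x^{k}-\bm x^{\natural}\rangle,\langle\bm a_i,\bm x^{\natural}\rangle)$. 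Substituting into Step~2 gives
\[
\frac1m\bigl\|\bm A_{\mathcal{S}_{k+1}}^{T}(\bm z^{k}-\bm A\bm x^{\natural})\bigr\|
\le 2\sqrt{1+\delta_{s}}\;\frac{\epsilon_{0}+\gamma\sqrt{21/20}}{\sqrt{1-\gamma^{2}}}\,\|\bm x^{k}-\bm x^{\natural}\|
=\sqrt{C_{\gamma}(1+\delta_{s})}\,\|\bm x^{k}-\bm x^{\natural}\| ,
\]
which is the assertion.

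\emph{Main obstacle.} The genuinely analytic ingredient --- concentration of the truncated second moment over the sign-flip set $\mathcal{D}$ --- is encapsulated in event~\eqref{you15}, so inside this lemma the work is chiefly bookkeeping: stating the restricted-isometry and truncated-moment events uniformly enough to absorb the dependence of $\mathcal{S}_{k+1}$ and $\mathcal{D}$ on $\bm A$, and tracking constants so the $\tfrac{1}{1-\gamma^{2}}$ and $\sqrt{21/20}$ factors land exactly inside $C_{\gamma}$. The one thing I would watch is that the hypothesis is $\|\bm x^{k}-\bm x^{\natural}\|\le\gamma\|\bm x^{\natural}\|$ (the one-sided version, not merely $\mathrm{dist}(\bm x^{k},\bm x^{\natural})\le\gamma\|\bm x^{\natural}\|$); having this at iterate $k$ is what the surrounding induction --- the ``iterates stay in the neighbourhood'' step of the main convergence proof --- must supply.
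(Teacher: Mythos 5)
Your proof is correct and follows essentially the same route as the paper's: decompose the residual over the sign-disagreement set, control its squared norm via event \eqref{you15} (Lemma 25 of Soltanolkotabi), and pick up the $\sqrt{1+\delta_s}$ factor from the RIP event \eqref{you13}. Your Cauchy--Schwarz over $\mathcal{D}$ in Step~2 is just a rephrasing of the paper's use of the operator-norm bound $\|\tfrac{1}{\sqrt m}\bm{A}_{\mathcal{S}_{k+1}}^T\|\le\sqrt{1+\delta_s}$ applied to $\bm z^k-\bm A\bm x^{\natural}$, so the two arguments coincide.
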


\vspace{0.1cm}

The following lemma provides an upper bound on the estimation error for the vector obtained after one iteration of IHT, as described in \citep{blumensath2009iterative}. To make this paper self-contained, we include the details of the proof for the reader's convenience.

\vspace{0.1cm}

\begin{lemma}\label{youlemma4}
Given an $s$-sparse estimate $\bm{x}^k$ satisfying $\left\| \bm{x}^k - \bm{x}^{\natural} \right\| \leq \gamma \left\|\bm{x}^{\natural} \right\| $. Define the vector obtained by one iteration of IHT with stepsize $\eta$ to be
\begin{equation*}
\bm{u}^{k} := \mathcal{H}_s \left(\bm{x}^k - \eta \nabla f_A \left(\bm{x}^k \right) \right).
\end{equation*}
Under the RIP event \eqref{you13}, it holds that 
\begin{equation*}
\left\|\bm{u}^{k} - \bm{x}^{\natural} \right\| \leq \zeta \left\|\bm{x}^k - \bm{x}^{\natural}\right\|,
\end{equation*}
where $\zeta = 2\left(\sqrt{2} \max\{\eta\delta_{3s}, 1-\eta \left(1-\delta_{2s} \right)\} + \eta\sqrt{C_{\gamma}(1+\delta_{2s})}\right)$.
\end{lemma}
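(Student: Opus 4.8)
The plan is to analyze the one-step IHT update $\bm{u}^k = \mathcal{H}_s(\bm{x}^k - \eta\nabla f_A(\bm{x}^k))$ by comparing it directly to $\bm{x}^\natural$ through the standard IHT contraction argument, specialized to the amplitude-based loss. First I would set up notation: let $\bm{b}^k := \bm{x}^k - \eta\nabla f_A(\bm{x}^k)$ be the gradient-step vector (before thresholding), and let $\mathcal{S}_{k+1} := \mathrm{supp}(\bm{u}^k)$, $\mathcal{S}^\natural := \mathrm{supp}(\bm{x}^\natural)$, with $\mathcal{T} := \mathcal{S}_{k+1}\cup\mathcal{S}^\natural\cup\mathrm{supp}(\bm{x}^k)$, so $|\mathcal{T}|\le 3s$. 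Since $\bm{u}^k$ is the best $s$-sparse approximation of $\bm{b}^k$ and $\bm{x}^\natural$ is feasible ($s$-sparse), the near-optimality of hard thresholding gives $\|\bm{u}^k - \bm{b}^k_{\mathcal{S}_{k+1}}\|\le\|\bm{b}^k - \bm{x}^\natural\|$ restricted appropriately; more precisely the key inequality is $\|\bm{u}^k - \bm{x}^\natural\| \le \|(\bm{u}^k - \bm{b}^k)_{\mathcal{T}}\| + \|(\bm{b}^k - \bm{x}^\natural)_{\mathcal{T}}\| \le 2\|(\bm{b}^k - \bm{x}^\natural)_{\mathcal{T}}\|$, using that $\bm{u}^k$ thresholds $\bm{b}^k$ to its top $s$ entries. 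This is where the factor $2$ in $\zeta$ originates.

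Next I would expand $(\bm{b}^k - \bm{x}^\natural)_{\mathcal{T}} = (\bm{x}^k - \bm{x}^\natural)_{\mathcal{T}} - \eta(\nabla f_A(\bm{x}^k))_{\mathcal{T}}$. The gradient of the amplitude loss is $\nabla f_A(\bm{x}) = \frac{1}{m}\bm{A}^T(\bm{A}\bm{x} - \bm{z}\odot\mathrm{sgn}(\bm{A}\bm{x}))$, so with $\bm{z}^k := \bm{z}\odot\mathrm{sgn}(\bm{A}\bm{x}^k)$ we have $(\nabla f_A(\bm{x}^k))_{\mathcal{T}} = \frac{1}{m}\bm{A}_{\mathcal{T}}^T(\bm{A}\bm{x}^k - \bm{z}^k) = \frac{1}{m}\bm{A}_{\mathcal{T}}^T\bm{A}(\bm{x}^k - \bm{x}^\natural) + \frac{1}{m}\bm{A}_{\mathcal{T}}^T(\bm{A}\bm{x}^\natural - \bm{z}^k)$. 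Therefore $(\bm{b}^k - \bm{x}^\natural)_{\mathcal{T}} = \big(\bm{I} - \tfrac{\eta}{m}\bm{A}_{\mathcal{T}}^T\bm{A}_{\mathcal{T}}\big)(\bm{x}^k - \bm{x}^\natural)_{\mathcal{T}} - \tfrac{\eta}{m}\bm{A}_{\mathcal{T}}^T(\bm{z}^k - \bm{A}\bm{x}^\natural)$, where I used $\mathrm{supp}(\bm{x}^k - \bm{x}^\natural)\subseteq\mathcal{T}$. Taking norms and applying the triangle inequality, the first term is bounded using the RIP event \eqref{you13} on index sets of size $2s$ and $3s$: $\|\bm{I} - \tfrac{\eta}{m}\bm{A}_{\mathcal{T}}^T\bm{A}_{\mathcal{T}}\| \le \max\{\eta\delta_{3s}, |1 - \eta(1-\delta_{2s})|\}$ — here I need to be careful about the split, since $\bm{x}^k - \bm{x}^\natural$ is $2s$-sparse while $\mathcal{T}$ is $3s$-sized, which produces the $\sqrt{2}$ factor when one separates the diagonal-block contribution from the off-diagonal contribution (a standard restricted-isometry decomposition). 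The second term $\tfrac{\eta}{m}\|\bm{A}_{\mathcal{T}}^T(\bm{z}^k - \bm{A}\bm{x}^\natural)\|$ is controlled by Lemma~\ref{youlemma3} (applied with $\mathcal{S}_{k+1}$ replaced by the slightly larger $\mathcal{T}$, or by a routine adaptation thereof), giving the bound $\eta\sqrt{C_\gamma(1+\delta_{2s})}\|\bm{x}^k - \bm{x}^\natural\|$.

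Combining these and multiplying by the factor $2$ yields $\|\bm{u}^k - \bm{x}^\natural\| \le 2\big(\sqrt{2}\max\{\eta\delta_{3s}, 1-\eta(1-\delta_{2s})\} + \eta\sqrt{C_\gamma(1+\delta_{2s})}\big)\|\bm{x}^k - \bm{x}^\natural\| = \zeta\|\bm{x}^k - \bm{x}^\natural\|$, as claimed. The main obstacle, I expect, will be handling the mismatch between the support sizes precisely — i.e. justifying the $\sqrt{2}$ constant and the combination of $\delta_{2s}$ and $\delta_{3s}$ RIP constants when decomposing $\bm{I} - \tfrac{\eta}{m}\bm{A}_{\mathcal{T}}^T\bm{A}_{\mathcal{T}}$ acting on a $2s$-sparse vector but measured over the $3s$-sized set $\mathcal{T}$. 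This requires splitting $(\bm{b}^k - \bm{x}^\natural)_{\mathcal{T}}$ into its component on $\mathrm{supp}(\bm{x}^k)\cup\mathcal{S}^\natural$ (where the operator looks like $\bm{I} - \tfrac{\eta}{m}\bm{A}^T\bm{A}$ restricted to a $2s$-set) and the complementary component on $\mathcal{T}\setminus(\mathrm{supp}(\bm{x}^k)\cup\mathcal{S}^\natural)$ (a pure off-diagonal cross term bounded by $\eta\delta_{3s}$), then using $\|a\| + \|b\| \le \sqrt{2}\sqrt{\|a\|^2+\|b\|^2}$ or an equivalent bound; plus re-deriving the Lemma~\ref{youlemma3} estimate for the larger index set, which is routine since the proof there only uses an upper bound on the index-set cardinality. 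The remaining steps are standard applications of the stated events and lemmas.
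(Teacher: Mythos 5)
Your proposal is correct and follows essentially the same route as the paper's proof: the factor $2$ comes from the best-$s$-term-approximation property of $\mathcal{H}_s$, the gradient step is decomposed into an $\bigl(\bm{I}-\tfrac{\eta}{m}\bm{A}^T\bm{A}\bigr)$-type contraction (a $2s$-diagonal block giving $1-\eta(1-\delta_{2s})$ and a cross block giving $\eta\delta_{3s}$, combined into the $\sqrt{2}\max\{\cdot\}$ factor via orthogonality of the two pieces) plus the sign-mismatch term controlled by Lemma~\ref{youlemma3}. The one quantitative slip is that by measuring the residual over the $3s$-sized set $\mathcal{T}=\mathcal{S}_{k+1}\cup\mathcal{S}^{\natural}\cup\mathrm{supp}(\bm{x}^k)$ your noise term is bounded by $\eta\sqrt{C_{\gamma}(1+\delta_{3s})}$ rather than the stated $\eta\sqrt{C_{\gamma}(1+\delta_{2s})}$; the paper gets the sharper constant by measuring only over $\mathcal{T}_{k+1}=\mathcal{S}_{k+1}\cup\mathcal{S}^{\natural}$ (at most $2s$ indices) and instead splitting the \emph{input} $\bm{x}^k-\bm{x}^{\natural}$ across $\mathcal{T}_{k+1}$ and $\mathcal{T}_k\setminus\mathcal{T}_{k+1}$ — a harmless reorganization you could adopt to recover the exact $\zeta$.
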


\vspace{0.1cm}

The following lemma asserts that, given a sufficiently large $m$, the normalized random Gaussian matrix $\bm{A}$ obeys the restricted isometry property (RIP) with overwhelming probability. This conclusion is well-established in compressive sensing literature \citep{candes2005decoding, foucart2013invitation}.

\vspace{0.1cm}

\begin{lemma}{{\citep[Theorem 9.27]{foucart2013invitation}}}
Let $\bm{A}$ be defined as in \eqref{you4} with each vector $\bm{a}_i$ distributed as the random Gaussian vector $\bm{a}\sim\mathcal{N}(\bm{0},\bm{I}_n)$ independently for $i=1,2,\dots, m$. There exists some universal positive constants $C_1', C_2'$ such that for any positive integer $r \leq n$ and any $\delta_r\in(0,1)$, if $m\geq C_1'\delta_r^{-2}r\log{(n/r)}$, then $\frac{1}{\sqrt{m}}\bm{A}$ satisfies $r$-RIP with constant $\delta_r$, namely
\begin{equation}
(1-\delta_r)\|\bm{x}\|^2\leq \bigg\|\frac{1}{\sqrt{m}}\bm{A}\bm{x}\bigg\|^2\leq (1+\delta_r)\|\bm{x}\|^2,\quad \forall \bm{x}\in \mathbb{R}^n \mathrm{with} \|\bm{x}\|_0\leq r,
\label{you13}
\end{equation}
with probability at least $1-e^{-C_2'm}$.
\label{prop20}
\end{lemma}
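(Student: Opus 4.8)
This statement is quoted directly from \citep[Theorem 9.27]{foucart2013invitation}, so one may simply invoke it; for completeness I sketch the standard covering-number argument. The plan is to first bound the distortion of $\frac{1}{\sqrt m}\bm A$ on a single $r$-dimensional coordinate subspace and then take a union bound over all $\binom{n}{r}$ such subspaces. Fix $T\subseteq[n]$ with $|T|=r$. For a unit vector $\bm x$ supported on $T$, the entries of $\bm A\bm x$ are i.i.d.\ $\mathcal N(0,1)$, so $\frac1m\|\bm A\bm x\|^2$ is a normalized chi-squared variable with $m$ degrees of freedom; a Bernstein/Laurent--Massart tail bound then gives $\mathbb{P}\big(\big|\frac1m\|\bm A\bm x\|^2-1\big|\ge t\big)\le 2e^{-c_0 m t^2}$ for all $t\in(0,1)$, with $c_0>0$ an absolute constant.

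Next I would discretize. Let $\mathcal N_T$ be a $\tfrac14$-net of the unit sphere of $\mathrm{span}\{\bm e_i : i\in T\}$, so that $|\mathcal N_T|\le 9^{\,r}$, and write $\bm B_T:=\frac1m\bm A_T^\top\bm A_T-\bm I_T$. A union bound over $\bm x\in\mathcal N_T$ shows that, outside an event of probability $2\cdot 9^{\,r}e^{-c_0 m t^2}$, one has $|\bm x^\top\bm B_T\bm x|\le t$ for every net point; the usual net-to-operator-norm estimate $\|\bm B_T\|\le(1-2\varepsilon)^{-1}\max_{\bm x\in\mathcal N_T}|\bm x^\top\bm B_T\bm x|$ with $\varepsilon=\tfrac14$ then yields $\|\bm B_T\|\le 2t$, i.e.\ the two-sided bound \eqref{you13} restricted to vectors supported on $T$, provided we take $t=\delta_r/2$.

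Finally I would union-bound over the $\binom{n}{r}\le(\mathrm e n/r)^r$ choices of $T$: the total failure probability is at most $(\mathrm e n/r)^r\cdot 2\cdot 9^{\,r}e^{-c_0 m(\delta_r/2)^2}=2\exp\!\big(r\log(\mathrm e n/r)+r\log 9-\tfrac{c_0}{4}\delta_r^2 m\big)$. Choosing the constant $C_1'$ large enough that $m\ge C_1'\delta_r^{-2}r\log(n/r)$ forces $\tfrac{c_0}{8}\delta_r^2 m\ge r\log(\mathrm e n/r)+r\log 9$, so the exponent is at most $-\tfrac{c_0}{8}\delta_r^2 m+\log 2$, which is $\le -C_2' m$ for a suitable $C_2'>0$; this is the claimed probability bound.

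The argument is structurally routine; the only care needed is in the constants — matching the chi-squared concentration rate $c_0 t^2$, the covering overhead $r\log 9$, and the net-to-sphere slack $(1-2\varepsilon)^{-1}$ so that the residual exponent stays proportional to $m$ and the isometry constant comes out exactly $\delta_r$. This bookkeeping is precisely what dictates the sample-size requirement $m\gtrsim\delta_r^{-2}r\log(n/r)$, and specializing to $r=s$ and $r=2s,3s$ supplies the RIP event \eqref{you13} invoked throughout the proofs of Lemmas~\ref{youlemma3} and~\ref{youlemma4}.
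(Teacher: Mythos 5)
The paper itself offers no proof of this lemma: it is imported verbatim as \citep[Theorem 9.27]{foucart2013invitation}, which is exactly the move you make in your first sentence, so your proposal matches the paper's treatment. Your supplementary covering-number sketch (chi-squared concentration on a fixed support, a $\tfrac14$-net with the $(1-2\varepsilon)^{-1}$ net-to-operator-norm step, then a union bound over $\binom{n}{r}$ supports) is the standard and correct textbook argument; the only cosmetic caveat is that the exponent you obtain is $-c\,\delta_r^2 m$ rather than $-C_2' m$ with $C_2'$ independent of $\delta_r$, a looseness already present in the paper's own statement and harmless here since $\delta_r$ is treated as a fixed constant.
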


\vspace{0.1cm}

The results presented below have been previously established in \citep{cai2016optimal}.

\vspace{0.1cm}

\begin{lemma}\label{twflemmaA6}
\citep[Lemma A.6]{cai2016optimal}
On an event with probability at least $1 - m^{-1}$, we have
\begin{equation*}
\bigg\|\frac{1}{m}\sum_{i=1}^m |\bm{a}_{i,\mathcal{S}^{\natural}}^T\bm{x}^{\natural}|^2 \bm{a}_{i,\mathcal{S}^{\natural}}\bm{a}_{i,\mathcal{S}^{\natural}}^T - \left(\|\bm{x}^{\natural}\|^2(\bm{I}_n)_{\mathcal{S}^{\natural},\mathcal{S}^{\natural}} + 2\bm{x}_{\mathcal{S}^{\natural}}^{\natural}(\bm{x}_{\mathcal{S}^{\natural}}^{\natural})^T \right) \bigg\| \leq \delta\|\bm{x}^{\natural}\|^2,
\end{equation*}
provided $m\geq C(\delta)s\log s$. Here $C(\delta)$ is a constant depending on $\delta$.
\end{lemma}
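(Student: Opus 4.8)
The plan is to view Lemma~\ref{twflemmaA6} as the coordinate-restricted analogue of Lemma~7.4 in \citep{candes2015phase}: since the index set here is the \emph{fixed} support $\mathcal{S}^\natural=\mathrm{supp}(\bm{x}^\natural)$ rather than an arbitrary subset, no union over subsets is needed, and the whole argument collapses to a concentration statement in a single fixed low-dimensional space. Write $s':=|\mathcal{S}^\natural|\le s$ and $\bm{b}_i:=\bm{a}_{i,\mathcal{S}^\natural}$; then $\bm{b}_1,\dots,\bm{b}_m$ are i.i.d.\ $\mathcal{N}(\bm 0,\bm I_{s'})$, all nonzero coordinates of $\bm{x}^\natural$ lie in $\mathcal{S}^\natural$ so that $\|\bm{x}_{\mathcal{S}^\natural}^\natural\|=\|\bm{x}^\natural\|$, and by the rotational invariance of the Gaussian I may assume $\bm{x}_{\mathcal{S}^\natural}^\natural=\|\bm{x}^\natural\|\,\bm e_1$.

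First I would identify the mean. For $\bm g\sim\mathcal{N}(\bm 0,\bm I_{s'})$, Wick's formula gives $\mathbb{E}[g_1g_1g_jg_k]=\delta_{jk}+2\delta_{1j}\delta_{1k}$, i.e.\ $\mathbb{E}[g_1^2\,\bm g\bm g^T]=\bm I_{s'}+2\bm e_1\bm e_1^T$, so $\mathbb{E}\big[|\bm b_i^T\bm{x}_{\mathcal{S}^\natural}^\natural|^2\bm b_i\bm b_i^T\big]=\|\bm{x}^\natural\|^2\bm I_{s'}+2\bm{x}_{\mathcal{S}^\natural}^\natural(\bm{x}_{\mathcal{S}^\natural}^\natural)^T$, which is precisely the centering matrix in the statement. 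It then remains to show that $\frac1m\sum_{i=1}^m|\bm b_i^T\bm{x}_{\mathcal{S}^\natural}^\natural|^2\bm b_i\bm b_i^T$ concentrates around this mean in operator norm within $\delta\|\bm{x}^\natural\|^2$.

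For the concentration step I would use a truncation argument. The summands are PSD but heavy-tailed (the quartic forms $|\bm b_i^T\bm w|^2|\bm b_i^T\bm u|^2$ have tails of order $\exp(-c\sqrt t)$, hence are not sub-exponential), so a direct Bernstein bound fails. I would split on the event $\mathcal{E}_i:=\{\,|\bm b_i^T\bm{x}_{\mathcal{S}^\natural}^\natural|\le\beta\|\bm{x}^\natural\|\,\}\cap\{\,\|\bm b_i\|^2\le\beta s'\,\}$ for a threshold $\beta\asymp\log m$: on $\bigcap_i\mathcal{E}_i$ the truncated summands have bounded operator norm and controlled matrix variance, so matrix Bernstein (or, equivalently, an $\varepsilon$-net of the unit sphere of $\mathbb{R}^{s'}$ together with a scalar Bernstein bound on $\frac1m\sum_i|\bm b_i^T\bm{x}_{\mathcal{S}^\natural}^\natural|^2|\bm b_i^T\bm u|^2$ at each net point) yields the $\delta\|\bm{x}^\natural\|^2$ bound; the probability of $\bigcup_i\mathcal{E}_i^c$ is handled by Gaussian tail estimates, and the discarded expectation $\mathbb{E}[\,\cdot\,\mathbf{1}_{\mathcal{E}_i^c}\,]$ is negligible.

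The main obstacle is making the truncation tight enough to land at exactly the stated $m\ge C(\delta)s\log s$ with failure probability only $m^{-1}$: a crude matrix-Bernstein estimate leaves extra poly-logarithmic factors in $m$, so one has to argue more carefully, as in \citep{candes2015phase}, to strip these down to a single $\log s$. Since Lemma~\ref{twflemmaA6} is quoted verbatim from \citep[Lemma~A.6]{cai2016optimal}, I would ultimately defer the detailed bookkeeping of truncation levels and constants to that reference.
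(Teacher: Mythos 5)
The paper does not prove this lemma at all: it is imported verbatim from \citep[Lemma A.6]{cai2016optimal} and is deliberately excluded from the list of technical lemmas proved in the appendix. Your sketch — identifying the centering matrix $\|\bm{x}^{\natural}\|^2\bm{I}+2\bm{x}^{\natural}_{\mathcal{S}^{\natural}}(\bm{x}^{\natural}_{\mathcal{S}^{\natural}})^T$ as the fourth-moment expectation via Wick's formula, then establishing operator-norm concentration by truncating the heavy-tailed quartic summands and applying a net/Bernstein argument — is exactly the standard route taken in the cited source and in Lemma 7.4 of \citep{candes2015phase}, and your decision to defer the constant/truncation bookkeeping to the reference mirrors what the paper itself does.
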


\vspace{0.1cm}

The subsequent lemma, a direct outcome from \citep{soltanolkotabi2019structured}, plays a crucial role in bounding the term $\big\|\bm{A}_{\mathcal{T}_{k+1}}^T(\bm{z}^k - \bm{A}\bm{x}^{\natural})\big\|$.

\vspace{0.1cm}

\begin{lemma} \label{youlemma2}
\citep[Lemma 25]{soltanolkotabi2019structured} Let $\{\bm{a}_i\}_{i=1}^m$ be $i.i.d.$ Gaussian random vectors with mean $\bm{0}$ and variance matrix $\bm{I}$. Let $\gamma$ be any constant in $(0, \frac{1}{8}]$. Fixing any $\epsilon_0>0$, then for any $s$-sparse vector $\bm{x}$ satisfying $\mathrm{dist}(\bm{x}, \bm{x}^{\natural}) \leq \lambda_0\|\bm{x}^{\natural}\|$, with probability at least $1 - e^{-C_6'm}$ there holds that
\begin{equation}
\frac{1}{m}\sum_{i=1}^m |\bm{a}_i^T\bm{x}^{\natural}|^2\cdot \bm{1}_{\{(\bm{a}_i^T\bm{x})(\bm{a}_i^T\bm{x}^{\natural})\leq 0\}}  \leq
\frac{1}{(1-\lambda_0)^2}\bigg(\epsilon_0 + \lambda_0\sqrt{\frac{21}{20}} \bigg)^2\|\bm{x} - \bm{x}^{\natural}\|^2,
\label{you15}
\end{equation}
provided $m \geq C_5's\log(n/s)$. Here $C_5'$ and $C_6'$ are some universal positive constants.
\end{lemma}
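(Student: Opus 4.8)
The plan is to reduce the random sum to its expectation after truncating the summands, bound that expectation by an explicit bivariate‑Gaussian computation, and then make the estimate uniform over all admissible $\bm x$ by an $\varepsilon$‑net argument, with a dedicated device to cope with the discontinuous indicator. Throughout set $\bm h := \bm x - \bm x^{\natural}$ (which is $2s$‑sparse) and write $\mathcal E_i := \{(\bm a_i^T\bm x)(\bm a_i^T\bm x^{\natural}) \le 0\}$; in the regime of interest $\mathrm{dist}(\bm x,\bm x^{\natural}) = \|\bm h\| \le \lambda_0\|\bm x^{\natural}\|$.

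First I would record the pointwise reduction: expanding $\bm a_i^T\bm x = \bm a_i^T\bm x^{\natural} + \bm a_i^T\bm h$, the defining inequality of $\mathcal E_i$ forces $(\bm a_i^T\bm x^{\natural})^2 \le |\bm a_i^T\bm x^{\natural}|\,|\bm a_i^T\bm h|$, hence $|\bm a_i^T\bm x^{\natural}| \le |\bm a_i^T\bm h|$ and $\bm{1}_{\mathcal E_i} \le \bm{1}_{\{|\bm a_i^T\bm x^{\natural}| \le |\bm a_i^T\bm h|\}}$; in particular $(\bm a_i^T\bm x^{\natural})^2\bm{1}_{\mathcal E_i} \le (\bm a_i^T\bm h)^2$. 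I would then split each summand at the level $\tau\|\bm h\|$ with $\tau$ a small multiple of $\epsilon_0$: the contribution of the indices with $|\bm a_i^T\bm x^{\natural}| \le \tau\|\bm h\|$ is deterministically at most $\tau^2\|\bm h\|^2$ and produces the additive $\epsilon_0$ slack, whereas the remaining, genuinely random piece is controlled in expectation. For a fixed $\bm x$, the pair $(\bm a^T\bm x,\bm a^T\bm x^{\natural})$ is centered bivariate Gaussian with correlation $\cos\theta = \langle\bm x,\bm x^{\natural}\rangle/(\|\bm x\|\|\bm x^{\natural}\|)$ and $\sin\theta \le \|\bm h\|/\|\bm x\|$; conditioning $\bm a^T\bm x^{\natural}$ on $\bm a^T\bm x$ and using the Mills‑ratio bound $\Phi(-t)\le\phi(t)/t$ gives $\mathbb E[(\bm a^T\bm x^{\natural})^2\bm{1}_{\mathcal E}] \le \tfrac{\sin^3\theta}{\pi\cos\theta}\|\bm x^{\natural}\|^2$, which — after using $\sin\theta\,\|\bm x\| \le \|\bm h\|$ and a $(1+\delta)$‑RIP factor with $\delta\le 1/20$ to pass between $\|\bm h\|^2$ and $\tfrac1m\sum_i(\bm a_i^T\bm h)^2$ — rearranges to the stated form $\tfrac{1}{(1-\lambda_0)^2}(\epsilon_0+\lambda_0\sqrt{21/20})^2\|\bm h\|^2$, with the $\sqrt{21/20}$ tracking the RIP constant.

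To make the bound hold simultaneously over all $s$‑sparse $\bm x$ in the $\lambda_0$‑ball I would first union‑bound over the (at most $\binom{n}{s}$) possible supports of $\bm x$, which is exactly what yields the $\log(n/s)$ and the requirement $m\ge C_5's\log(n/s)$. On a fixed support of size $\le 2s$ take an $\varepsilon$‑net of cardinality $\exp(O(s))$ of the admissible ball. Since $\bm h \mapsto \tfrac1m\sum_i(\bm a_i^T\bm x^{\natural})^2\bm{1}_{\mathcal E_i}$ is \emph{not} Lipschitz, I would not transport it along the net directly; instead, for $\bm h$ within $\varepsilon$ of a net point $\bm h_0$ I would dominate $\bm{1}_{\{|\bm a_i^T\bm x^{\natural}|\le|\bm a_i^T\bm h|\}}$ by the enlarged, perturbation‑stable event $\{|\bm a_i^T\bm x^{\natural}|\le|\bm a_i^T\bm h_0|+\varepsilon\sup_{\|\bm e\|\le1}|\bm a_i^T\bm e|\}$, the extra ``boundary‑shell'' contribution having expectation $O(\varepsilon)$ and being absorbed by choosing $\varepsilon$ a small multiple of $\epsilon_0\|\bm x^{\natural}\|$. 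For each net point the summands are i.i.d.\ and nonnegative, and by the pointwise reduction they are dominated by the sub‑exponential $(\bm a_i^T\bm h)^2$ at the correct scale $\|\bm h\|^2$, so Bernstein's inequality bounds the deviation by a further $O(\epsilon_0)\|\bm h\|^2$ with probability $1-e^{-cm}$ once $m\gtrsim s\log(n/s)$; a final union bound over supports and net points, combined with the sandwich above and the expectation bound, delivers the claim with failure probability $e^{-C_6'm}$.

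The main obstacle is the one isolated in the uniformization step: the hard indicator is discontinuous, so the usual ``net $+$ Lipschitz continuity'' scheme does not apply, and one must instead enlarge the events to render them stable under perturbations of $\bm h$ while showing the enlargement costs only an $O(\epsilon_0)$ term — this is precisely why the statement carries the free parameter $\epsilon_0$. A secondary, more computational, difficulty is the expectation bound: a plain Cauchy–Schwarz argument only yields $(1+\delta)\|\bm h\|^2$, which is too weak for the contraction estimate that consumes this lemma (cf.\ Lemma~\ref{youlemma4}), so the sharp $\lambda_0$‑dependent constant must be teased out of the explicit Gaussian orthant‑moment computation.
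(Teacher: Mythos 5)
First, a point of comparison: the paper does not prove this statement at all. Lemma~\ref{youlemma2} is imported verbatim from \citep[Lemma 25]{soltanolkotabi2019structured} and used as an external ingredient (in the proof of Lemma~\ref{youlemma3}), so there is no internal proof to measure you against; your attempt has to stand on its own. Much of your architecture is the right kind of argument for a uniform bound of this type: the pointwise reduction (on the event, $|\bm{a}_i^T\bm{x}^{\natural}|\le|\bm{a}_i^T\bm{h}|$), the union bound over the $\binom{n}{s}$ supports that produces the $s\log(n/s)$ sample requirement, the $\varepsilon$-net with \emph{enlarged}, perturbation-stable events to cope with the discontinuous indicator, and Bernstein concentration are all sensible, and your conditional Mills-ratio estimate $\mathbb{E}\big[(\bm{a}^T\bm{x}^{\natural})^2\bm{1}_{\mathcal E}\big]\le\frac{\sin^3\theta}{\pi\cos\theta}\|\bm{x}^{\natural}\|^2$ is correct (the enlargement radius should be measured on the fixed $2s$-sparse support, not over all unit vectors in $\mathbb{R}^n$, but that only affects constants).

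The genuine gap is at the decisive step, which you assert rather than prove: that your expectation bound ``rearranges to the stated form.'' It does not. With $\sin\theta\le\|\bm{h}\|/\|\bm{x}\|$, $\|\bm{x}\|\ge(1-\lambda_0)\|\bm{x}^{\natural}\|$ and $\|\bm{h}\|\le\lambda_0\|\bm{x}^{\natural}\|$, your bound is of order $\lambda_0\|\bm{h}\|^2$ (roughly $0.04$--$0.06\,\|\bm{h}\|^2$ at $\lambda_0=1/8$ depending on bookkeeping, and still about $0.027\,\|\bm{h}\|^2$ if one uses the exact orthant moment $\frac{\theta-\sin\theta\cos\theta}{\pi}\|\bm{x}^{\natural}\|^2$), whereas the target constant $\frac{1}{(1-\lambda_0)^2}\big(\epsilon_0+\lambda_0\sqrt{21/20}\big)^2$ is of order $\lambda_0^2$ once $\epsilon_0$ is small, about $0.022\,\|\bm{h}\|^2$ at $\lambda_0=1/8$, $\epsilon_0=10^{-3}$. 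The invoked ``$(1+\delta)$-RIP factor with $\delta\le 1/20$'' never actually enters your computation (the expectation step makes no use of $\frac1m\sum_i(\bm{a}_i^T\bm{h})^2$), so the provenance of $\sqrt{21/20}$ is not established anywhere in the sketch. Worse, the exact expectation at the extreme admissible point ($\bm{h}\perp\bm{x}^{\natural}$, $\|\bm{h}\|=\lambda_0\|\bm{x}^{\natural}\|$, $\lambda_0=1/8$) is about $4.1\times10^{-4}\|\bm{x}^{\natural}\|^2$, which already exceeds the stated right-hand side ($\approx 3.4\times10^{-4}\|\bm{x}^{\natural}\|^2$ for small $\epsilon_0$); hence no expectation-plus-concentration argument can reach the inequality with exactly these constants, and the transcription in the lemma evidently presupposes either a not-too-small $\epsilon_0$ or the slightly different normalization of the original statement in \citep{soltanolkotabi2019structured}. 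In short, your plan is structurally reasonable, but the constant-chasing that is the entire content of the lemma is missing, and your own intermediate bound overshoots the claimed one rather than implying it.
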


\vspace{0.1cm}

The next lemma, derived from \citep{cai2016optimal}, asserts that the induced norm of the submatrix of random Gaussian matrix $\bm{A}$ is bounded by its size.

\vspace{0.1cm}

\begin{lemma}{{\citep[Lemma A.5]{cai2016optimal}}}
Let $\{\bm{a}_i\}_{i=1}^m$ be identically and independently distributed as $\mathcal{N}(\bm{0},\bm{I}_n)$ and define the matrix $\bm{A}$ as in \eqref{you4}. Given a support $\mathcal{S}\subseteq [n]$ with cardinality $s$, with probability at least $1-4m^{-2}$, there holds
\begin{equation}
\|\bm{A}_{\mathcal{S}}\|_{2\rightarrow 4} \leq (3m)^{1/4} + s^{1/2} + 2\sqrt{\log m},
\label{normA}
\end{equation}
where $\bm A_\mathcal{S}$ represents the sub-matrix of $\bm A$, retaining only the columns indexed by $\mathcal{S}$.
\label{twflemmaA5}
\end{lemma}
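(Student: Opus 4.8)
\emph{Plan.} I would prove the slightly stronger statement that the inequality holds with probability at least $1-m^{-2}$ (which of course implies the stated $1-4m^{-2}$), by realising the right-hand side as a mean term $(3m)^{1/4}+s^{1/2}$ plus a Gaussian-deviation term $2\sqrt{\log m}$. After relabelling indices we may take $\mathcal S=\{1,\dots,s\}$, so $\bm A_{\mathcal S}\in\mathbb R^{m\times s}$ has i.i.d.\ $\mathcal N(0,1)$ entries; by the duality between $\ell_4$ and $\ell_{4/3}$,
\[
\|\bm A_{\mathcal S}\|_{2\to4}=\sup_{\|\bm v\|_2\le1}\ \sup_{\|\bm w\|_{4/3}\le1}\ \bm w^{T}\bm A_{\mathcal S}\bm v .
\]

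\textbf{Step 1: the expectation.} I would bound $\mathbb E\|\bm A_{\mathcal S}\|_{2\to4}$ with Chevet's inequality for Gaussian matrices, which gives, for bounded sets $\mathcal T\subseteq\mathbb R^{s}$ and $\mathcal U\subseteq\mathbb R^{m}$,
\[
\mathbb E\ \sup_{\bm v\in\mathcal T,\ \bm w\in\mathcal U}\bm w^{T}\bm A_{\mathcal S}\bm v\ \le\ \mathrm{rad}(\mathcal U)\,w(\mathcal T)+\mathrm{rad}(\mathcal T)\,w(\mathcal U),
\]
where $\mathrm{rad}(\cdot)$ is the Euclidean radius and $w(\mathcal T):=\mathbb E\sup_{\bm t\in\mathcal T}\langle\bm g,\bm t\rangle$ the Gaussian width. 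Take $\mathcal T=\{\|\bm v\|_2\le1\}$ and $\mathcal U=\{\|\bm w\|_{4/3}\le1\}$: then $\mathrm{rad}(\mathcal T)=1$ and $\mathrm{rad}(\mathcal U)=1$ (as $\|\bm w\|_2\le\|\bm w\|_{4/3}$); $w(\mathcal T)=\mathbb E\|\bm h\|_2\le(\mathbb E\|\bm h\|_2^2)^{1/2}=\sqrt s$ for $\bm h\sim\mathcal N(\bm 0,\bm I_s)$; and, since the dual norm of $\ell_{4/3}$ is $\ell_4$ and $\mathbb E[g^4]=3$ for $g\sim\mathcal N(0,1)$,
\[
w(\mathcal U)=\mathbb E\|\bm g\|_4\le\big(\mathbb E\|\bm g\|_4^4\big)^{1/4}=(3m)^{1/4},\qquad \bm g\sim\mathcal N(\bm 0,\bm I_m).
\]
Combining these gives $\mathbb E\|\bm A_{\mathcal S}\|_{2\to4}\le(3m)^{1/4}+s^{1/2}$.

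\textbf{Step 2: concentration, and conclusion.} Next I would note that $\bm A\mapsto\|\bm A_{\mathcal S}\|_{2\to4}$ is $1$-Lipschitz in the Euclidean norm of $\mathrm{vec}(\bm A)$: it depends only on $\bm A_{\mathcal S}$, and $|\,\|\bm A_{\mathcal S}\|_{2\to4}-\|\bm A_{\mathcal S}'\|_{2\to4}\,|\le\|\bm A_{\mathcal S}-\bm A_{\mathcal S}'\|_{2\to4}\le\|\bm A_{\mathcal S}-\bm A_{\mathcal S}'\|_F$, using $\|\bm w\|_4\le\|\bm w\|_2$. The Gaussian concentration inequality for Lipschitz functions then yields $\mathbb P(\|\bm A_{\mathcal S}\|_{2\to4}\ge\mathbb E\|\bm A_{\mathcal S}\|_{2\to4}+t)\le e^{-t^2/2}$; taking $t=2\sqrt{\log m}$ and invoking Step 1,
\[
\|\bm A_{\mathcal S}\|_{2\to4}\le(3m)^{1/4}+s^{1/2}+2\sqrt{\log m}
\]
holds with probability at least $1-e^{-2\log m}=1-m^{-2}\ge 1-4m^{-2}$. (If one wishes to follow \citep{cai2016optimal} more literally, the same three terms can instead be produced by a dyadic chaining / $\varepsilon$-net argument over the unit sphere $\mathbb S^{s-1}$, with the looser constant $4$ in the probability absorbing the auxiliary high-probability events --- e.g.\ bounds on $\sigma_{\max}(\bm A_{\mathcal S})$ and on $\max_i\|\bm a_{i,\mathcal S}\|_2$ --- that such an argument invokes.)

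\textbf{Main obstacle.} The substantive, non-mechanical point is Step 1: one must resist the temptations $\|\bm A_{\mathcal S}\|_{2\to4}\le\|\bm A_{\mathcal S}\|_{2\to2}=\sigma_{\max}(\bm A_{\mathcal S})$ and $\|\bm A_{\mathcal S}\|_{2\to4}^2\le\|\bm A_{\mathcal S}\|_{2\to\infty}\|\bm A_{\mathcal S}\|_{2\to2}$, each of which overshoots the leading term $(3m)^{1/4}$ by a genuine polynomial factor (producing $\sqrt m$, respectively $m^{1/4}s^{1/4}$), and instead pass to the $\ell_{4/3}$-dual picture, where the Gaussian width of the $\ell_{4/3}$ ball is \emph{exactly} $\mathbb E\|\bm g\|_4\le(3m)^{1/4}$; Chevet's inequality then peels off the $s^{1/2}$ spherical contribution with sharp constant, and Gaussian Lipschitz concentration supplies the remaining $2\sqrt{\log m}$.
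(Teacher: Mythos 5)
The paper never proves this lemma itself: it is imported verbatim as Lemma A.5 of \citep{cai2016optimal} (the citation is part of the statement), so there is no in-paper argument to compare against, and a self-contained derivation like yours is genuinely additional content. Your proof is correct, and in fact delivers the slightly stronger probability $1-m^{-2}$. Both steps do what you claim: writing $\|\bm A_{\mathcal S}\|_{2\to4}=\sup_{\|\bm v\|_2\le 1}\sup_{\|\bm w\|_{4/3}\le 1}\bm w^T\bm A_{\mathcal S}\bm v$ and applying the Gaussian Chevet/Gordon bound with $\mathrm{rad}(B^m_{4/3})=\mathrm{rad}(B^s_2)=1$, $w(B^s_2)\le\sqrt{s}$, $w(B^m_{4/3})=\mathbb E\|\bm g\|_4\le(3m)^{1/4}$ gives $\mathbb E\|\bm A_{\mathcal S}\|_{2\to4}\le(3m)^{1/4}+s^{1/2}$, and since $\bm M\mapsto\|\bm M\|_{2\to4}$ is $1$-Lipschitz in the Frobenius norm, Gaussian concentration with $t=2\sqrt{\log m}$ costs exactly $e^{-2\log m}=m^{-2}$; no union bound over supports is needed because $\mathcal S$ is fixed in the statement. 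The one point you should make explicit is that you need the sharp, constant-one form of Chevet's inequality (Gordon's refinement, provable by a Sudakov--Fernique comparison); the textbook versions with unspecified absolute constants would not reproduce the precise right-hand side $(3m)^{1/4}+s^{1/2}+2\sqrt{\log m}$ of \eqref{normA}. Your diagnosis of the main pitfall is also apt: bounding $\|\cdot\|_{2\to4}$ by the spectral norm loses a polynomial factor, and the $\ell_{4/3}$-dual picture is what makes the $(3m)^{1/4}$ leading term come out exactly. The looser factor $4$ in the stated probability is simply inherited from the cited reference and is not needed on your route.
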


\vspace{0.1cm}

The subsequent result concerning the spectral norm of submatrices of $\frac{1}{\sqrt{m}}\bm{A}$ can be deduced by employing the restricted isometry property of matrix $\frac{1}{\sqrt{m}}\bm{A}$. 

\vspace{0.1cm}

\begin{lemma}{{\citep[Proposition 3.1]{needell2009cosamp}}}
Suppose the matrix $\bm{A}$ satisfies the inequality \eqref{you13} for values of $r$ specified as $s$ and $s'$. Then for any disjoint subsets $\mathcal{S}$ and $\mathcal{T}$ of $\{1,2,\cdots,m\}$ satisfying $|\mathcal{S}|\leq s$ and $|\mathcal{T}|\leq s'$, the following inequalities hold:
\begin{align}
	\bigg\|\frac{1}{\sqrt{m}}\bm{A}_{\mathcal{S}}^T \bigg\| &\leq \sqrt{1+\delta_s}, \label{you14a} \\
	\left\|\frac{1}{m}\bm{A}_{\mathcal{S}}^T\bm{A}_{\mathcal{T}}\right\| &\leq \delta_{s+s'} ,\label{you14c}\\
	 1-\delta_s \leq \left\|\frac{1}{m}\bm{A}_{\mathcal{S}}^T\bm{A}_{\mathcal{S}}\right\| &\leq 1 + \delta_s, \label{you14b}. 
\end{align}
\label{prop2}
\end{lemma}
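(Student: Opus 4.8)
The plan is to obtain all three inequalities as direct operator-norm reformulations of the restricted isometry property \eqref{you13}, exploiting the variational (Rayleigh quotient) characterization of extreme eigenvalues. The only step that requires genuine work is the cross-term bound \eqref{you14c}, which will come from a polarization argument; the bounds \eqref{you14a} and \eqref{you14b} are essentially immediate.

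First I would handle \eqref{you14b} and \eqref{you14a}. Given any $\bm{v}\in\mathbb{R}^{|\mathcal{S}|}$, let $\widetilde{\bm{v}}\in\mathbb{R}^n$ be its zero-padding supported on $\mathcal{S}$, so that $\bm{A}\widetilde{\bm{v}}=\bm{A}_{\mathcal{S}}\bm{v}$, $\|\widetilde{\bm{v}}\|=\|\bm{v}\|$, and $\|\widetilde{\bm{v}}\|_0\le|\mathcal{S}|\le s$. Applying \eqref{you13} with $r=s$ yields $(1-\delta_s)\|\bm{v}\|^2\le\langle\bm{v},\tfrac1m\bm{A}_{\mathcal{S}}^T\bm{A}_{\mathcal{S}}\bm{v}\rangle\le(1+\delta_s)\|\bm{v}\|^2$. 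Since $\tfrac1m\bm{A}_{\mathcal{S}}^T\bm{A}_{\mathcal{S}}$ is symmetric positive semidefinite, its eigenvalues all lie in $[1-\delta_s,1+\delta_s]$; as the spectral norm of a PSD matrix equals its largest eigenvalue, this gives \eqref{you14b}. Likewise $\|\tfrac1{\sqrt m}\bm{A}_{\mathcal{S}}^T\|^2=\|\tfrac1{\sqrt m}\bm{A}_{\mathcal{S}}\|^2=\lambda_{\max}(\tfrac1m\bm{A}_{\mathcal{S}}^T\bm{A}_{\mathcal{S}})\le1+\delta_s$, which is \eqref{you14a}.

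For \eqref{you14c} I would start from $\|\tfrac1m\bm{A}_{\mathcal{S}}^T\bm{A}_{\mathcal{T}}\|=\sup\{\tfrac1m|\langle\bm{A}_{\mathcal{S}}\bm{u},\bm{A}_{\mathcal{T}}\bm{v}\rangle|:\|\bm{u}\|=\|\bm{v}\|=1\}$. Fix such $\bm{u},\bm{v}$ and let $\widetilde{\bm{u}},\widetilde{\bm{v}}\in\mathbb{R}^n$ be the zero-paddings supported on $\mathcal{S}$ and $\mathcal{T}$. Because $\mathcal{S}\cap\mathcal{T}=\emptyset$, the vectors $\widetilde{\bm{u}}\pm\widetilde{\bm{v}}$ are supported on $\mathcal{S}\cup\mathcal{T}$ with $\|\widetilde{\bm{u}}\pm\widetilde{\bm{v}}\|_0\le|\mathcal{S}|+|\mathcal{T}|\le s+s'$ and $\|\widetilde{\bm{u}}\pm\widetilde{\bm{v}}\|^2=\|\bm{u}\|^2+\|\bm{v}\|^2=2$, and $\bm{A}(\widetilde{\bm{u}}\pm\widetilde{\bm{v}})=\bm{A}_{\mathcal{S}}\bm{u}\pm\bm{A}_{\mathcal{T}}\bm{v}$. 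Combining the polarization identity with \eqref{you13} applied to $(s+s')$-sparse vectors,
\[
\tfrac1m\langle\bm{A}_{\mathcal{S}}\bm{u},\bm{A}_{\mathcal{T}}\bm{v}\rangle=\tfrac14\Big(\big\|\tfrac1{\sqrt m}\bm{A}(\widetilde{\bm{u}}+\widetilde{\bm{v}})\big\|^2-\big\|\tfrac1{\sqrt m}\bm{A}(\widetilde{\bm{u}}-\widetilde{\bm{v}})\big\|^2\Big)\le\tfrac14\big(2(1+\delta_{s+s'})-2(1-\delta_{s+s'})\big)=\delta_{s+s'},
\]
and replacing $\bm{v}$ by $-\bm{v}$ gives the matching lower bound $\ge-\delta_{s+s'}$; taking the supremum over unit $\bm{u},\bm{v}$ yields \eqref{you14c}.

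The main (and essentially only) obstacle is \eqref{you14c}: one must use the disjointness of $\mathcal{S}$ and $\mathcal{T}$ to write $\bm{A}_{\mathcal{S}}\bm{u}\pm\bm{A}_{\mathcal{T}}\bm{v}$ as $\tfrac1{\sqrt m}\bm{A}$ applied to a single $(|\mathcal{S}|+|\mathcal{T}|)$-sparse vector, and then merge the upper and lower RIP bounds via polarization. The choice of normalizing $\bm{u}$ and $\bm{v}$ to have equal norm is precisely what makes the constant collapse to $\delta_{s+s'}$ rather than $2\delta_{s+s'}$. Everything else is a routine translation of the RIP into the language of spectral norms.
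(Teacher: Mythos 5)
The paper states this lemma as an import from \citet{needell2009cosamp} and gives no proof of its own, so there is nothing to compare against; your argument is correct and is essentially the standard proof of that proposition — Rayleigh-quotient bounds on the zero-padded vectors for \eqref{you14a} and \eqref{you14b}, and a polarization identity over disjointly supported, equal-norm unit vectors for \eqref{you14c}. The only point worth flagging is that your polarization step (correctly) invokes the RIP at sparsity level $s+s'$, which the constant $\delta_{s+s'}$ in the conclusion presupposes even though the lemma's hypothesis as printed only mentions the levels $s$ and $s'$.
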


\vspace{0.1cm}

The following lemma provides the expectation of the sub-Hessian of the intensity-based loss $f_I$ at $\bm x^\natural$. As this result can be easily derived through basic calculations, we will not delve into the details here.

\vspace{0.1cm}

\begin{lemma}\label{expe}
For any subset $\mathcal{S}\subseteq [n]$ such that $\mathrm{supp}(\bm{x}^{\natural})\subseteq \mathcal{S}$, the expectation of $\nabla_{\mathcal{S},\mathcal{S}}^2f_I(\bm{x}^{\natural}) $ is 
\begin{equation*}
\mathbb{E}\left[\nabla_{\mathcal{S},\mathcal{S}}^2f_I(\bm{x}^{\natural}) \right] = 2\left(\|\bm{x}^{\natural}\|^2(\bm{I}_n)_{\mathcal{S},\mathcal{S}} + 2\bm{x}_{\mathcal{S}}^{\natural}(\bm{x}_{\mathcal{S}}^{\natural})^T\right),
\end{equation*}
and it has one eigenvalue of $6\|\bm{x}^{\natural}\|^2$ and all other eigenvalues are $2\|\bm{x}^{\natural}\|^2$.
\end{lemma}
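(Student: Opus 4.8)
The plan is to compute $\nabla^2 f_I$ in closed form, specialize it at $\bm x^\natural$, evaluate the expectation via the standard Gaussian fourth-moment identity, and then read off the spectrum from the resulting rank-one-update structure. First I would differentiate $f_I(\bm x) = \frac{1}{4m}\sum_{i=1}^m \big((\bm a_i^T\bm x)^2 - y_i\big)^2$ twice. A direct calculation gives $\nabla f_I(\bm x) = \frac{1}{m}\sum_{i=1}^m \big((\bm a_i^T\bm x)^2 - y_i\big)(\bm a_i^T\bm x)\,\bm a_i$ and then
\[
\nabla^2 f_I(\bm x) = \frac{1}{m}\sum_{i=1}^m \big(3(\bm a_i^T\bm x)^2 - y_i\big)\,\bm a_i\bm a_i^T .
\]
Substituting $\bm x = \bm x^\natural$ and using $y_i = (\bm a_i^T\bm x^\natural)^2$, the coefficient collapses to $2(\bm a_i^T\bm x^\natural)^2$, so $\nabla^2 f_I(\bm x^\natural) = \frac{2}{m}\sum_{i=1}^m (\bm a_i^T\bm x^\natural)^2\,\bm a_i\bm a_i^T$. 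Restricting to the index set $\mathcal S$ and using $\mathrm{supp}(\bm x^\natural)\subseteq\mathcal S$ (so that $\bm a_i^T\bm x^\natural = \bm a_{i,\mathcal S}^T\bm x_{\mathcal S}^\natural$) gives $\nabla_{\mathcal S,\mathcal S}^2 f_I(\bm x^\natural) = \frac{2}{m}\sum_{i=1}^m (\bm a_{i,\mathcal S}^T\bm x_{\mathcal S}^\natural)^2\,\bm a_{i,\mathcal S}\bm a_{i,\mathcal S}^T$.

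Next I would take the expectation. Since the $\bm a_i$ are i.i.d.\ and $\bm a_{i,\mathcal S}\sim\mathcal N(\bm 0,\bm I_{|\mathcal S|})$, it suffices to evaluate $\mathbb E\big[(\bm a^T\bm v)^2\bm a\bm a^T\big]$ for a fixed vector $\bm v$ and $\bm a\sim\mathcal N(\bm 0,\bm I)$. By Isserlis' theorem (Wick's theorem) applied entrywise — or, more simply, by rotational invariance, reducing to the case $\bm v = \|\bm v\|\bm e_1$ and computing scalar Gaussian moments $\mathbb E[a_1^4]=3$, $\mathbb E[a_1^2 a_j^2]=1$ for $j\neq 1$ — one obtains $\mathbb E\big[(\bm a^T\bm v)^2\bm a\bm a^T\big] = \|\bm v\|^2\bm I + 2\bm v\bm v^T$. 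Applying this with $\bm v = \bm x_{\mathcal S}^\natural$, together with $\|\bm x_{\mathcal S}^\natural\| = \|\bm x^\natural\|$, yields
\[
\mathbb E\big[\nabla_{\mathcal S,\mathcal S}^2 f_I(\bm x^\natural)\big] = 2\big(\|\bm x^\natural\|^2(\bm I_n)_{\mathcal S,\mathcal S} + 2\bm x_{\mathcal S}^\natural(\bm x_{\mathcal S}^\natural)^T\big),
\]
which is the claimed identity.

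For the spectrum I would exploit the rank-one structure: the matrix $\|\bm x^\natural\|^2 \bm I + 2\bm x_{\mathcal S}^\natural(\bm x_{\mathcal S}^\natural)^T$ acts as multiplication by $\|\bm x^\natural\|^2 + 2\|\bm x^\natural\|^2 = 3\|\bm x^\natural\|^2$ on the line through $\bm x_{\mathcal S}^\natural$ and as multiplication by $\|\bm x^\natural\|^2$ on its orthogonal complement; multiplying by the outer factor of $2$ gives one eigenvalue equal to $6\|\bm x^\natural\|^2$ and all remaining eigenvalues equal to $2\|\bm x^\natural\|^2$.

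The only real obstacle is the bookkeeping in the Hessian computation — in particular getting the coefficient $3(\bm a_i^T\bm x)^2 - y_i$ right — and recalling or re-deriving the fourth-moment identity $\mathbb E[(\bm a^T\bm v)^2\bm a\bm a^T] = \|\bm v\|^2\bm I + 2\bm v\bm v^T$; everything else is immediate, consistent with the authors' remark that the result "can be easily derived through basic calculations."
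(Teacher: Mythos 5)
Your proof is correct and is precisely the ``basic calculation'' the paper alludes to when it omits the details: the Hessian formula $\nabla^2 f_I(\bm x)=\frac{1}{m}\sum_i\bigl(3(\bm a_i^T\bm x)^2-y_i\bigr)\bm a_i\bm a_i^T$, the collapse of the coefficient to $2(\bm a_i^T\bm x^\natural)^2$ at the ground truth, the fourth-moment identity $\mathbb E\bigl[(\bm a^T\bm v)^2\bm a\bm a^T\bigr]=\|\bm v\|^2\bm I+2\bm v\bm v^T$, and the rank-one spectral decomposition are all exactly right. No gaps.
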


\vspace{0.1cm}

The next lemma is the so-called Weyl Theorem, which is a classical linear algebra result.

\vspace{0.1cm}

\begin{lemma}\label{weyl}
Suppose $\bm{M}$, $\bm{N}\in\mathbb{R}^{n\times n}$ are two symmetric matrices. The eigenvalues of $\bm{M}$ are denoted as $\lambda_1\geq \lambda_2\geq \cdots \lambda_n$ and the eigenvalues of $\bm{N}$ are denoted as $\mu_1\geq \mu_2\geq \cdots \mu_n$. Then we have
\begin{equation*}
|\mu_i - \lambda_i| \leq \|\bm{M}-\bm{N}\|_2, \quad \forall i=1,2,\cdots,n.
\end{equation*} 
\end{lemma}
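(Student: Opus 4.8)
\textbf{Proof proposal for Lemma \ref{weyl} (Weyl's theorem).}
The plan is to invoke the Courant--Fischer variational (min--max) characterization of the eigenvalues of a symmetric matrix and then compare the Rayleigh quotients of $\bm M$ and $\bm N$ via the perturbation $\bm E := \bm N - \bm M$. Recall that for a symmetric $\bm M\in\mathbb{R}^{n\times n}$ with eigenvalues $\lambda_1\geq\cdots\geq\lambda_n$,
\begin{equation*}
\lambda_i(\bm M) \;=\; \max_{\substack{V\subseteq\mathbb{R}^n\\ \dim V = i}}\ \min_{\substack{\bm v\in V\\ \bm v\neq\bm 0}}\ \frac{\bm v^T\bm M\bm v}{\bm v^T\bm v},
\end{equation*}
and likewise for $\bm N$. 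First I would record the elementary bound $|\bm v^T\bm E\bm v|\leq\|\bm E\|_2\,\bm v^T\bm v$ for all $\bm v$, which is immediate from the definition of the spectral norm of a symmetric matrix.

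Next, fix $i$ and let $V^\star$ be an $i$-dimensional subspace achieving the maximum in the Courant--Fischer formula for $\lambda_i(\bm M)$, i.e.\ $\min_{\bm v\in V^\star,\bm v\neq\bm 0}\bm v^T\bm M\bm v/\bm v^T\bm v = \lambda_i(\bm M)$. For every nonzero $\bm v\in V^\star$ we have
\begin{equation*}
\frac{\bm v^T\bm N\bm v}{\bm v^T\bm v} \;=\; \frac{\bm v^T\bm M\bm v}{\bm v^T\bm v} + \frac{\bm v^T\bm E\bm v}{\bm v^T\bm v} \;\geq\; \lambda_i(\bm M) - \|\bm E\|_2.
\end{equation*}
Taking the minimum over $\bm v\in V^\star$ and then the maximum over all $i$-dimensional subspaces gives $\lambda_i(\bm N)\geq\lambda_i(\bm M)-\|\bm E\|_2$, that is, $\mu_i\geq\lambda_i-\|\bm M-\bm N\|_2$. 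Swapping the roles of $\bm M$ and $\bm N$ (and using $\|\bm N-\bm M\|_2=\|\bm M-\bm N\|_2$) yields $\lambda_i\geq\mu_i-\|\bm M-\bm N\|_2$. Combining the two inequalities gives $|\mu_i-\lambda_i|\leq\|\bm M-\bm N\|_2$ for every $i$, as claimed.

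There is essentially no substantive obstacle here, since this is a classical linear-algebra fact; the only point requiring minor care is keeping the direction of the inequalities and the choice of the max--min (rather than min--max) form of Courant--Fischer consistent, so that the perturbation term enters with the correct sign. One could alternatively phrase the same argument through the identity $\lambda_{\min}(\bm E)\,\bm v^T\bm v\leq\bm v^T\bm E\bm v\leq\lambda_{\max}(\bm E)\,\bm v^T\bm v$ together with $\max\{|\lambda_{\min}(\bm E)|,|\lambda_{\max}(\bm E)|\}=\|\bm E\|_2$, but the Rayleigh-quotient comparison above is the most direct route.
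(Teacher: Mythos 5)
Your proof is correct: the Courant--Fischer max--min argument with the Rayleigh-quotient perturbation bound $|\bm v^T\bm E\bm v|\leq\|\bm E\|_2\,\bm v^T\bm v$ is the standard derivation of Weyl's inequality, and your handling of the two directions (swapping $\bm M$ and $\bm N$) is sound. The paper itself gives no proof of this lemma --- it is stated as a classical linear-algebra fact --- so there is nothing to compare against; your argument fills that gap correctly.
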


\vspace{0.1cm}

\begin{lemma}\label{lemma29}
(Hoeffding-type inequality) Let $X_1, \cdots, X_N$ be independent centered sub-Gaussian random variables, and let $K = \max_{i\in [N]}\|X_i\|_{\psi_2}$, where the sub-Gaussian norm
\begin{equation*}
\|X_i\|_{\psi_2} := \sup_{p\geq 1} p^{-1/2}\left(\mathbb{E}\left[|X|^p \right] \right)^{1/p}.
\end{equation*}
Then for every $\bm{b} = [b_1; \cdots; b_N]\in \mathbb{R}^N$ and every $t\geq 0$, we have
\begin{equation*}
\mathbb{P}\bigg\{ \bigg|\sum_{k=1}^N b_kX_k \bigg| \geq t \bigg\} \leq e\exp \bigg(-\frac{ct^2}{K^2\|\bm{b}\|^2} \bigg).
\end{equation*}
Here $c$ is a universal constant.
\end{lemma}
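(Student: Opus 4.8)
The plan is to prove this by the classical Chernoff (exponential-moment) method. The only genuinely nontrivial ingredient is a standard sub-Gaussian moment-generating-function bound, which I would establish first as \textbf{Step 1}: if $X$ is centered with $\|X\|_{\psi_2}\le K$, then $\mathbb{E}[\exp(\lambda X)]\le\exp(C_0\lambda^2K^2)$ for every $\lambda\in\mathbb{R}$, where $C_0$ is a universal constant. The definition of the $\psi_2$ norm immediately gives the moment bound $\mathbb{E}[|X|^p]\le K^p p^{p/2}$ for all $p\ge 1$. For $|\lambda|K$ below a fixed absolute constant, expanding $\exp(\lambda X)$ as a power series, discarding the linear term using $\mathbb{E}[X]=0$, and bounding the remaining terms with the moment estimate together with $p!\ge(p/e)^p$ shows the series is dominated by a convergent geometric-type series whose sum is at most $\exp(C_0\lambda^2K^2)$. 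For $|\lambda|K$ above that constant, one instead uses a Young-type inequality to reduce to the Gaussian-square moment bound $\mathbb{E}[\exp(X^2/(C_1K^2))]\le 2$ for a suitable absolute constant $C_1$, which itself follows from the same moment estimate; this again yields the stated bound after adjusting $C_0$.

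\textbf{Step 2} is the Chernoff argument for the weighted sum. Set $S=\sum_{k=1}^N b_kX_k$, and assume $\bm{b}\neq\bm{0}$ (the case $\bm{b}=\bm{0}$ is trivial). Each $b_kX_k$ is centered with $\|b_kX_k\|_{\psi_2}\le|b_k|K$, so by Step 1 and independence, for any $\lambda>0$,
\begin{equation*}
\mathbb{E}[\exp(\lambda S)]=\prod_{k=1}^N\mathbb{E}[\exp(\lambda b_kX_k)]\le\prod_{k=1}^N\exp(C_0\lambda^2b_k^2K^2)=\exp\!\left(C_0\lambda^2K^2\|\bm{b}\|^2\right).
\end{equation*}
Markov's inequality then gives $\mathbb{P}(S\ge t)\le e^{-\lambda t}\,\mathbb{E}[\exp(\lambda S)]\le\exp(-\lambda t+C_0\lambda^2K^2\|\bm{b}\|^2)$.

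\textbf{Step 3} optimizes and symmetrizes. Choosing $\lambda=t/(2C_0K^2\|\bm{b}\|^2)$ makes the exponent equal to $-t^2/(4C_0K^2\|\bm{b}\|^2)$, so $\mathbb{P}(S\ge t)\le\exp(-t^2/(4C_0K^2\|\bm{b}\|^2))$. Applying the identical argument to $-S=\sum_k(-b_k)X_k$, whose coefficient vector has the same Euclidean norm, and taking a union bound over the two one-sided events,
\begin{equation*}
\mathbb{P}\{|S|\ge t\}\le 2\exp\!\left(-\frac{t^2}{4C_0K^2\|\bm{b}\|^2}\right)\le e\exp\!\left(-\frac{ct^2}{K^2\|\bm{b}\|^2}\right),
\end{equation*}
with $c=1/(4C_0)$, using $2\le e$. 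This is exactly the claimed inequality.

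The main obstacle — and really the only part requiring care — is \textbf{Step 1}: passing from the moment-based definition of $\|\cdot\|_{\psi_2}$ to a two-sided exponential-moment bound valid for \emph{all} $\lambda$, which forces the small-/large-$\lambda$ case split and some bookkeeping of absolute constants. Steps 2--3 are a routine application of the exponential Markov inequality and a union bound. Since the statement is entirely standard in the non-asymptotic concentration literature, an equally acceptable alternative is simply to cite it.
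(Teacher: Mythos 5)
Your proposal is correct, and the argument is the standard one: Step 1 is the usual equivalence between the moment characterization of the $\psi_2$ norm and a two-sided MGF bound $\mathbb{E}[\exp(\lambda X)]\le \exp(C_0\lambda^2K^2)$ (with the necessary small-/large-$|\lambda|K$ case split, using $\mathbb{E}[X]=0$ to kill the linear term in the small regime and a Young-type inequality plus $\mathbb{E}[\exp(X^2/(C_1K^2))]\le 2$ in the large regime), and Steps 2--3 correctly use the homogeneity $\|b_kX_k\|_{\psi_2}\le |b_k|K$, independence to factor the MGF, Chernoff optimization at $\lambda = t/(2C_0K^2\|\bm{b}\|^2)$, and a two-sided union bound absorbed into the prefactor via $2\le e$, with the degenerate case $\bm{b}=\bm{0}$ handled separately. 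For comparison: the paper does not prove this lemma at all --- it is stated as a known Hoeffding-type concentration inequality for sub-Gaussian variables (Vershynin's non-asymptotic random matrix analysis) and used as a black box in the proof of Lemma~\ref{twflemmaA61} --- so your closing remark that citing it would be acceptable matches the paper's own treatment; your sketch simply makes the standard derivation explicit, with the only delicate point (the MGF bound from the moment-based definition) correctly identified and handled.
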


\vspace{0.1cm}

\begin{lemma}\label{lemma30}
(Bernstein-type inequality) Let $X_1, \cdots, X_N$ be independent centered sub-exponential random variables, and let $K=\max_{i}\|X_i\|_{\psi_1}$, where the sub-exponential norm
\begin{equation*}
\|X_i\|_{\psi_1} := \sup_{p\geq 1} p^{-1}\left(\mathbb{E}\left[|X|^p \right] \right)^{1/p}.
\end{equation*}
Then for every $\bm{b} = [b_1; \cdots; b_N]\in \mathbb{R}^N$ and every $t\geq 0$, we have
\begin{equation*}
\mathbb{P}\bigg\{ \bigg|\sum_{k=1}^N b_kX_k \bigg| \geq t \bigg\} \leq 2\exp\left(-c\min\left(\frac{t^2}{K^2\|\bm{b}\|^2}, \frac{t}{K\|\bm{b}\|_{\infty}}\right) \right).
\end{equation*}
Here $c$ is a universal constant.
\end{lemma}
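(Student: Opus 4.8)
The plan is to prove this by the classical Chernoff (exponential-moment) method. First I would normalize to $K=1$: replacing each $X_k$ by $X_k/K$ keeps the variables independent and centered with $\psi_1$-norm at most $1$, and since $\sum_k b_k X_k = \sum_k (Kb_k)(X_k/K)$, the general statement follows from the $K=1$ case applied to the coefficient vector $K\bm{b}$, noting $\|K\bm{b}\|=K\|\bm{b}\|$ and $\|K\bm{b}\|_\infty = K\|\bm{b}\|_\infty$. It also suffices to bound $\mathbb{P}\{\sum_k b_k X_k \ge t\}$: applying this to $-X_k$ (centered, sub-exponential, same norm) and taking a union bound produces the two-sided estimate at the cost of the factor $2$.

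The core step, and the one I expect to be the only genuinely delicate point, is an exponential-moment estimate for a single centered sub-exponential variable $X$ with $\|X\|_{\psi_1}\le 1$: there are absolute constants $c_0\in(0,1)$ and $C_0>0$ with $\mathbb{E}[e^{\lambda X}]\le e^{C_0\lambda^2}$ for all $|\lambda|\le c_0$. I would obtain this by expanding $e^{\lambda X}=1+\lambda X+\sum_{p\ge 2}\lambda^p X^p/p!$, taking expectations (legitimate by absolute convergence for $|\lambda|$ small), using $\mathbb{E}[X]=0$ to annihilate the linear term, and estimating $|\mathbb{E}[X^p]|\le \mathbb{E}|X|^p\le (p\|X\|_{\psi_1})^p\le p^p$ together with $p!\ge (p/e)^p$. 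This gives $\sum_{p\ge 2}|\lambda|^p p^p/p!\le \sum_{p\ge 2}(e|\lambda|)^p\le 2e^2\lambda^2$ once $e|\lambda|\le 1/2$, hence $\mathbb{E}[e^{\lambda X}]\le 1+2e^2\lambda^2\le e^{2e^2\lambda^2}$; so one may take $c_0=1/(2e)$ and $C_0=2e^2$. This is where the moment definition of $\|\cdot\|_{\psi_1}$ gets converted into control of the moment generating function and where all the absolute constants are pinned down.

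Finally I would assemble the tail bound. For $\lambda\in[0,\,c_0/\|\bm{b}\|_\infty]$ one has $|\lambda b_k|\le c_0$ for every $k$, so by independence $\mathbb{E}[\exp(\lambda\sum_k b_kX_k)]=\prod_k\mathbb{E}[e^{\lambda b_kX_k}]\le \exp(C_0\lambda^2\|\bm{b}\|^2)$, and Markov's inequality yields $\mathbb{P}\{\sum_k b_kX_k\ge t\}\le \exp(-\lambda t+C_0\lambda^2\|\bm{b}\|^2)$. Then I would optimize over the admissible range of $\lambda$. The unconstrained minimizer of $-\lambda t+C_0\lambda^2\|\bm{b}\|^2$ is $\lambda^\star=t/(2C_0\|\bm{b}\|^2)$, with value $-t^2/(4C_0\|\bm{b}\|^2)$; if $\lambda^\star\le c_0/\|\bm{b}\|_\infty$ this is attainable, giving $\exp(-t^2/(4C_0\|\bm{b}\|^2))$. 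Otherwise I take $\lambda=c_0/\|\bm{b}\|_\infty<\lambda^\star$, and since then $C_0\lambda^2\|\bm{b}\|^2<C_0\lambda\lambda^\star\|\bm{b}\|^2=\lambda t/2$, the exponent is at most $-\lambda t/2=-c_0 t/(2\|\bm{b}\|_\infty)$. In both cases $\mathbb{P}\{\sum_k b_kX_k\ge t\}\le \exp\!\big(-c\min(t^2/\|\bm{b}\|^2,\,t/\|\bm{b}\|_\infty)\big)$ with $c=\min\{1/(4C_0),\,c_0/2\}$; undoing the $K=1$ normalization and combining the two one-sided bounds gives the claim with the stated factor $2$.
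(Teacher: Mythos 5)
Your proof is correct and complete. The paper itself states this Bernstein-type inequality as a classical fact without providing a proof (it is not among the lemmas proved in the appendix), so there is nothing to compare against; what you have written is the standard Chernoff/exponential-moment argument, with the reduction to $K=1$, the moment-to-MGF conversion $\mathbb{E}[e^{\lambda X}]\leq e^{C_0\lambda^2}$ for $|\lambda|\leq c_0$ via the Taylor expansion and $\mathbb{E}|X|^p\leq p^p$, and the constrained optimization over $\lambda$ that produces the two regimes of the minimum. All constants are pinned down correctly, and the only implicit assumption ($\bm{b}\neq\bm{0}$, so that $c_0/\|\bm{b}\|_\infty$ is defined) is harmless since the claim is trivial otherwise.
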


\vspace{0.1cm}

\begin{lemma}\label{bounded}
(Bernstein's inequality for bounded distributions) Let $X_1, \cdots, X_N$ be independent mean zero random variables, such that $|X_i|\leq K$ for all $i$. Then, for every $t\geq 0$, we have
\begin{equation*}
\mathbb{P}\bigg\{ \bigg|\sum_{i=1}^N X_i \bigg| \geq t \bigg\} \leq 2\exp\left(-\frac{t^2/2}{\sigma^2 + Kt/3} \right).
\end{equation*}
Here $\sigma^2 = \sum_{i=1}^N \mathbb{E}X_i^2$ is the variance of the sum.
\end{lemma}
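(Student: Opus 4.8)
The plan is to use the exponential-moment (Chernoff) method. Write $S := \sum_{i=1}^N X_i$. For any $\lambda > 0$, Markov's inequality applied to $e^{\lambda S}$ gives $\mathbb{P}\{S \ge t\} \le e^{-\lambda t}\,\mathbb{E}[e^{\lambda S}]$, and by independence $\mathbb{E}[e^{\lambda S}] = \prod_{i=1}^N \mathbb{E}[e^{\lambda X_i}]$. The core step is therefore to control each factor $\mathbb{E}[e^{\lambda X_i}]$.

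First I would bound a single moment generating function. Since $\mathbb{E}[X_i] = 0$, expanding the exponential gives $\mathbb{E}[e^{\lambda X_i}] = 1 + \sum_{k\ge 2} \lambda^k \mathbb{E}[X_i^k]/k!$. Using $|X_i| \le K$ we have $|\mathbb{E}[X_i^k]| \le K^{k-2}\mathbb{E}[X_i^2]$ for $k \ge 2$, so $\mathbb{E}[e^{\lambda X_i}] \le 1 + K^{-2}\mathbb{E}[X_i^2]\big(e^{\lambda K} - 1 - \lambda K\big) \le \exp\big(K^{-2}\mathbb{E}[X_i^2](e^{\lambda K} - 1 - \lambda K)\big)$, the last step using $1+u \le e^u$. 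Multiplying over $i$ yields $\mathbb{E}[e^{\lambda S}] \le \exp\big(\sigma^2 K^{-2}(e^{\lambda K}-1-\lambda K)\big)$. Next I would invoke the elementary scalar inequality $e^u - 1 - u \le \tfrac{u^2/2}{1-u/3}$, valid for $0 \le u < 3$, which follows by comparing the two power series term by term via $k! \ge 2\cdot 3^{\,k-2}$ for $k \ge 2$. Substituting $u = \lambda K$ with $0 < \lambda < 3/K$ gives $\mathbb{P}\{S \ge t\} \le \exp\big(-\lambda t + \tfrac{\lambda^2 \sigma^2/2}{1-\lambda K/3}\big)$.

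Finally I would optimize the exponent by taking $\lambda = t/(\sigma^2 + Kt/3) \in (0, 3/K)$; a short computation shows $1 - \lambda K/3 = \sigma^2/(\sigma^2 + Kt/3)$, hence $\tfrac{\lambda^2\sigma^2/2}{1-\lambda K/3} = \tfrac12\lambda t$, so the exponent equals $-\tfrac12\lambda t = -\tfrac{t^2/2}{\sigma^2 + Kt/3}$. This proves the one-sided estimate $\mathbb{P}\{S \ge t\} \le \exp\big(-\tfrac{t^2/2}{\sigma^2+Kt/3}\big)$. Applying the identical argument to the variables $-X_i$ (which are also mean zero, bounded by $K$, with the same variance sum) yields the matching lower-tail bound, and a union bound over the two events produces the factor $2$ in the statement.

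The only mildly delicate point is the scalar inequality $e^u - 1 - u \le \tfrac{u^2/2}{1-u/3}$ on $[0,3)$; everything else is a routine instance of the Chernoff method, so I do not expect a genuine obstacle here—this is a standard result recorded for completeness.
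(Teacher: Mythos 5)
Your proof is correct and complete: the Chernoff bound with the moment estimate $|\mathbb{E}[X_i^k]|\leq K^{k-2}\mathbb{E}[X_i^2]$, the scalar inequality $e^u-1-u\leq \frac{u^2/2}{1-u/3}$ (justified by $k!\geq 2\cdot 3^{k-2}$), and the choice $\lambda = t/(\sigma^2+Kt/3)$ are all carried out accurately, and the symmetrization to $-X_i$ gives the factor $2$. Note that the paper states this lemma without proof, as a classical fact; your argument is exactly the standard derivation one would cite, so there is no discrepancy to reconcile.
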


\subsection{Proofs of Technical Lemmas}\label{sec-proof-lemma}
In this subsection, we provide proofs for the technical lemmas: Lemmas~\ref{twflemmaA61},~\ref{lip},~\ref{lemma8},~\ref{youlemma3}, and~\ref{youlemma4}.

\subsubsection{Proof of Lemma \ref{twflemmaA61}}
\begin{proof}[Proof of Lemma \ref{twflemmaA61}]
If $|\mathcal{S}|=|\mathcal{S}^{\natural}|$, then $\mathcal{S}=\mathcal{S}^{\natural}$ and the result is established by Lemma~\ref{twflemmaA6}. We now focus on the case where $\mathcal{T}:= \mathcal{S}\backslash \mathcal{S}^{\natural} \neq \emptyset$. We first define two matrices as follows:
\begin{equation*}
\bm G = \begin{bmatrix}
\bm{a}_{i,\mathcal{S}^{\natural}}\bm{a}_{i,\mathcal{S}^{\natural}}^T & \bm{a}_{i,\mathcal{S}^{\natural}}\bm{a}_{i,\mathcal{T}}^T \\
\bm{a}_{i,\mathcal{T}}\bm{a}_{i, \mathcal{S}^{\natural}}^T & \bm{a}_{i,\mathcal{T}}\bm{a}_{i, \mathcal{T}}^T
\end{bmatrix}, \quad \bm H = \begin{bmatrix}
\|\bm{x}^{\natural}\|^2(\bm{I}_n)_{\mathcal{S}^{\natural},{\mathcal{S}^{\natural}}} + 2\bm{x}_{\mathcal{S}^{\natural}}^{\natural}(\bm{x}_{\mathcal{S}^{\natural}}^{\natural})^T  & \bm{0} \\
\bm{0} &\|\bm{x}^{\natural}\|^2(\bm{I}_n)_{\mathcal{T}, \mathcal{T}}
\end{bmatrix}.
\end{equation*}
Then we rephrase the term on the left-hand side of \eqref{eqtwflemmaA61} as follows:
\begin{equation*}
\begin{split}
& \bigg\|\frac{1}{m}\sum_{i=1}^m |\bm{a}_{i,\mathcal{S}^{\natural}}^T\bm{x}^{\natural}|^2 \bm G
- \bm H \bigg\| \leq  \bigg\|\frac{1}{m}\sum_{i=1}^m |\bm{a}_{i,\mathcal{S}^{\natural}}^T\bm{x}^{\natural}|^2 \bm{a}_{i,\mathcal{S}^{\natural}}\bm{a}_{i,\mathcal{S}^{\natural}}^T  - \big(\|\bm{x}^{\natural}\|^2(\bm{I}_n)_{\mathcal{S}^{\natural},{\mathcal{S}^{\natural}}} + 2\bm{x}_{\mathcal{S}^{\natural}}^{\natural}(\bm{x}_{\mathcal{S}^{\natural}}^{\natural})^T  \big) \bigg\| \\
& \qquad \qquad + \bigg\|\frac{1}{m}\sum_{i=1}^m |\bm{a}_{i,\mathcal{S}^{\natural}}^T\bm{x}^{\natural}|^2 \bm{a}_{i,\mathcal{S}^{\natural}}\bm{a}_{i,\mathcal{T}}^T  \bigg\|
+  \bigg\|\frac{1}{m}\sum_{i=1}^m |\bm{a}_{i,\mathcal{S}^{\natural}}^T\bm{x}^{\natural}|^2 \bm{a}_{i,\mathcal{T}}\bm{a}_{i,\mathcal{T}}^T  - \|\bm{x}^{\natural}\|^2(\bm{I}_n)_{\mathcal{T},{\mathcal{T}}}\bigg\|.
\end{split}
\end{equation*}
The first term can be bounded with overwhelming probability via a direct application of Lemma~\ref{twflemmaA6} as below whenever $m\geq c_1\delta^{-2}s\log (n/s)$:
\begin{equation*}
\begin{split}
 \bigg\|\frac{1}{m}\sum_{i=1}^m |\bm{a}_{i,\mathcal{S}^{\natural}}^T\bm{x}^{\natural}|^2 \bm{a}_{i,\mathcal{S}^{\natural}}\bm{a}_{i,\mathcal{S}^{\natural}}^T  - \Big(\|\bm{x}^{\natural}\|^2(\bm{I}_n)_{\mathcal{S}^{\natural},{\mathcal{S}^{\natural}}} + 2\bm{x}_{\mathcal{S}^{\natural}}^{\natural}(\bm{x}_{\mathcal{S}^{\natural}}^{\natural})^T  \Big) \bigg\| \leq \delta/4.
\end{split}
\end{equation*}
For the other two terms, it is enough to consider $\bm{x}^{\natural} = \bm{e}_1$ because the unitary invariance of the Gaussian measure and rescaling. Then for a prefixed subset $\mathcal{S}$ (thus $\mathcal{T}$ is also fixed), there exists a unit vector $\bm{u}\in\mathbb{R}^n$ with $\mathrm{supp}(\bm{u})\subseteq \mathcal{T}$ such that the second term is equal to
\begin{equation*}
\begin{split}
\bigg\|\frac{1}{m}\sum_{i=1}^m (a_{i}(1))^3 \bm{a}_{i,\mathcal{T}}^T  \bigg\| 
&= \frac{1}{m} \sum_{i=1}^m \left(a_i(1) \right)^3(\bm{a}_i^T\bm{u}).
\end{split}
\end{equation*}
We emphasize that $\bm{a}_i^T\bm{u} = \bm{a}_{i, \mathcal{T}}^T\bm{u}_{\mathcal{T}}$ is a random Gaussian variable distributed as $\mathcal{N}(0,1)$  for all $\bm{u}\in\mathbb{R}^{n}, \|\bm{u}\|=1, \mathrm{supp}(\bm{u})\subseteq \mathcal{T}$ and independent of $a_i(1)$ for all $i\in [m]$. So for one realization of $\{a_i(1) \}$, Lemma~\ref{lemma29} (Hoeffding-type inequality) implies 
\begin{equation*}
\mathbb{P}\bigg\{\bigg| \frac{1}{m} \sum_{i=1}^m \left(a_i(1) \right)^3 \bm{a}_i^T\bm{u} \bigg|> t \bigg\} \leq e\exp\bigg(-\frac{c_2m^2 t^2}{\sum_{i=1}^m |a_i(1)|^6}\bigg),
\end{equation*}
for any $t>0$. Define the set $\mathbb{W}_r^n$ to be a collection of all the index set in $[n]$ with cardinality no larger than $r$, i.e. $\mathbb{W}_r^n:= \left\{\mathcal{S}\subseteq [n]: |\mathcal{S}|\leq r \right\}$. Note that the cardinality of $\mathbb{W}_r^n$ can be bounded by $\sum_{j=1}^{r} \binom{n}{j} \leq \left(\frac{en}{r} \right)^{r}$. Taking $t=\delta/8$, together with a union bound on the set $\mathbb{W}_r^n$, we obtain for any subset $\mathcal{T}=\mathcal{S}\backslash \mathcal{S}^{\natural}$,
\begin{equation*}
\mathbb{P}\bigg\{ \bigg\|\frac{1}{m} \sum_{i=1}^m \left(a_i(1) \right)^3 \bm{a}_{i,\mathcal{T}}^T\bigg\| > \delta/4 \bigg\} \leq e\exp\left(-\frac{c_2m^2 \delta^2}{64\sum_{i=1}^m |a_i(1)|^6} + 6r\log (n/r) \right).
\end{equation*}
Now with an application of Chebyshev's inequality, we have $\sum_{i=1}^m |a_i(1)|^6 \leq 20m$ with probability at least $1- c_3m^{-1}$. Substituting this into the above, we conclude that for any subset $\mathcal{S}\subseteq\mathbb{W}_r^n$, whenever $m\geq c_4\delta^{-2}r\log (n/r)$ for some sufficiently large $c_4$, 
\begin{equation*}
\bigg\|\frac{1}{m} \sum_{i=1}^m \left(a_i(1) \right)^3 \bm{a}_{i,\mathcal{T}}^T\bigg\| \leq \delta/4 ,
\end{equation*}
with probability at least $1-c_3m^{-1} - e\exp(-c_5\delta^2m)$.

Assuming $\bm{x}^{\natural} = \bm{e}_1$ and fixing a subset $\mathcal{S}\subseteq [n]$ (thus $\mathcal{T}$ is also fixed), we can obtain:
\begin{equation*}
\bigg\|\frac{1}{m}\sum_{i=1}^m |a_i(1)|^2 \bm{a}_{i,\mathcal{T}}\bm{a}_{i,\mathcal{T}}^T  - (\bm{I}_n)_{\mathcal{T},{\mathcal{T}}}\bigg\| \leq \bigg\|\frac{1}{m}\sum_{i=1}^m |a_i(1)|^2 \left( \bm{a}_{i,\mathcal{T}}\bm{a}_{i,\mathcal{T}}^T - \bm{I}_{|\mathcal{T}|}\right) \bigg\| + \bigg| \frac{1}{m}\sum_{i=1}^m |a_i(1)|^2 - 1 \bigg|,
\end{equation*}
for which there exists a unit vector $\bm{u}\in\mathbb{R}^{n}$ with $\mathrm{supp}(\bm{u})\subseteq \mathcal{T}$ such that 
\begin{equation*}
\begin{split}
 \bigg\|\frac{1}{m}\sum_{i=1}^m |a_i(1)|^2 \left( \bm{a}_{i,\mathcal{T}}\bm{a}_{i,\mathcal{T}}^T - \bm{I}_{|\mathcal{T}|}\right) \bigg\|  
& =  \frac{1}{m}\sum_{i=1}^m  \big|a_i(1) \big|^2 \left| \left(\bm{a}_{i,\mathcal{T}}^T\bm{u}_{\mathcal{T}}\right)^2 - 1 \right|.
\end{split}
\end{equation*}
Note that $\{a_i(1)\}$ is independent of $\{\bm{a}_{i,\mathcal{T}} \}$, an application of Bernstein's inequality implies
\begin{equation*}
\mathbb{P}\bigg\{\frac{1}{m}\sum_{i=1}^m  |a_i(1)|^2 \left| \left(\bm{a}_{i,\mathcal{T}}^T\bm{u}_{\mathcal{T}}\right)^2 - 1 \right| > t \bigg\} \leq  2 \exp \big(-c_6 \min\left(d_1, d_2 \right) \big),
\end{equation*}
where $d_1 = \frac{t^2}{c_7^2\sum_{i=1}^m|a_i(1)|^4}$, and $d_2 = \frac{t}{c_7\max_{i\in [m]}|a_i(1)|^2}$. Taking $t = \delta/8$, together with a union bound on the subset $\mathbb{W}_r^n$, we obtain for any subset $\mathcal{S}\in \mathbb{W}_r^n$,
\begin{equation*}
\mathbb{P}\bigg\{ \bigg\|\frac{1}{m}\sum_{i=1}^m |a_i(1)|^2 \left( \bm{a}_{i,\mathcal{T}}\bm{a}_{i,\mathcal{T}}^T - \bm{I}_{|\mathcal{T}|}\right) \bigg\|  > \delta/4 \bigg\} \leq 2 \exp \left(-c_6 \min\left(d_3, d_4 \right) + 6r\log (n/r) \right),
\end{equation*}
where $d_3 = \frac{m^2\delta^2}{64c_7^2\sum_{i=1}^m|a_i(1)|^4}$ and $d_4 = \frac{m\delta}{8c_7\max_{i\in [m]}|a_i(1)|^2}$. Applying Chebyshev's inequality and the union bound, we obtain:
\begin{equation*}
\sum_{i=1}^m |a_i(1)|^4 \leq 10m \quad \text{and}\quad \max_{i\in [m]}|a_i(1)|^2 \leq 10\log m
\end{equation*}
hold with probability at least $1 - c_8m^{-1} - m^{-4}$. To conclude, for any subset $\mathcal{S}\in\mathbb{W}_r^n$, when $m\geq c_9\delta^{-2}r\log (n/r)$ for some sufficiently large constant $c_9$,
\begin{equation*}
 \bigg\|\frac{1}{m}\sum_{i=1}^m |a_i(1)|^2 \left( \bm{a}_{i,\mathcal{T}}\bm{a}_{i,\mathcal{T}}^T - \bm{I}_{|\mathcal{T}|}\right) \bigg\|  \leq \delta/4 
\end{equation*}
with probability at least $1-c_8m^{-1} - m^{-4} - 2\exp\left( -c_{10}\delta^2m/\log m\right)$.
Finally, an application of Chebyshev's inequality implies
\begin{equation*}
\bigg| \frac{1}{m}\sum_{i=1}^m |a_i(1)|^2 - 1  \bigg| \leq \delta/4
\end{equation*}
with probability at least $1-c_{11}\delta^{-2}m^{-1}$. The proof is finished by combining the above bounds and probabilities.
\end{proof}

\subsubsection{Proof of Lemma \ref{lip}}
\begin{proof}[Proof of Lemma \ref{lip}]
A simple calculation yields
\begin{equation*}
\begin{split}
\nabla_{\mathcal{S}, \mathcal{T}}^2f_I(\bm{x}) - \nabla_{\mathcal{S}, \mathcal{T}}^2f_I(\bm{z})
 = \frac{3}{m}\bm{A}_{\mathcal{S}}^T D(|\bm{A}\bm{x}|^2 - |\bm{A}\bm{z}|^2)\bm{A}_{\mathcal{T}} .
\end{split}
\end{equation*}
There exist unit vectors $\bm{u},\bm{v}\in \mathbb{R}^n$ with support $\mathrm{supp}(\bm{u})\subseteq \mathcal{S}$ and $\mathrm{supp}(\bm{v})\subseteq \mathcal{T}$ such that
\begin{equation*}
\begin{split}
&\left\|\nabla_{\mathcal{S}, \mathcal{T}}^2f_I(\bm{x}) - \nabla_{\mathcal{S}, \mathcal{T}}^2f_I(\bm{z})\right\| \\
 = & \frac{3}{m} \left\|\bm{A}_{\mathcal{S}}^T D(|\bm{A}\bm{x}|^2 - |\bm{A}\bm{z}|^2)\bm{A}_{\mathcal{T}} \right\| \\
= & \frac{3}{m} \bigg| \sum_{i=1}^m \left(|\bm{a}_{i}^T\bm{x}|^2 - |\bm{a}_{i}^T\bm{z}|^2 \right)(\bm{a}_{i,\mathcal{S}}^T\bm{u}_{\mathcal{S}})(\bm{a}_{i,\mathcal{T}}^T\bm{v}_{\mathcal{T}}) \bigg| \\
= & \frac{3}{m} \bigg| \sum_{i=1}^m \left(|\bm{a}_{i,\mathcal{R}}^T\bm{x}_{\mathcal{R}}|^2 - |\bm{a}_{i,\mathcal{R}}^T\bm{z}_{\mathcal{R}}|^2 \right)(\bm{a}_{i,\mathcal{R}}^T\bm{u}_{\mathcal{R}})(\bm{a}_{i,\mathcal{R}}^T\bm{v}_{\mathcal{R}}) \bigg| \\
\leq & \frac{3}{m}  \sum_{i=1}^m \left|\left(\bm{a}_{i,\mathcal{R}}^T(\bm{x}_{\mathcal{R}}+\bm{z}_{\mathcal{R}})\right) \left(\bm{a}_{i,\mathcal{R}}^T(\bm{x}_{\mathcal{R}}-\bm{z}_{\mathcal{R}})\right)(\bm{a}_{i,\mathcal{R}}^T\bm{u}_{\mathcal{R}})(\bm{a}_{i,\mathcal{R}}^T\bm{v}_{\mathcal{R}}) \right| \\
\leq & \frac{3}{m} \|\bm{A}_{\mathcal{R}}\|_{2\rightarrow 4}^4\|\bm{x}+\bm{z}\|\|\bm{x}-\bm{z}\| \\
\leq & \frac{3}{m} \left((3m)^{\frac{1}{4}} + (|\mathcal{R}|)^{1/2} + 2\sqrt{\log m} \right)^4\|\bm{x}+\bm{z}\|\|\bm{x}-\bm{z}\|, 
\end{split}
\end{equation*}
where the last inequality is implied by Lemma~\ref{twflemmaA5}. 
\end{proof}

\subsubsection{Proof of Lemma \ref{lemma8}}
\begin{proof}[Proof of Lemma \ref{lemma8}]
We begin by proving result (i) in Lemma~\ref{lemma8}. Define $\mathcal{R}=\mathcal{T}\cup \mathcal{S} \cup \mathcal{S}^{\natural}$. Then $|\mathcal{R}|\leq 3s$. Applying Lemma~\ref{lip} yields that
\begin{equation*}
\begin{split}
\left\|\nabla_{\mathcal{R},\mathcal{R}}^2 f_I(\bm{x}) - \nabla_{\mathcal{R},\mathcal{R}}^2 f_I(\bm{x}^{\natural}) \right\| 
&\leq \frac{3}{m} \left((3m)^{1/4} + (3s)^{1/2} + 2\sqrt{\log m} \right)^4\|\bm{x}+\bm{x}^{\natural}\|\|\bm{x}-\bm{x}^{\natural}\|  \\
&\leq 10\|\bm{x}+\bm{x}^{\natural}\|\|\bm{x}-\bm{x}^{\natural}\|,
\end{split}
\end{equation*}
provided $m\geq 30s^2$. Furthermore, following from Lemmas~\ref{twflemmaA61}, \ref{weyl}, and \ref{expe}, and the interlacing inequality, we obtain that:
\begin{equation*}
\begin{split}
& \lambda_{\min}(\nabla_{\mathcal{S},\mathcal{S}}^2 f_I(\bm{x})) 
 \geq \lambda_{\min} \left(\nabla_{\mathcal{R},\mathcal{R}}^2 f_I \left(\bm{x} \right) \right)\\
\geq & \lambda_{\min} \left(\nabla_{\mathcal{R},\mathcal{R}}^2 f_I \left(\bm{x}^{\natural} \right) \right) - \left\|\nabla_{\mathcal{R},\mathcal{R}}^2 f_I \left(\bm{x} \right) - \nabla_{\mathcal{R}}^2 f_I \left(\bm{x}^{\natural} \right) \right\| \\
\geq & \lambda_{\min}\left(\mathbb{E}\left[\nabla_{\mathcal{R},\mathcal{R}}^2 f_I \left(\bm{x}^{\natural} \right)\right]\right)  - \left\|\nabla_{\mathcal{R},\mathcal{R}}^2 f_I \left(\bm{x}^{\natural} \right) - \mathbb{E}\left[\nabla_{\mathcal{R},\mathcal{R}}^2 f_I \left(\bm{x}^{\natural} \right)\right] \right\| - 10 \left\|\bm{x}+\bm{x}^{\natural} \right\|  \left\|\bm{x}-\bm{x}^{\natural} \right\| \\
\geq & 2 \left\|\bm{x}^{\natural} \right\|^2-2\delta \left\|\bm{x}^{\natural} \right\|^2 - 10 \left\|\bm{x}+\bm{x}^{\natural} \right\| \left\|\bm{x}-\bm{x}^{\natural} \right\| \\
\geq & \left(2-2\delta-10\gamma(2+\gamma)\right) \left\|\bm{x}^{\natural} \right\|^2,
\end{split}
\end{equation*}
where the last inequality is implied by $\mathrm{dist}(\bm{x},\bm{x}^{\natural})\leq \gamma\|\bm{x}^{\natural}\| $. Similarly, the upper bound for the largest eigenvalue of $\nabla_{\mathcal{S}}^2 f_I(\bm{x})$ can be bounded as follows:
\begin{equation*}
\begin{split}
&\lambda_{\max}(\nabla_{\mathcal{S},\mathcal{S}}^2 f_I \left(\bm{x}) \right) \leq \lambda_{\max} \left(\nabla_{\mathcal{R},\mathcal{R}}^2 f_I \left(\bm{x} \right) \right) \\
\leq & \lambda_{\max} \left(\nabla_{\mathcal{R},\mathcal{R}}^2 f_I \left(\bm{x}^{\natural} \right) \right) + \left\|\nabla_{\mathcal{R},\mathcal{R}}^2 f_I(\bm{x}) - \nabla_{\mathcal{R},\mathcal{R}}^2 f_I \left(\bm{x}^{\natural} \right) \right\| \\
\leq & \lambda_{\max}\left(\mathbb{E}\left[\nabla_{\mathcal{R},\mathcal{R}}^2 f_I \left(\bm{x}^{\natural} \right)\right]\right)  + \left\|\nabla_{\mathcal{R},\mathcal{R}}^2 f_I \left(\bm{x}^{\natural} \right) - \mathbb{E}\left[\nabla_{\mathcal{R},\mathcal{R}}^2 f_I \left(\bm{x}^{\natural} \right)\right] \right\| + 10 \left\|\bm{x}+\bm{x}^{\natural} \right\|  \left\|\bm{x}-\bm{x}^{\natural} \right\| \\
\leq & 6 \left\|\bm{x}^{\natural} \right\|^2+2\delta \left\|\bm{x}^{\natural} \right\|^2 + 10 \left\|\bm{x}+\bm{x}^{\natural} \right\|  \left\|\bm{x}-\bm{x}^{\natural} \right\|  \\
\leq & \left(6+2\delta+10\gamma(2+\gamma)\right) \left\|\bm{x}^{\natural} \right\|^2.
\end{split}
\end{equation*}

Now we turn to proving result (ii) in Lemma~\ref{lemma8}. Define $\mathcal{R} = \mathcal{T}\cup \mathcal{S} \cup \mathcal{S}^{\natural}$. Thus $|\mathcal{R}|\leq 3s$. For the disjoint subsets $\mathcal{S}$ and $\mathcal{S}^{\natural}\backslash\mathcal{S}$, we consider $\nabla^2_{\mathcal{S}, \mathcal{S}^{\natural}\backslash\mathcal{S}}f_I(\bm{x})$, which is a submatrix of $\nabla^2_{\mathcal{R},\mathcal{R}}f_I(\bm{x})-4\|\bm{x}^{\natural}\|^2\bm{I}$. We note that the spectral norm of a submatrix never exceeds the norm of the entire matrix. By employing the result from part (i), we can deduce that 
\begin{equation*}
\begin{split}
\left\|\nabla_{{\mathcal{S}}, \mathcal{S}^{\natural}\backslash\mathcal{S}}^2 f_I(\bm{x})\right\|
&\leq \Big\|\nabla^2_{\mathcal{R},\mathcal{R}}f_I(\bm{x})-4\|\bm{x}^{\natural}\|^2\bm{I} \Big\| \\
&\leq \max\left\{ 6+2\delta+10\gamma(2+\gamma)-4, 4 - (2-2\delta-10\gamma(2+\gamma))\right \}\cdot \|\bm{x}^{\natural}\|^2 \\
&=  \left(2+2\delta+10\gamma(2+\gamma)\right)\|\bm{x}^{\natural}\|^2,
\end{split}
\end{equation*}
completing the proof.
\end{proof}

\subsubsection{Proof of Lemma \ref{youlemma3}}
\begin{proof}[Proof of Lemma \ref{youlemma3}]
Define the sets $\{\mathcal{G}_k\}_{k\geq 1}$ as follows:
\begin{equation*}
\mathcal{G}_k = \{ i \, | \, \mathrm{sgn}(\bm{a}_i^T\bm{x}^k)= \mathrm{sgn}(\bm{a}_i^T\bm{x}^{\natural}), 1\leq i\leq m\}. 
\end{equation*}
With $\bm{z}^{k}:= \bm{z}\odot sgn(\bm{A}\bm{x}^k)$, where $\bm z$ is defined in \eqref{you4}, we deduce that
\begin{equation}\label{you16}
\begin{split}
\left( \frac{1}{\sqrt{m}} \left\|\bm{z}^k - \bm{A}\bm{x}^{\natural} \right\| \right)^2 
&= \frac{1}{m}\sum_{i=1}^m \left(\mathrm{sgn} \left(\bm{a}_i^T\bm{x}^k \right) - \mathrm{sgn}\left(\bm{a}_i^T\bm{x}^{\natural} \right) \right)^2 \left|\bm{a}_i^T\bm{x}^{\natural} \right|^2 \\
&\leq \frac{4}{m} \sum_{i\in \mathcal{G}_k^c} \left|\bm{a}_i^T\bm{x}^{\natural} \right|^2\cdot \bm{1}_{ \left(\bm{a}_i^T\bm{x}^k \right) \left(\bm{a}_i^T\bm{x}^{\natural} \right)\leq 0} \\
&\leq \underbrace{\frac{4}{(1-\gamma)^2} \left(\epsilon_0 + \gamma\sqrt{\frac{21}{20}} \right)^2}_{C_{\gamma}} \left\|\bm{x}^k - \bm{x}^{\natural} \right\|^2, 
\end{split}
\end{equation}
where the first inequality follows from $|\mathrm{sgn}(\bm{a}_i^T\bm{x}^k) - \mathrm{sgn}(\bm{a}_i^T\bm{x}^{\natural})|\leq 2$ and $\mathrm{sgn}(\bm{a}_i^T\bm{x}^k) - \mathrm{sgn}(\bm{a}_i^T\bm{x}^{\natural}) = 0$ on $\mathcal{G}_k$, and the second inequality follows from Lemma~\ref{youlemma2}. Together with \eqref{you14a} in Lemma~\ref{prop2}, \eqref{you16} leads to 
\begin{equation*}
\frac{1}{m}\left\|\bm{A}_{\mathcal{T}_{k+1}}^T(\bm{z}^k - \bm{A}\bm{x}^{\natural})\right\| \leq \sqrt{C_{\gamma}(1+\delta_s)} \left\|\bm{x}^k - \bm{x}^{\natural} \right\|,
\end{equation*}
completing the proof.
\end{proof}

\subsubsection{Proof of Lemma \ref{youlemma4}}
\begin{proof}[Proof of Lemma \ref{youlemma4}]
Define $\mathcal{S}^{\natural} := \mathrm{supp} \left(\bm{x}^{\natural} \right)$, $\mathcal{T}_{k+1}:= \mathcal{S}_{k+1} \cup \mathcal{S}^{\natural}$, and $\bm{v}^{k}: = \bm{x}^k - \eta \nabla f_A \left(\bm{x}^k \right)$. Since $\bm{u}^{k}$ is the best $s$-term approximation of $\bm{v}^{k}$, we have 
\begin{equation*}
\left\|\bm{u}^{k} - \bm{v}^{k} \right\| \leq \left\|\bm{x}^{\natural} - \bm{v}^{k} \right\|,
\end{equation*}
which implies
\begin{equation*}
\left\|\bm{u}_{\mathcal{T}_{k+1}}^{k} - \bm{v}_{\mathcal{T}_{k+1}}^{k} \right\| \leq \left\|\bm{x}_{\mathcal{T}_{k+1}}^{\natural} - \bm{v}_{\mathcal{T}_{k+1}}^{k} \right\|,
\end{equation*}
because of the relation $\mathrm{supp} \left(\bm{u}^{k} \right)\subseteq \mathcal{T}_{k+1}$ and $\mathrm{supp} \left(\bm{x}^{\natural} \right)\subseteq \mathcal{T}_{k+1}$. Then, it follows from the triangle inequality and the inequality above that
\begin{equation}\label{you17}
\begin{split}
\left\|\bm{u}^{k} - \bm{x}^{\natural} \right\| = \left\|\bm{u}_{\mathcal{T}_{k+1}}^{k} - \bm{x}_{\mathcal{T}_{k+1}}^{\natural} \right\| 
& = \left\|\bm{u}_{\mathcal{T}_{k+1}}^{k} - \bm{v}_{\mathcal{T}_{k+1}}^{k} + \bm{v}_{\mathcal{T}_{k+1}}^{k}  -\bm{x}^{\natural}_{\mathcal{T}_{k+1}}\right\|  \\
& \leq \left\| \bm{u}_{\mathcal{T}_{k+1}}^{k} - \bm{v}_{\mathcal{T}_{k+1}}^{k} \right\| + \left\|\bm{v}_{\mathcal{T}_{k+1}}^{k}  -\bm{x}^{\natural}_{\mathcal{T}_{k+1}}   \right\|  \\
& \leq 2 \left\|\bm{v}_{\mathcal{T}_{k+1}}^{k}  -\bm{x}^{\natural}_{\mathcal{T}_{k+1}} \right\|.
\end{split}
\end{equation}
Define $\bm{z}^{k} := \bm{z}\odot \mathrm{sgn} \left(\bm{A}\bm{x}^k \right)$ with $\bm z$ as in \eqref{you4}. Using the definition of $\bm{v}^{k}$, a direct calculation yields
\begin{equation*}
\begin{split}
\left\|\bm{v}_{\mathcal{T}_{k+1}}^{k}  -\bm{x}^{\natural}_{\mathcal{T}_{k+1}}  \right\|
& = \left\| \bm{x}^k_{\mathcal{T}_{k+1}} - \bm{x}^{\natural}_{\mathcal{T}_{k+1}}- \frac{\eta}{m}\bm{A}_{\mathcal{T}_{k+1}}^T\bm{A} \left(\bm{x}^k - \bm{x}^{\natural} \right) + \frac{\eta}{m} \bm{A}_{\mathcal{T}_{k+1}}^T \left(\bm{z}^{k} - \bm{A}\bm{x}^{\natural} \right)   \right\| \\
& \leq \underbrace{\left\|\frac{\eta}{m}\bm{A}_{\mathcal{T}_{k+1}}^T \left(\bm{z}^{k} - \bm{A}\bm{x}^{\natural} \right)  \right\|}_{I_1} + \underbrace{\left\| \left(\bm{I} - \frac{\eta}{m}\bm{A}_{\mathcal{T}_{k+1}}^T\bm{A}_{\mathcal{T}_{k+1}} \right) \left(\bm{x}_{\mathcal{T}_{k+1}}^k - \bm{x}_{\mathcal{T}_{k+1}}^{\natural} \right) \right\|}_{I_2} \\
& \quad\quad\quad + \underbrace{\left\|\frac{\eta}{m}\bm{A}_{\mathcal{T}_{k+1}}^T\bm{A}_{\mathcal{T}_{k}\backslash \mathcal{T}_{k+1}} \left[\bm{x}^k - \bm{x}^{\natural} \right]_{\mathcal{T}_{k}\backslash \mathcal{T}_{k+1}}\right\|}_{I_3}.
\end{split}
\end{equation*}
We will now proceed to estimate $I_1$, $I_2$, and $I_3$ sequentially.

For $I_1$: An application of Lemma~\ref{youlemma3} yields that
\begin{equation}
\left\| \frac{\eta}{m}\bm{A}_{\mathcal{T}_{k+1}}^T \left(\bm{z}^{k} - \bm{A}\bm{x}^{\natural} \right)\right\| \leq \eta\sqrt{C_{\gamma} \left(1+\delta_{2s} \right)} \left\|\bm{x}^k - \bm{x}^{\natural} \right\|.
\label{you19}
\end{equation}

For $I_2$: Let $\eta \in \big(0, \frac{1}{1+\delta_{2s}} \big)$. It follows from \eqref{you14b} in Lemma~\ref{prop2} as well as Weyl's inequality that 
\begin{equation*}
\left(1-\eta(1+\delta_{2s})\right)\|\bm{u}\| \leq \left\| \left(\bm{I} - \frac{\eta}{m}\bm{A}_{\mathcal{T}_{k+1}}^T\bm{A}_{\mathcal{T}_{k+1}}\right)\bm{u}\right\| \leq \left(1-\eta(1-\delta_{2s})\right)\|\bm{u}\|,
\end{equation*}
for any $\bm{u}\in \mathbb{R}^{ \left|\mathcal{T}_{k+1} \right|}$, which deducts
\begin{equation*}
\left\| \left(\bm{I} - \frac{\eta}{m}\bm{A}_{\mathcal{T}_{k+1}}^T\bm{A}_{\mathcal{T}_{k+1}} \right)  \left(\bm{x}_{\mathcal{T}_{k+1}}^k - \bm{x}_{\mathcal{T}_{k+1}}^{\natural} \right) \right\| \leq \left(1-\eta(1-\delta_{2s})\right)\left\| \bm{x}_{\mathcal{T}_{k+1}}^k - \bm{x}_{\mathcal{T}_{k+1}}^{\natural} \right\|.
\end{equation*}

For $I_3$: Eq.\eqref{you14c} in Lemma~\ref{prop2} implies that
\begin{equation*}
\left\|\frac{\eta}{m}\bm{A}_{\mathcal{T}_{k+1}}^T\bm{A}_{\mathcal{T}_{k}\backslash \mathcal{T}_{k+1}} \left[\bm{x}^k - \bm{x}^{\natural} \right]_{\mathcal{T}_{k}\backslash \mathcal{T}_{k+1}}\right\| \leq \eta\delta_{3s} \left\| \left[\bm{x}^k - \bm{x}^{\natural} \right]_{\mathcal{T}_{k}\backslash \mathcal{T}_{k+1}} \right\|.
\end{equation*}

Combining all terms together, we obtain
\begin{equation*}
\begin{split}
\left\|\bm{v}_{\mathcal{T}_{k+1}}^{k+1}  -\bm{x}^{\natural}_{\mathcal{T}_{k+1}} \right\| &\leq I_1 + I_2 + I_3 \leq \sqrt{2(I_1^2 + I_2^2)} + I_3 \\
& \leq \sqrt{2} \max \left\{\eta\delta_{3s}, 1-\eta \left(1-\delta_{2s} \right) \right\} \left\|\bm{x}^k - \bm{x}^{\natural} \right\| + \eta\sqrt{C_{\gamma} \left(1+\delta_{2s} \right)} \left\|\bm{x}^k - \bm{x}^{\natural} \right\| \\
& = \Big(\sqrt{2} \max \left\{\eta\delta_{3s}, 1-\eta \left(1-\delta_{2s} \right) \right\} + \eta\sqrt{C_{\gamma} \left(1+\delta_{2s} \right)} \Big) \left\|\bm{x}^k - \bm{x}^{\natural} \right\|.
\end{split}
\end{equation*}
We complete the proof by using \eqref{you17}.
\end{proof}

\subsection{Proof of Theorems }\label{proof-theorem}
We present the proofs of Theorems~\ref{thmnhtp} and~\ref{thm2} in this subsection.

\subsubsection{Proof of Theorem \ref{thmnhtp} }\label{proof1}
\begin{proof}[Proof of Theorem \ref{thmnhtp}]
In this proof, we consider the case where $\|\bm{x}^0 - \bm{x}^{\natural}\| \leq \|\bm{x}^0 + \bm{x}^{\natural}\|$. Consequently, the distance between the initial estimate and the true signal is given by $\mathrm{dist}(\bm{x}^0, \bm{x}^{\natural}) = \|\bm{x}^0 - \bm{x}^{\natural}\|$. Note that the case where $\|\bm{x}^0 + \bm{x}^{\natural}\| \leq \|\bm{x}^0 - \bm{x}^{\natural}\|$ can be addressed in a similar manner. For the purpose of this proof, we assume, without loss of generality, that the true signal has a unit norm, \textit{i.e.}, $\|\bm{x}^{\natural}\|=1$.

Let $\bm{x}^k$ represent the $k$-th iterate generated by Algorithm \ref{MNHTP}. Given an $s$-sparse estimate $\bm{x}^k$ with support $\mathcal{S}_k$, which is close to the target signal, \textit{i.e.}, $\mathrm{dist}(\bm{x}^k, \bm{x}^{\natural})\leq \gamma\|\bm{x}^{\natural}\|$. For any $0\leq t \leq 1$, denote $\bm{x}(t):= \bm{x}^{\natural} + t(\bm{x}^k-\bm{x}^{\natural})$. It is evident that $\mathrm{supp}(\bm{x}(t)) \subseteq \mathrm{supp}(\bm{x}^{\natural})\cup \mathrm{supp}(\bm{x}^k)$, and the size of the support of $\bm{x}(t)$ is at most $2s$, \textit{i.e.}, $|\mathrm{supp}(\bm{x}(t))| \leq 2s$. Furthermore, we obtain
\begin{equation*}
\left\|\bm{x}^k - \bm{x}(t)\right\| \leq (1-t) \left\|\bm{x}^k-\bm{x}^{\natural} \right\| \leq \left\|\bm{x}^k-\bm{x}^{\natural} \right\|,
\end{equation*}
and
\begin{equation*}
\begin{split}
\left\|\bm{x}^k + \bm{x}(t)\right\| 
&= \left\| (1+t)\bm{x}^k + (1-t)\bm{x}^{\natural}\right\| \leq (1+t) \left\|\bm{x}^k \right\| + (1-t) \left\|\bm{x}^{\natural} \right\|  \\
& \leq (1+t)(1+\gamma) \left\|\bm{x}^{\natural} \right\| + (1-t) \left\|\bm{x}^{\natural} \right\| \leq 2(\gamma+1) \left\|\bm{x}^{\natural} \right\|,      
\end{split}
\end{equation*}
where the last inequality holds because $0\leq t\leq 1$.

Assume events \eqref{normA} and \eqref{eqtwflemmaA61} hold, then by \eqref{eqlip} in Lemma \ref{lip} with $\bm{x}=\bm{x}^k, \bm{z}=\bm{x}(t), \mathcal{S}=\mathcal{S}_{k+1}, \mathcal{T}=\mathcal{S}_{k}\cup\mathcal{S}^{\natural}$ there holds 
\begin{equation}\label{final1}
\begin{split}
& \left\|\nabla_{\mathcal{S}_{k+1}, \mathcal{S}_{k}\cup\mathcal{S}^{\natural}}^2f_I \left(\bm{x}^k \right) - \nabla_{\mathcal{S}_{k+1}, \mathcal{S}_{k}\cup\mathcal{S}^{\natural}}^2f_I(\bm{x}(t))\right\|  \\
\leq & \frac{3}{m} \left( (3m)^{1/4} + (3s)^{1/2} + 2\sqrt{\log m} \right)^4 \left\|\bm{x}^k+\bm{x}(t) \right\| \left\|\bm{x}^k-\bm{x}(t) \right\|  \\
\leq & 10 \left\|\bm{x}^k+\bm{x}(t) \right\| \left\|\bm{x}^k-\bm{x}(t) \right\| \\
\leq & L_h \left\|\bm{x}^k-\bm{x}^{\natural} \right\|, 
\end{split}
\end{equation}
where $L_h: = 20(\gamma+1)\|\bm{x}^{\natural}\|$. Also, Eq. \eqref{eq101} in Lemma \ref{lemma8} with $\bm{x}=\bm{x}^k, \mathcal{S}=\mathcal{S}_{k+1}$ indicates that
\begin{equation}\label{final2}
\left\|\left(\nabla_{\mathcal{S}_{k+1}, \mathcal{S}_{k+1}}^2 f_I \left(\bm{x}^k \right)\right)^{-1}\right\| 
= \left(\lambda_{\min}\left(\nabla_{\mathcal{S}_{k+1}, \mathcal{S}_{k+1}}^2 f_I \left(\bm{x}^k \right)\right)\right)^{-1}
\leq \frac{1}{l_1}.
\end{equation}
Moreover, by the mean value theorem, one has
\begin{equation}\label{final3}
\nabla f_I \left(\bm{x}^k \right) - \nabla f_I \left(\bm{x}^{\natural} \right) = \int_0^1 \nabla^2 f_I \left(\bm{x}(t) \right) \left(\bm{x}^k - \bm{x}^{\natural} \right) dt.
\end{equation}
We also have the following chain of equalities
\begin{equation}\label{final4}
\begin{split}
\big\|\bm{x}^{k+1} - \bm{x}^{\natural} \big\| &= \left[ \big\|\bm{x}_{\mathcal{S}_{k+1}}^{k+1} - \bm{x}_{\mathcal{S}_{k+1}}^{\natural} \big\|^2 + \big\|\bm{x}_{\mathcal{S}_{k+1}^c}^{k+1} - \bm{x}_{\mathcal{S}_{k+1}^c}^{\natural} \big\|^2 \right]^{1/2} \\
& = \Big[\underbrace{\big\| \bm{x}_{\mathcal{S}_{k+1}^c}^{\natural} \big\|^2}_{I_1} +  \underbrace{\big\|\bm{x}_{\mathcal{S}_{k+1}}^k - \bm{x}_{\mathcal{S}_{k+1}}^{\natural} + \bm{p}_{\mathcal{S}_{k+1}}^k \big\|^2}_{I_2} \Big]^{1/2},
\end{split}
\end{equation}
where $\bm{p}_{\mathcal{S}_{k+1}}^k$ is the search direction that can be obtained by \eqref{eq37}.

We first bound the term $I_1$ in \eqref{final4}. Note that $\bm{x}_{\mathcal{S}_{k+1}^c}^{\natural}$ is a subvector of $\bm{x}^{\natural} - \bm{u}^{k}$. Based on this observation and applying Lemma~\ref{youlemma4}, we deduce that
\begin{equation}\label{final5}
I_1 = \big\|\bm{x}_{\mathcal{S}_{k+1}^c}^{\natural}\big\| \leq \big\|\bm{x}^{\natural} - \bm{u}^{k} \big\| \leq \zeta \big\|\bm{x}^k - \bm{x}^{\natural} \big\|,
\end{equation}
where $\zeta = 2\left(\sqrt{2} \max\{\eta\delta_{3s}, 1-\eta(1-\delta_{2s})\} + \eta\sqrt{C_{\gamma}(1+\delta_{2s})}\right)$ with $\eta < \frac{1}{1+\delta_{2s}}$.

We now proceed to bound the term $I_2$ in \eqref{final4}. By plugging the expression of $\bm{p}_{\mathcal{S}_{k+1}}^k$, we obtain that
\begin{equation}\label{descent_expression}
I_2 =  \left\|\big(\nabla_{\mathcal{S}_{k+1},\mathcal{S}_{k+1}}^2f_I(\bm{x}^k) \big)^{-1}\left(\nabla_{\mathcal{S}_{k+1}, \mathcal{S}_{k+1}^c}^2 f_I(\bm{x}^k)\bm{x}_{\mathcal{S}_{k+1}^c}^k - \nabla_{\mathcal{S}_{k+1}}f_I(\bm{x}^k)\right) + \bm{x}_{\mathcal{S}_{k+1}}^k - \bm{x}_{\mathcal{S}_{k+1}}^{\natural} \right\|.
\end{equation}
We further can deduce that:
\begin{equation*}
\begin{split}
I_2 &\leq \frac{1}{l_1} \left\|\nabla_{\mathcal{S}_{k+1}, \mathcal{S}_{k+1}^c}^2 f_I(\bm{x}^k)\bm{x}_{\mathcal{S}_{k+1}^c}^k - \nabla_{\mathcal{S}_{k+1}}f_I(\bm{x}^k) + \nabla_{\mathcal{S}_{k+1},\mathcal{S}_{k+1}}^2f_I(\bm{x}^k) \big(\bm{x}_{\mathcal{S}_{k+1}}^k   - \bm{x}_{\mathcal{S}_{k+1}}^{\natural} \big) \right\| \\
& \leq  \frac{1}{l_1}\left\| \int_0^1 \left[\nabla_{\mathcal{S}_{k+1},:}^2f_I(\bm{x}^k) - \nabla_{\mathcal{S}_{k+1},:}^2f_I(\bm{x}(t))\right](\bm{x}^k-\bm{x}^{\natural})dt \right\|
+ \frac{l_3}{l_1}\left\|\bm{x}_{\mathcal{S}_{k+1}^c}^{\natural}  \right\| \\
& \leq \frac{1}{l_1} \int_0^1 \left\|\nabla_{\mathcal{S}_{k+1}, \mathcal{S}_{k}\cup\mathcal{S}^{\natural}}^2f_I(\bm{x}^k) - \nabla_{\mathcal{S}_{k+1}, \mathcal{S}_{k}\cup\mathcal{S}^{\natural}}^2f_I(\bm{x}(t))\right\| \left\|\bm{x}^k-\bm{x}^{\natural}\right\|dt  + \frac{\zeta l_3}{l_1}\left\|\bm{x}^k - \bm{x}^{\natural}  \right\| \\
& \leq \frac{L_h}{l_1}\|\bm{x}^k-\bm{x}^{\natural}\|^2\int_0^1(1-t)dt + \frac{\zeta l_3}{l_1}\left\|\bm{x}^k - \bm{x}^{\natural}  \right\|  \\
& = \rho_1 \|\bm{x}^k-\bm{x}^{\natural}\|, 
\end{split}
\end{equation*}
where the first inequality follows from \eqref{final2}, the second inequality is based on Lemma~\ref{lemma8} and \eqref{final3} together with the fact that $\nabla f_I(\bm{x}^{\natural}) = \bm{0}$,
and the third inequality is derived from \eqref{final5}, and the last inequality is obtained from \eqref{final1}. The equality includes $\rho_1$, defined as follows:
\begin{equation*}
\begin{split}
\rho_1:=\frac{L_h}{2l_1}\|\bm{x}^k-\bm{x}^{\natural}\| + \frac{\zeta l_3}{l_1} 
&\leq \frac{20(1+\gamma)\gamma}{2 (2-2\delta-10\gamma(2+\gamma))} + \frac{\zeta (2+2\delta+10\gamma(2+\gamma))}{2-2\delta-10\gamma(2+\gamma)} \\
&= \frac{2\zeta(1+\delta) +10\gamma(1+\gamma+\zeta(2+\gamma))}{2-2\delta-10\gamma(2+\gamma)}.
\end{split}
\end{equation*}
By substituting the upper bounds of terms $I_1$ and $I_2$ into \eqref{final4}, we obtain:
\begin{equation}\label{linearconv}
\left\|\bm{x}^{k+1} - \bm{x}^{\natural} \right\| \leq \sqrt{\rho_1^2+\zeta^2} \left\|\bm{x}^{k} - \bm{x}^{\natural} \right\|.
\end{equation}
Let $\rho:=\sqrt{\rho_1^2+\zeta^2}$, then $\rho<1$ can be ensured by properly choosing parameters. For example, when $\delta_{3s}\leq 0.05$ and $\delta=0.001$, and set $\eta=0.95$, then $\rho \leq 0.6 <1$ provided that $\gamma \leq 0.01$. Therefore, $\|\bm{x}^{k+1}-\bm{x}^{\natural}\|\leq \rho \|\bm{x}^{k}-\bm{x}^{\natural}\|\leq \rho\gamma\|\bm{x}^{\natural}\|$ for some $\rho\in(0,1)$. 

Let $\mathcal{R}_{k}=\mathcal{S}_{k}\cup\mathcal{S}_{k+1}\cup\mathcal{S}^{\natural}$. Assume that the event \eqref{normA} with $\mathcal{S}=\mathcal{R}_k$ holds for $k_1$ iterations. As stated in Theorem IV.1 of \citep{jagatap2019sample}, the initial guess $\bm{x}^0$ is guaranteed to satisfy $\mathrm{dist}(\bm{x}^0, \bm{x}^{\natural})\leq \gamma \|\bm{x}^{\natural}\|$. By applying mathematical induction, we can show that for any integer $0 \leq k \leq k_1$, there exists a constant $\rho \in (0, 1)$ such that $\mathrm{dist}(\bm{x}^{k+1}, \bm{x}^{\natural})
\leq \rho \cdot \mathrm{dist}(\bm{x}^k, \bm{x}^{\natural})$.

Let $K$ be the minimum integer such that it holds
\begin{equation}\label{eq25}
\gamma \left\|\bm{x}^{\natural} \right\|\rho^{K} < x_{\min}^{\natural}.
\end{equation}
We then assert that $\mathcal{S}^{\natural} \subseteq \mathcal{S}_k$ for all $k\geq K$. This is because if it were not the case, there would exist an index $i \in \mathcal{S}^{\natural}\backslash \mathcal{S}_k \neq \emptyset$, such that $\|\bm{x}^k - \bm{x}^{\natural}\| \geq |x_i^{\natural}| \geq x_{\min}^{\natural}$. However, this contradicts our assumption $\|\bm{x}^k - \bm{x}^{\natural}\| \leq  \gamma\|\bm{x}^{\natural}\|\rho^k \leq  \gamma\|\bm{x}^{\natural}\|\rho^{K} <  x_{\min}^{\natural}$. Consequently, based on \eqref{eq25}, we derive:
\begin{equation*}
\begin{split}
K  = \left\lfloor \frac{\log (\gamma\|\bm{x}^{\natural}\|/x_{\min}^{\natural})}{\log (\rho^{-1})} \right\rfloor + 1 \leq C_a\log \left(\|\bm{x}^{\natural}\|/x_{\min}^{\natural}\right) + C_b.
\end{split}
\end{equation*}
Note that $\mathcal{S}_{k}=\mathcal{S}^{\natural}$ for all $k\geq K$ implies $\mathcal{R}_{k}=\mathcal{R}_{k+1}$ for all $k\geq K$. As a result, the probability of event \eqref{normA} occurring for all $k\geq 0$ can be bounded by $1- 4Km^{-2}$. To conclude, with probability at least $1-(4K+C_c)m^{-1} -C_de^{-C_em/\log m}$, there holds
\begin{equation*}
\mathrm{dist}(\bm{x}^{k+1}, \bm{x}^{\natural}) \leq \rho \cdot \mathrm{dist}(\bm{x}^k, \bm{x}^{\natural}).
\end{equation*}
with some $\rho\in(0,1)$ provided $m\geq C_fs^2\log (n/s)$. Moreover, for $k\geq K$, utilizing the result the result $\mathcal{S}_{k+1}=\mathcal{S}^{\natural}$ and consequently $\bm{x}_{\mathcal{S}_{k+1}^c}^{\natural}=\bm{0}$, we obtain:
\begin{equation*}
\left\|\bm{x}^{k+1} - \bm{x}^{\natural} \right\| 
= \left[ \left\|\bm{x}_{\mathcal{S}_{k+1}}^{k+1} - \bm{x}_{\mathcal{S}_{k+1}}^{\natural} \right\|^2 + \left\|\bm{x}_{\mathcal{S}_{k+1}^c}^{k+1} - \bm{x}_{\mathcal{S}_{k+1}^c}^{\natural} \right\|^2 \right]^{1/2} 
 = \left\|\bm{x}_{\mathcal{S}_{k+1}}^k - \bm{x}_{\mathcal{S}_{k+1}}^{\natural} + \bm{p}_{\mathcal{S}_{k+1}}^k \right\|.
\end{equation*}
We can further obtain
\begin{equation*}
\begin{split}
\big\|\bm{x}^{k+1} - \bm{x}^{\natural} \big\| 
&\leq \frac{1}{l_1}\left\|\nabla_{\mathcal{S}_{k+1},:}^2 f_I(\bm{x}^k)\bm{x}^k - \nabla_{\mathcal{S}_{k+1}}f_I(\bm{x}^k) - \nabla_{\mathcal{S}_{k+1},\mathcal{S}_{k+1}}^2f_I(\bm{x}^k)\bm{x}_{\mathcal{S}_{k+1}}^{\natural} \right\| \\
& \leq \frac{1}{l_1}\left\| \nabla_{\mathcal{S}_{k+1},:}^2 f_I(\bm{x}^k)(\bm{x}^k - \bm{x}^{\natural}) - \int_0^1 \nabla_{\mathcal{S}_{k+1},:}^2f_I(\bm{x}(t))(\bm{x}^k-\bm{x}^{\natural})dt \right\| \\
& \leq \frac{1}{l_1}\left\| \int_0^1 \left[\nabla_{\mathcal{S}_{k+1},:}^2f_I(\bm{x}^k) - \nabla_{\mathcal{S}_{k+1},:}^2f_I(\bm{x}(t))\right](\bm{x}^k-\bm{x}^{\natural})dt \right\| \\
& \leq \frac{1}{l_1} \int_0^1 \left\|\nabla_{\mathcal{S}_{k+1}, \mathcal{S}_{k}\cup\mathcal{S}^{\natural}}^2f_I(\bm{x}^k) - \nabla_{\mathcal{S}_{k+1}, \mathcal{S}_{k}\cup\mathcal{S}^{\natural}}^2f_I(\bm{x}(t))\right\| \left\|\bm{x}^k-\bm{x}^{\natural}\right\|dt   \\
& \leq \frac{L_h}{l_1}\|\bm{x}^k-\bm{x}^{\natural}\|^2\int_0^1(1-t)dt  \\
&= \frac{L_h}{2l_{1}}\|\bm{x}^k-\bm{x}^{\natural}\|^2,
\end{split}
\end{equation*}
where the first inequality follows from \eqref{final2} and \eqref{descent_expression}, and the last inequality follows from \eqref{final1}. Consequently, the sequence $\{\bm{x}^k\}$ converges quadratically.
\end{proof}

\subsubsection{Proof of Theorem \ref{thm2}}\label{proof2}
\begin{proof}[Proof of Theorem \ref{thm2}]
Owing to the independence between the sensing matrix and noise, along with the assumption $\mathbb{E} (\bm{\epsilon}) = \bm{0}$, Lemma~\ref{expe} remains valid in the noisy setting. Therefore, by applying Lemma~\ref{twflemmaA61}, if $m = \Omega(s\log n/s)$, then with overwhelming probability, it holds that
\begin{equation*}
\left\|\nabla_{\mathcal{S},\mathcal{S}}^2 f_I \left(\bm{x}^{\natural} \right) - \mathbb{E}\left[\nabla_{\mathcal{S},\mathcal{S}}^2 f_I \left(\bm{x}^{\natural} \right) \right] \right\| \leq (\delta+\epsilon) \left\|\bm{x}^{\natural} \right\|^2.
\end{equation*}
Under the assumptions in Lemma \ref{lemma8}, it follows that
\begin{equation}\label{l1l2_prime}
l_1'\|\bm{u}\| \leq \left\|\nabla_{\mathcal{S},\mathcal{S}}^2 f_I(\bm{x})\bm{u} \right\| \leq l_2'\|\bm{u}\|, \quad \forall \, \bm{u}\in\mathbb{R}^{|\mathcal{S}|},
\end{equation}
and
\begin{equation}\label{l3_prime}
\left\|\nabla_{\mathcal{T},\mathcal{S}}^2 f_I(\bm{x}) \right\| \leq l_3',
\end{equation}
where $l_1' = \big(2-2\delta-2\epsilon-10\gamma(2+\gamma) \big)\| \bm x^\natural \|$, $l_2' = \big(6+2\delta+2\epsilon+10\gamma(2+\gamma) \big) \| \bm x^\natural \|$, and $l_3' = \big(2+2\delta+2\epsilon+10\gamma(2+\gamma) \big) \| \bm x^\natural \|$. 

In the noisy case, $\{\bm{z}^k\}_{k\geq 0}$ is given by 
$$
\bm{z}^k  = (|\bm{A}\bm{x}^{\natural}|^2 + \bm{\epsilon})^{\frac{1}{2}} \odot \mathrm{sgn}(\bm{A}\bm{x}^k).
$$ 

Then using the same argument as the proof of the inequality in Lemma \ref{youlemma3}, we have
\begin{equation}\label{you27}
\begin{split}
\left\| \frac{\eta}{m}\bm{A}_{\mathcal{T}_{k+1}}^T (\bm{z}^k - \bm{A}\bm{x}^{\natural})  \right\| 
&\leq \frac{\eta}{m}\left\|\bm{A}_{\mathcal{T}_{k+1}}^T\left(|\bm{A}\bm{x}^{\natural}|  \odot \mathrm{sgn}(\bm{A}\bm{x}^k) - \bm{A}\bm{x}^{\natural}  \right)  \right\| \\
& \qquad \qquad  + \frac{C'\eta}{m}\left\|\bm{A}_{\mathcal{T}_{k+1}}^T\left(\bm{\epsilon} \odot \mathrm{sgn}(\bm{A}\bm{x}^k)  \right)  \right\|  \\
&\leq \eta\sqrt{C_{\gamma}(1+\delta_{2s})}\|\bm{x}^k - \bm{x}^{\natural}\| + \frac{C'\eta}{\sqrt{m}}\sqrt{1+\delta_{2s}}\|\bm{\epsilon}\|,   
\end{split}
\end{equation}
where the last inequality follows from Lemma \ref{youlemma3} and \eqref{you14a} in Lemma~\ref{prop2}. Then, we modify Lemma~\ref{youlemma4} to estimate $\|\bm{u}^k - \bm{x}^{\natural}\|$ in the noisy case. All arguments in Lemma \ref{youlemma4} go except that the estimation of $I_1$ in \eqref{you19} should be replaced by \eqref{you27}. Thus we obtain
\begin{equation}\label{you28}
\left\|\bm{u}^{k} - \bm{x}^{\natural} \right\| \leq \zeta \left\|\bm{x}^k - \bm{x}^{\natural} \right\| + \frac{C'\eta}{\sqrt{m}}\sqrt{1+\delta_{2s}}\|\bm{\epsilon}\|,
\end{equation}
where $\bm{u}^{k}, \eta$ are the same as those in Lemma \ref{youlemma4}. 

We now proceed to prove the convergence property for the noisy case, employing a similar argument as in the proof of Theorem \ref{thmnhtp}. Notably, equality \eqref{final4} remains valid in the presence of noise. As for the first term in \eqref{final4}, since $\bm{x}_{\mathcal{S}_{k+1}^c}^{\natural}$ is a subvector of $\bm{x}^{\natural} - \bm{u}^{k}$, it follows from \eqref{you28} that
\begin{equation*}
\left\|\bm{x}_{\mathcal{S}_{k+1}^c}^{\natural}\right\| \leq \left\|\bm{x}^{\natural} - \bm{u}^{k} \right\| \leq \zeta\|\bm{x}^k - \bm{x}^{\natural}\| + \frac{C'\eta}{\sqrt{m}}\sqrt{1+\delta_{2s}}\|\bm{\epsilon}\|,
\end{equation*}
where $\zeta = 2\left(\sqrt{2} \max\{\eta\delta_{3s}, 1-\eta(1-\delta_{2s})\} + \eta\sqrt{C_{\gamma}(1+\delta_{2s})}\right)$ with $\eta < 1/(1+\delta_{2s})$. 

Furthermore,  the second term in \eqref{final4} can now be estimated as
\begin{align}
&\big\|\bm{x}_{\mathcal{S}_{k+1}}^k - \bm{x}_{\mathcal{S}_{k+1}}^{\natural} + \bm{p}_{\mathcal{S}_{k+1}}^k \big\| \notag \\
\leq & \frac{1}{l_1'} \bigg\|\nabla_{\mathcal{S}_{k+1},:}^2 f_I(\bm{x}^k)\bm{x}^k
 - \nabla_{\mathcal{S}_{k+1},:}^2f_I(\bm{x}^k)\bm{x}^{\natural} 
 + \nabla_{\mathcal{S}_{k+1},\mathcal{S}_{k+1}^c}^2f_I(\bm{x}^k)\bm{x}_{\mathcal{S}_{k+1}^c}^{\natural} \notag \\
&\qquad\qquad\qquad\qquad  - \nabla_{\mathcal{S}_{k+1}}f_I(\bm{x}^k) 
 + \nabla_{\mathcal{S}_{k+1}}f_I(\bm{x}^{\natural}) + \frac{1}{m}\bm{A}_{\mathcal{S}_{k+1}}^T(\bm{\epsilon} \odot \left(\bm{A}\bm{x}^{\natural})\right)  \bigg\| \notag \\
\leq & \frac{1}{l_1'}\left\| \nabla_{\mathcal{S}_{k+1},:}^2 f_I(\bm{x}^k)(\bm{x}^k - \bm{x}^{\natural}) - \int_0^1 \nabla_{\mathcal{S}_{k+1},:}^2f_I(\bm{x}(t))(\bm{x}^k-\bm{x}^{\natural})dt \right\| \notag \\
& \qquad\qquad\qquad\qquad  + \frac{1}{l_1'}\left\|\nabla_{\mathcal{S}_{k+1},\mathcal{S}_{k+1}^c}^2f_I(\bm{x}^k)\bm{x}_{\mathcal{S}_{k+1}^c}^{\natural}  \right\| 
+ \frac{1}{l_1'm}\left\|\bm{A}_{\mathcal{S}_{k+1}}^T(\bm{\epsilon}\odot \left(\bm{A}\bm{x}^{\natural})\right) \right\| \notag\\
\leq & \frac{1}{l_1'}\left\| \int_0^1 \left[\nabla_{\mathcal{S}_{k+1},:}^2f_I(\bm{x}^k) - \nabla_{\mathcal{S}_{k+1},:}^2f_I(\bm{x}(t))\right](\bm{x}^k-\bm{x}^{\natural})dt \right\|  + \frac{l_3'}{l_1'}\left\|\bm{x}_{\mathcal{S}_{k+1}^c}^{\natural}  \right\| 
+ \frac{\sqrt{1+\delta_s} \|\bm{A}\bm{x}^{\natural}\|_{\infty}}{l_1' \sqrt{m}}\|\bm{\epsilon}\| \notag\\
\leq & \frac{1}{l_1'} \int_0^1 \left\|\nabla_{\mathcal{S}_{k+1}, \mathcal{S}_{k}\cup\mathcal{S}^{\natural}}^2f_I(\bm{x}^k) - \nabla_{\mathcal{S}_{k+1}, \mathcal{S}_{k}\cup\mathcal{S}^{\natural}}^2f_I(\bm{x}(t))\right\| \left\|\bm{x}^k-\bm{x}^{\natural}\right\|dt \notag\\
& \qquad\qquad\qquad\qquad + \frac{\zeta l_3'}{l_1'}\left\|\bm{x}^k - \bm{x}^{\natural}  \right\| + \frac{C' l_3' \eta \sqrt{1+\delta_{2s}}+\sqrt{1+\delta_s} \|\bm{A}\bm{x}^{\natural}\|_{\infty}}{l_1' \sqrt{m}}\|\bm{\epsilon}\|  \notag\\
\leq & \frac{L_h}{l_1'}\|\bm{x}^k-\bm{x}^{\natural}\|^2\int_0^1(1-t)dt + \frac{\zeta l_3'}{l_1'}\left\|\bm{x}^k - \bm{x}^{\natural}  \right\| + \frac{C' l_3' \eta  \sqrt{1+\delta_{2s}}+\sqrt{1+\delta_s} \|\bm{A}\bm{x}^{\natural}\|_{\infty}}{l_1' \sqrt{m}}\|\bm{\epsilon}\| \notag  \\
= & \rho_2 \|\bm{x}^k-\bm{x}^{\natural}\| + \frac{C' l_3' \eta  \sqrt{1+\delta_{2s}}+\sqrt{1+\delta_s} \|\bm{A}\bm{x}^{\natural}\|_{\infty}}{l_1' \sqrt{m}} \|\bm{\epsilon}\|, \label{eq62}
\end{align}
where the first inequality follows from \eqref{l1l2_prime}, together with the equality $\nabla f_I(\bm{x}^{\natural})+\bm{A}^T \left(\bm{\epsilon}\odot (\bm{A}\bm{x}^{\natural})\right)=\bm{0}$, and the third inequality is based on \eqref{l3_prime} and the inequality $\|\bm{a}\odot\bm{b}\|\leq \|\bm{a}\|_{\infty}\|\bm{b}\|$ for any two vectors $\bm{a}, \bm{b}$. The last equality includes $\rho_2$, defined as follows:
\begin{equation*}
\begin{split}
\rho_2 :=\frac{L_h}{2m_{s}'} \left\|\bm{x}^k-\bm{x}^{\natural} \right\| + \frac{\zeta h_s'}{m_s'} 
&\leq \frac{20(1+\gamma)\gamma}{2 (2-2\delta-2\epsilon-10\gamma(2+\gamma))} + \frac{\zeta (2+2\delta+2\epsilon+10\gamma(2+\gamma))}{2-2\delta-2\epsilon-10\gamma(2+\gamma)} \\
&= \frac{2\zeta(1+\delta+\epsilon) +10\gamma(1+\gamma+\zeta(2+\gamma))}{2-2\delta-2\epsilon-10\gamma(2+\gamma)}.
\end{split}
\end{equation*}
Following from $(a^2+b^2)^{1/2} \leq a+b$ for $ab\geq 0$ and putting the two terms together yields
\begin{equation*}
\left\|\bm{x}^{k+1} - \bm{x}^{\natural} \right\| \leq \rho' \left\|\bm{x}^{k} - \bm{x}^{\natural} \right\| +  \upsilon \|\bm{\epsilon}\|,
\end{equation*}
where $\rho' = \rho_2 + \zeta$ and $\upsilon = \frac{ C' (l _1' + l_3') \eta  \sqrt{1+\delta_{2s}}+\sqrt{1+\delta_s} \|\bm{A}\bm{x}^{\natural}\|_{\infty}}{l_1' \sqrt{m}}$. Noticing that
\begin{equation*}
\frac{1}{\sqrt{m}} \left\|\bm{A}\bm{x}^{\natural} \right\|_{\infty} = \frac{1}{\sqrt{m}}\max_{i\in[m]} \left|\bm{a}_i^T\bm{x}^{\natural} \right| \leq \frac{1}{\sqrt{m}}\max_{i\in[m],j\in\mathcal{S}^{\natural}} \left|a_{ij} \right|\cdot \left\|\bm{x}^{\natural}\right\|
\leq 3\sqrt{\frac{\log (ms)}{m}}\left\|\bm{x}^{\natural}\right\|
\end{equation*}
with probability at least $1- (ms)^{-2}$. Consequently, $\frac{1}{\sqrt{m}}\|\bm{A}\bm{x}^{\natural}\|_{\infty}$ can be quite small. As a result, properly setting parameters can lead to $\upsilon \in (0,1)$ and $\rho' \in (0,1)$.
\end{proof}

\end{document}